\newcites{Online}{References for Online Appendix}
\definecolor{red}{RGB}{213,94,0}
\def\sym#1{\ifmmode^{#1}\else\(^{#1}\)\fi}
\tikzset{   
        every picture/.style={remember picture,baseline},
        every node/.style={anchor=base,align=center,outer sep=1.5pt},
        every path/.style={thick},
        }
\tikzstyle{every picture}+=[remember picture] 
\tikzstyle{mybox} =[draw=black, very thick, rectangle, inner sep=10pt, inner ysep=20pt]
\tikzstyle{fancytitle} =[draw=black,fill=red, text=white]
\newcommand{\Rmax}{R_{\max}}
\newcommand{\Rminimax}{{R^*}}
\newcommand{\RBayes}{R_{\operatorname{Bayes}}}
\newcommand{\Aopt}{{A^*}}
\newcommand{\estimator}{\hat\theta}
\newcommand{\bestimator}{\delta}
\newcommand{\deltaadapt}{{\hat\theta^{\ast}}}
\newcommand{\tildedeltaadapt}{\delta^*}
\newcommand{\deltaminimax}{\hat\theta}
\newcommand{\deltaBNM}{\delta}
\newcommand{\rBNM}{r^{\operatorname{BNM}}}
\DeclareMathOperator{\Pen}{Pen}
\newtheorem{lemma}{Lemma}[section]
\newtheorem{theorem}{Theorem}[section]
\theoremstyle{definition}
\newtheorem*{main example}{Main example}
\newtheorem*{main example continued}{Main example (continued)}
\title{\vspace{-3em}
Adapting to Misspecification\thanks{We thank Isaiah Andrews, Manuel Arellano, Dmitry Arkhangelsky, St\'{e}phane Bonhomme, Tom Boot, Xu Cheng, Bryan Graham, Michal Koles\'{a}r, Roger Koenker, Lihua Lei, Jesse Shapiro, and Aleksey Tetenov for helpful discussions on this project. The paper also benefited from the comments of conference and seminar audiences. %
Timothy Armstrong gratefully acknowledges support from National Science Foundation Grant SES-2049765.  Liyang Sun gratefully acknowledges support from the Institute of Education Sciences, U.S. Department of Education, through
Grant R305D200010, and Economic and Social Research Council (new investigator grant UKRI607). Code implementing the adaptive estimators proposed in this paper is available online at \url{https://github.com/lsun20/MissAdapt}.}}
\author{Timothy B. Armstrong\footnote{USC. Email: timothy.armstrong@usc.edu}, Patrick Kline\footnote{UC Berkeley and NBER. Email: pkline@berkeley.edu}\, and Liyang Sun\footnote{UCL and CEMFI. Email: liyang.sun@ucl.ac.uk}}
\date{September 2025}
\begin{document}

\maketitle
\vspace{-3em}
\begin{abstract}

Empirical research typically involves a robustness-efficiency tradeoff. A researcher seeking to estimate a scalar parameter can invoke strong assumptions to motivate a restricted estimator that is precise but may be heavily biased, or they can relax some of these assumptions to motivate a more robust, but variable, unrestricted estimator. When a bound on the bias of the restricted estimator is available, it is optimal to shrink the unrestricted estimator towards the restricted estimator. For settings where a bound on the bias of the restricted estimator is unknown, we propose adaptive estimators that minimize the percentage increase in worst case risk relative to an oracle that knows the bound. We show that adaptive estimators solve a weighted convex minimax problem and provide lookup tables facilitating their rapid computation. Revisiting some well known empirical studies where questions of model specification arise, we examine the advantages of adapting to---rather than testing for---misspecification.\\
\textbf{Keywords:}  Adaptive estimation, Minimax procedures, Specification testing, Shrinkage, Robustness.\\
\textbf{JEL classification codes:} C13, C18.
\end{abstract}

\newpage

\section{Introduction}\label{introduction_sec}

Empirical research is typically characterized by a robustness-efficiency tradeoff. The researcher can either invoke strong assumptions to motivate an estimator that is precise, but sensitive to violations of model assumptions, or they can employ a less precise estimator that is robust to these violations. Familiar examples include the choice of whether to add a set of controls to a regression, whether to exploit over-identifying restrictions in estimation, and whether to allow for endogeneity or measurement error in an explanatory variable.

Decisions of this nature are often approached with a degree of pragmatism: imposing a false restriction may be worthwhile if doing so yields improvements in precision that are not outweighed by corresponding increases in bias. While precision is readily assessed with asymptotic standard errors, the measurement of bias is less standardized.  A popular informal approach is to conduct a series of ``robustness exercises,'' whereby estimates from models that add or subtract assumptions from some baseline are reported and examined for differences. While potentially informative about the presence of bias, it is often unclear how the results of such exercises should be used to refine baseline estimates of the parameter of interest. %

One answer, found often in econometrics textbooks, is
to use a specification test to select a model.
Doing so yields a \emph{pre-test} estimator that equals the estimator of
the restricted model when the specification test fails to reject, and is otherwise
equal to the estimator of the unrestricted model.
The pre-test estimator offers
a form of asymptotic insurance against bias: as the degree of misspecification grows
large relative to the noise in the data, the test rejects with near certainty.
Yet when biases are modest, as one might expect of models that serve as useful
approximations to the world, the cost of this insurance in terms of increased
variance can be exceedingly high.

In this paper we explore an alternative to specification testing:
\emph{adapting} to misspecification.\footnote{An interactive Shiny application implementing our proposed estimator is available online at \url{https://lsun20.github.io/MissAdapt/}.}
Adaptive estimation provides a systematic
approach to exploiting the assumptions of the restricted model as efficiently as possible while acknowledging the
possibility that the restriction in question is misspecified.
Consider an oracle who knows a bound on the extent to which the restricted model is
misspecified, allowing them to combine the estimates from
the restricted and unrestricted models in a way that minimizes maximum risk.
An adaptive estimator is one that comes as close as possible to achieving
this oracle benchmark without using prior knowledge of the magnitude of
misspecification.

We show that adaptive estimators can be computed by solving a
weighted minimax problem.  While the 
resulting 
\emph{optimally adaptive} estimator does not have a closed
form, an analytic soft-thresholding estimator can be tuned to yield comparable performance.
This \emph{adaptive soft-thresholding} estimator can be interpreted as a smoothed version of the pre-test estimator utilizing a critical value that depends on the correlation between the restricted and unrestricted estimators. The near-optimality of adaptive soft-thresholding 
contrasts with the
performance of pre-test estimators, which perform poorly under moderate amounts
of misspecification.

Both the optimally adaptive and adaptive soft-thresholding estimators are easily computed using information that is routinely reported in robustness checks. In the case where the restricted estimator is efficient under the restricted model, the estimators can be computed from published point estimates and standard errors alone.
The adaptive soft-thresholding estimator can also be obtained via a particular sort of lasso regression \citep{tibshirani1996regression} that may be of independent interest in other low-dimensional settings.

To illustrate the advantages of adapting to---rather than testing
for---misspecification, we revisit two empirical examples where questions of
model specification arise.  Our leading example, which we return to throughout the paper, is drawn from \cite{de_chaisemartin_two-way_2020}'s reanalysis of \cite{gentzkow_effect_2011}, in which a two-way fixed effects estimator that exhibits negative weights in many periods is compared to a more variable convex weighted estimator. %
A second example, taken from \cite{angrist_does_1991}, compares an ordinary least squares (OLS) estimate of the returns to schooling to an instrumental variables (IV) estimate. We argue that extra care is required in this example because the IV estimate is orders of magnitude less precise than OLS. 
Online \ref{sec:lalonde} provides an additional example, drawn from \cite{lalonde1986evaluating}, illustrating the problem of estimating the effects of job training using a mix of control groups whose credibility can be ranked ex-ante.
In all of the above examples, adapting between models is found to yield
a more attractive balance between efficiency and robustness %
than
selecting a single model via pre-testing, with the adaptive soft-thresholding estimator performing especially well.

Our analysis builds on early contributions by \citet{hodges_use_1952} and \citet{bickel_minimax_1983,bickel_parametric_1984} who consider families of robustness-efficiency tradeoffs defined over pairs of nested models. We extend this work by considering a continuum of models, indexed by different degrees of misspecification.
A large statistics literature considers the problem of adaptation, defined as the search for an estimator that performs nearly as well as an oracle with additional knowledge of the data generating process. We focus on the case where proximity to oracle performance is measured in terms of the ratio of actual to oracle risk, which mirrors the definition used in \citet{tsybakov_pointwise_1998} and leads to simple risk guarantees and statements about relative efficiency. To introduce the core ideas, we begin with a simple introductory example.

\section{An introductory example}\label{sec:intro_example}

In this section, we 
illustrate our proposal at a high level via an empirical example, postponing the details to later discussion. \cite{gentzkow_effect_2011} studied the effects of newspapers on voter turnout in US presidential elections using a two-way fixed effects (TWFE) model estimated in first differences by least squares. 
\cite{de_chaisemartin_two-way_2020} showed that in settings featuring staggered adoption, like the one studied by \cite{gentzkow_effect_2011}, TWFE estimators %
identify potentially non-convex combinations of average treatment effects over time and across adoption cohorts.

Suppose the target parameter $\theta$ is the average effect of changing newspaper access on voter turnout in counties exhibiting a change in the number of newspapers. Let $Y_R$ denote the TWFE estimator used by \cite{gentzkow_effect_2011}
and $Y_U$ the 
estimator of $\theta$ proposed by \cite{de_chaisemartin_two-way_2020}. In the presence of treatment effect heterogeneity, $Y_R$ likely identifies a different parameter, implying an unknown bias $b=E[Y_R]-\theta.$ In contrast, $Y_U$ is unbiased for $\theta$. However, when treatment effect heterogeneity is mild, $Y_R$ may exhibit negligible bias and substantially lower variance than $Y_U$, yielding a non-trivial robustness-efficiency tradeoff.

The value of $Y_R$ reported by \cite{gentzkow_effect_2011} implies that an additional newspaper raises voter turnout by $0.26$ percentage points, with a standard error of $\sigma_R=0.09$. The unrestricted estimator $Y_U$
evaluates to 0.43, with a
standard error of $\sigma_U=0.14$.  Suppose that $Y_U$ and $Y_R$ are
normally distributed with standard deviations given by these standard errors, an
approximation that can be formally justified using a local asymptotic
misspecification framework.  
The difference $Y_O=Y_R-Y_U$  gives a noisy estimate of the bias $b$. To further simplify the example, suppose that
$\operatorname{cov}(Y_R,Y_O)=0$. This condition, which seems to be very nearly satisfied in the data, implies that $Y_R$ is efficient under the
constraint $b=0$.
Consequently, the variance of $Y_O$ is  $\sigma_O^2=\sigma_U^2-\sigma_R^2.$ 
The  test statistic  that forms the basis for standard ``over-identification'' tests of specification is $T_O=Y_O/\sigma_O.$

To compare these estimators, consider their mean squared error (MSE), which will be our preferred measure of risk.  Since $Y_U$ is unbiased, its MSE is equal to its variance
$\sigma_U^2=(0.14)^2$.  In contrast, the MSE of the
restricted estimator depends on its bias $b$:
$E[(Y_R-\theta)^2]=b^2+\sigma_R^2=b^2+(0.09)^2$.  
Figure \ref{fig:build_oracle} plots the MSE of the unrestricted and restricted estimators as functions of the unknown bias $b$. To ease visual interpretation both risk functions have been divided by
$\operatorname{var}(Y_U)$, which normalizes the risk of $Y_U$ to $1$.

\begin{figure}[h!]
\caption{\label{fig:build_oracle}Risk of unrestricted, restricted, $B$-minimax, and oracle estimators}
\begin{centering}
\includegraphics[width=4.5in, trim={0 1.5cm 0 0.5cm}, clip]
{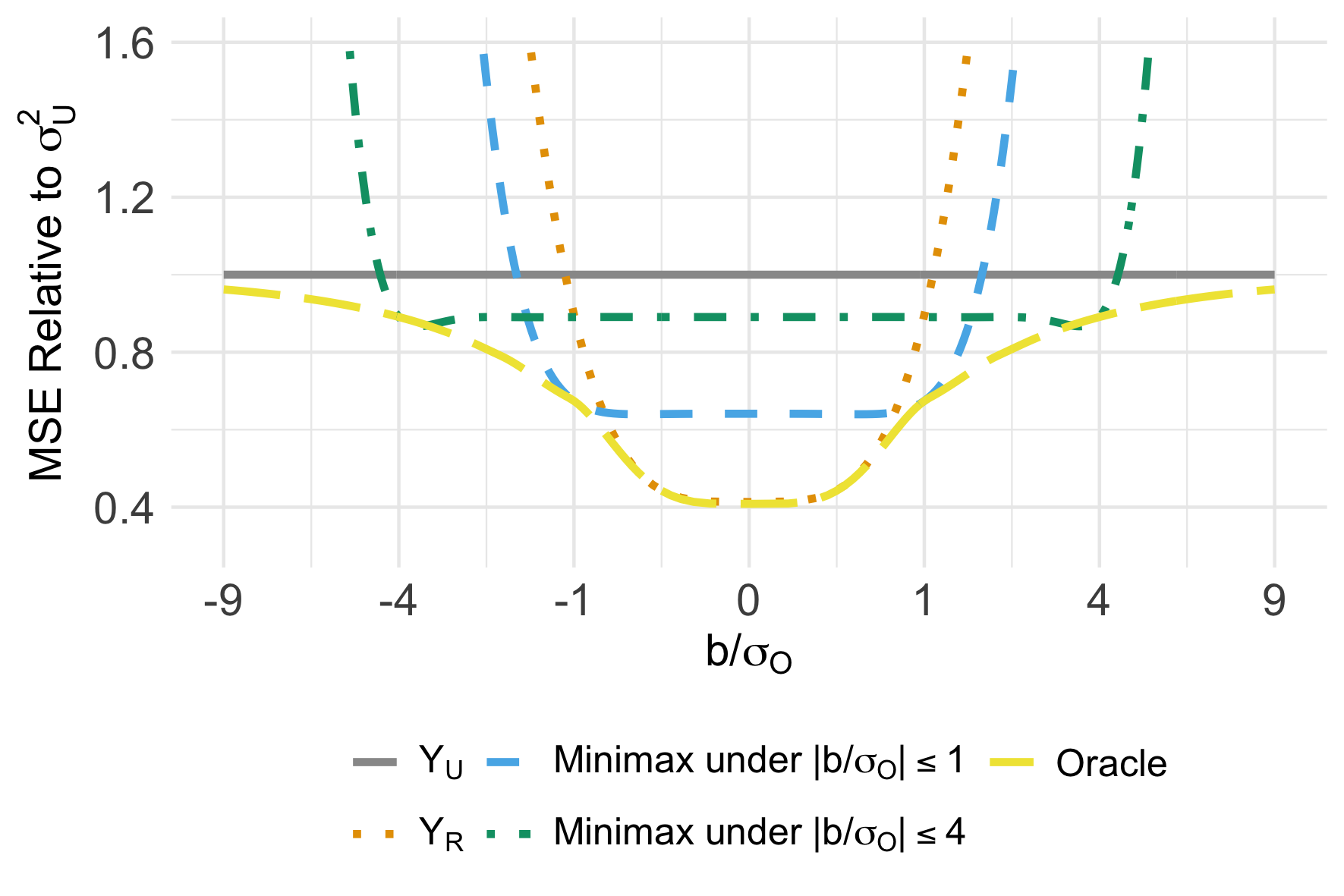}

\footnotesize{Notes: Depiction assumes $\sigma_R^2/\sigma_U^2=0.41$. Horizontal axis is spaced quadratically.}
\par\end{centering}
\end{figure}

When $b=0$, using $Y_R$ instead of $Y_U$ yields a
decrease in MSE from $(0.14)^2$ to $(0.09)^2$.  
The price paid for this improvement in MSE at $b=0$ is that the MSE can be
much larger than  $(0.14)^2$
when $b\ne 0$.
Tradeoffs of this nature are unavoidable because $Y_U$ is admissible: no other estimator has lower MSE for all $b$.
The goal of adaptive estimation is to resolve this tradeoff by balancing efficiency when $b$ is close to zero against robustness when $b$ is large.

Given a bound $B\ge 0$ on the bias magnitude $|b|$, one can compute the estimator that is minimax over the restricted parameter space $(\theta,b)\in \mathbb{R}\times [-B,B]$, a procedure
we refer to as the \emph{$B$-minimax estimator}.
The $B$-minimax estimator $\deltaminimax(Y_U,T_O;B)=Y_R - \sigma_O \deltaBNM(T_O;B)$ adjusts the restricted estimator by an estimate $\deltaBNM(T_O;B)$ of its bias constructed by smoothly shrinking $T_O$ towards zero, yielding output in the interval $[-B/\sigma_O,B/\sigma_O].$
Figure \ref{fig:build_oracle} plots the risk function of the
$B$-minimax estimator for $B\in\{\sigma_O,4\sigma_O\}$. As a benchmark, we also plot the risk function of an \emph{oracle estimator} computed using prior knowledge of the best possible bound $B=|b|$.

Note that if the posited bound $B$ is set lower than the true bias magnitude $|b|$, $B$-minimax estimation can yield
very large MSE.
An alternative to guessing a bound $B$ is to use the data to infer a likely value of $|b|$. Then one can estimate $\theta$ optimally subject to the estimated bias magnitude. The pre-test estimator described in the introduction uses $Y_U$  when $|Y_O|>1.96 \sigma_O$ and otherwise relies on $Y_R$. 
Unfortunately, the
risk function of the pre-test estimator, plotted in Figure \ref{fig:pre_test}, is quite large for moderate values of $b$, reflecting the cost of using the data ``twice'' in a non-smooth fashion. 

\begin{figure}[h!]
\caption{\label{fig:pre_test}Risk of optimally adaptive, soft-thresholding, and pre test estimators}
\begin{centering}
\includegraphics[width=4.5in, trim={0 1.5cm 0 0.5cm}, clip]{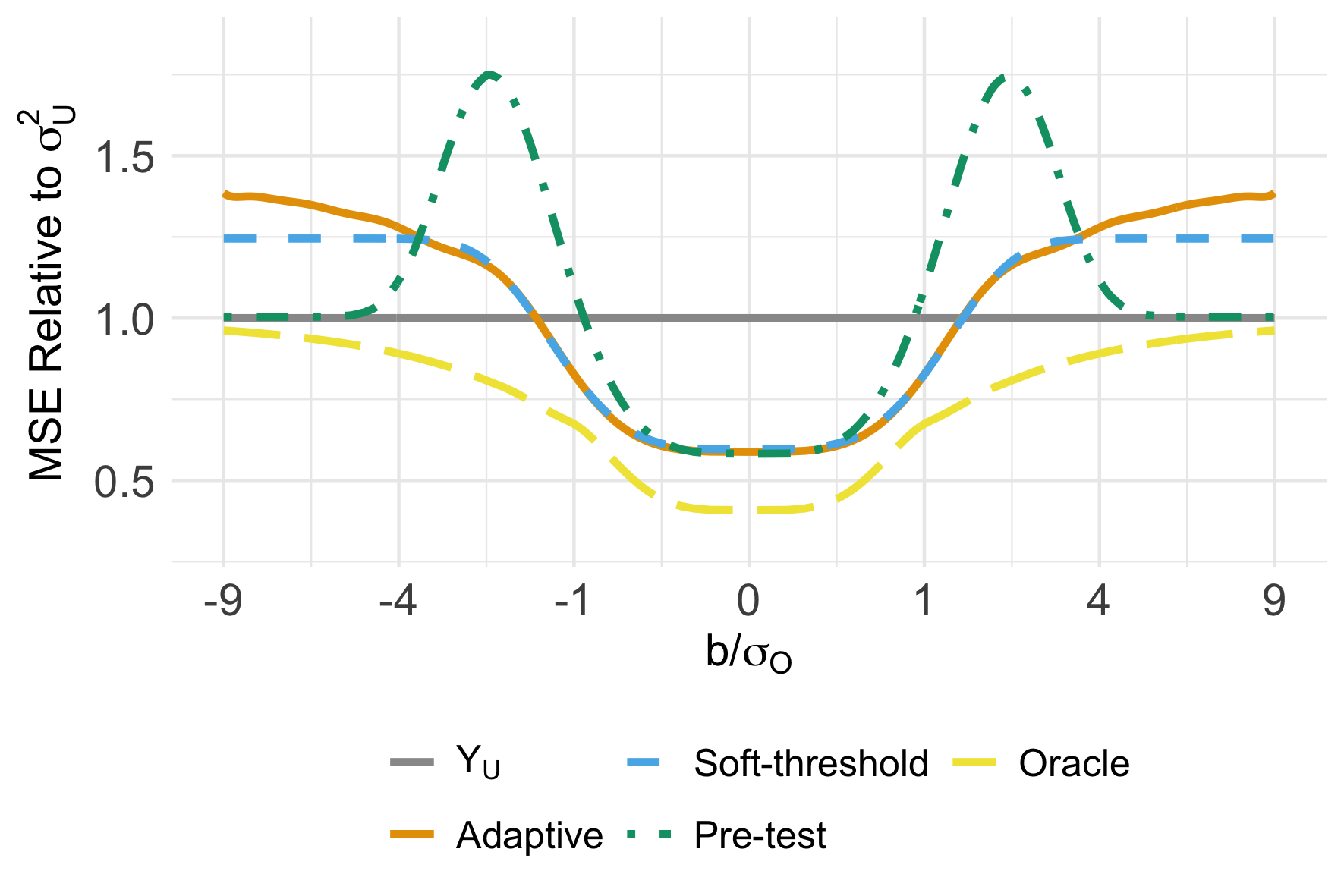}

\footnotesize{Notes: Depiction assumes $\sigma_R^2/\sigma_U^2=0.41$. Horizontal axis is spaced quadratically.}
\par\end{centering}

\end{figure}

Adaptive estimators, by contrast, use the data to directly mimic the oracle's risk function.
 The \emph{optimally
  adaptive estimator} is the estimator that comes closest to matching the oracle's risk function, where distance is measured in terms of the maximum ratio of actual to oracle risk across all bias levels, a metric that we term the \emph{adaptation regret}.  
Like the $B$-minimax estimator, the optimally adaptive estimator $\deltaadapt(Y_U,T_O)=Y_R - \sigma_O \tildedeltaadapt(T_O;1-\sigma_R^2/\sigma_U^2)$ uses $T_O$ to adjust the restricted estimator for bias; however, it depends on $\sigma_R^2/\sigma_U^2$, which captures the efficiency of $Y_U$ relative to $Y_R$, rather than on an ex-ante bound $B$. Though the function $\tildedeltaadapt(\cdot;1-\sigma_R^2/\sigma_U^2)$ lacks an analytic closed form, 
a simple soft-thresholding estimator can be tuned to approximate it closely.

Like the pre-test estimator, the resulting \emph{adaptive soft-thresholding estimator} is equal to
$Y_R$ if $|Y_O/\sigma_O|$ is less than some threshold value $\lambda$.  However,
rather than switching discontinuously to $Y_U$ when $|Y_O|>\lambda \sigma_O$, the
soft-thresholding estimator ``shrinks'' the unrestricted estimator towards the restricted estimator by $\lambda$ standard errors of the bias estimate.
The optimal threshold is a decreasing function of the ratio $\sigma_R^2/\sigma_U^2$. 
In this example, $\sigma_R^2/\sigma_U^2= 0.41$, implying $Y_U$ is only 41\% as efficient as $Y_R$ when $b=0$. The corresponding optimal threshold 
is $\lambda=0.64$, far below the traditional 1.96 value used for pre-testing.

The risk function of the optimally adaptive estimator and its soft-thresholding
approximation are shown in Figure \ref{fig:pre_test}.  The MSE of the optimally adaptive
estimator is never more than 44\% above the oracle MSE, which is the best that
can be achieved. The adaptive soft-thresholding estimator has an MSE that is never more than 46\% above the oracle. When $b=0$, these adaptive estimators achieve substantially %
lower MSE than $Y_U$. Conversely, when $|b|$ is large, they exhibit modestly 
higher MSE than $Y_U$.  The pre-test estimator also achieves near oracle MSE levels when $b=0$. However, when $|b|\approx1.96\sigma_O$, its MSE is 118\% percent
above the oracle MSE and 75\% above the MSE of $Y_U$. 

\begin{figure}[h!]
\caption{\label{fig:prior} Least favorable priors when $\sigma_R^2/\sigma_U^2=0.41$}%
\begin{centering}
\includegraphics[width=4.5in, trim={0 2cm 0 0.5cm}, clip]{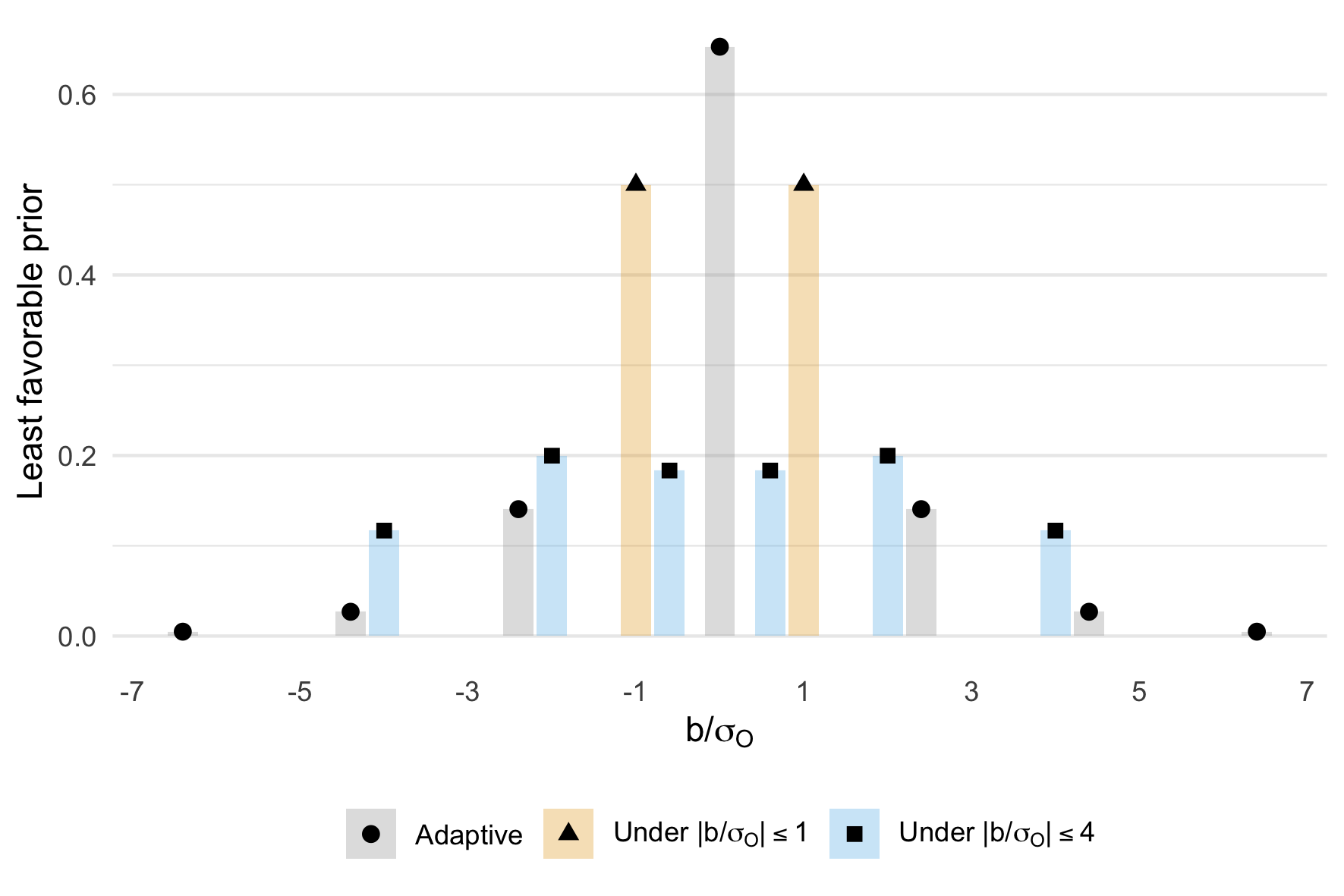}
\par\end{centering}
\end{figure}

Both the
adaptive estimator and its $B$-minimax counterparts can %
be thought of as Bayes
estimators motivated by particular least favorable priors. Figure
\ref{fig:prior} depicts the least favorable priors utilized by the $B$-minimax
estimator for two values of $B$ along with the least favorable prior of the
adaptive estimator.
All three priors
are discrete, symmetric about zero, and decreasing in $|b|$. The $B$-minimax priors have support on $[-B/\sigma_O,B/\sigma_O]$ but involve more than two mass points when $B$ is large, which yields statistical uncertainty about both the magnitude and sign of the bias. 

The estimators motivated by these three priors will all tend to yield lower MSE than $Y_U$ when the true bias magnitude $|b|$ is small. 
The adaptive prior has the important advantage over $B$-minimax priors of not requiring specification of the bound $B$. 
Moreover, the adaptive prior is \emph{robust}: the risk of the optimally adaptive estimator remains bounded as $|b|$ grows large, whereas the risk of a $B$-minimax estimator grows without limit once $|b|$ exceeds the posited bound $B$.

\section{Setup}\label{preliminaries_sec}

Consider a researcher who observes data or initial estimate $Y$ taking values in a set $\mathcal{Y}$, following a distribution $P_{\theta,b}$ that depends on unknown parameters $(\theta,b)$. 
 Let $E_{\theta,b}$ denote expectation under the distribution $P_{\theta,b}$.  We will study possibly misspecified models in a normal or asymptotically normal setting. Results covering more general models are available in a prior version of this paper \citep{armstrong2023adapting}.

The random variable $Y=(Y_U,Y_R)$ consists of an  unrestricted estimator $Y_U$ of a scalar parameter $\theta\in\mathbb{R}$ and a restricted estimator $Y_R$ that is predicated upon additional model assumptions. The additional restrictions required to motivate the restricted estimator make it less robust but potentially more efficient. To capture this tradeoff, we assume that $Y_U$ is asymptotically unbiased for $\theta$, while $Y_R$ may exhibit a bias of $b$ stemming from violation of the additional restrictions. We focus on the case where $Y_R$ is a single scalar-valued estimate, but extensions to vector-valued $b$ are provided in Appendix \ref{details_and_proofs_appendix}.

It will often be convenient to work with the quantity $Y_O=Y_R-Y_U$, which gives an estimate of the bias $b$ that features in conventional tests of over-identifying restrictions.
We work with the large sample approximation
\begin{align}\label{main_example_eq}
  \left(  \begin{array}{c}
            Y_U  \\
            Y_O
          \end{array}\right)
  \sim N\left( \left(
  \begin{array}{c}
    \theta  \\
    b
  \end{array}
  \right),
   \Sigma
  \right),
  \quad
  \Sigma = \left( \begin{array}{cc}
             \sigma_{U}^2 & \rho \sigma_U \sigma_O  \\
             \rho \sigma_U \sigma_O & \sigma_{O}^2
           \end{array} \right).
\end{align}
The variance matrix $\Sigma$ is treated as known.  In practice, feasible versions of our procedures can be computed using a consistent estimate of the asymptotic variance matrix.  The model (\ref{main_example_eq}) arises as from a local asymptotic framework where $\theta$ and $b$ are scaled by the square root of the sample size and $Y_U$ and $Y_R$ are asymptotically normal.  

Under the restriction $b=0$, the efficient GMM estimator of $\theta$ is $Y_{R,GMM}$ and its variance is $\sigma^2_{R,GMM}$, where
\begin{align}\label{eff_gmm_eq}
    Y_{R,GMM}:=Y_U-(\rho\sigma_U/\sigma_O) Y_O,
    \quad \sigma_{R,GMM}^2:=\operatorname{var}(Y_{R,GMM})=\sigma_U^2\cdot (1-\rho^2).
\end{align}
In the case where $\rho \sigma_U \sigma_O=-\sigma_O^2$, the restricted estimator $Y_R$ and the efficient GMM estimator $Y_{R,GMM}$ coincide because $\operatorname{cov}(Y_R,Y_O)=0$.  One can easily compute $\sigma_O^2$ in this case from the contrast $\sigma_U^2-\sigma_R^2$ \citep{hausman_specification_1978}. Likewise, when $Y_U$ and $Y_R$ are estimated on independent samples, computation is facilitated by the simple relation $\sigma_O^2=\sigma_R^2+\sigma_U^2.$ The \emph{relative efficiency} of $Y_U$ to $Y_{R,GMM}$ is given by $\sigma_{R,GMM}^2/\sigma_U^2=1-\rho^2$.

\subsection{$B$-minimax estimators}

An estimator $\estimator:\mathcal{Y}\to\mathcal{A}$ maps the data $Y$ to an action
$a\in\mathcal{A}$. The loss of taking action $a$ under parameters $(\theta,b)$ is given by the function $L(\theta,b,a)$. While it is possible to analyze many types of loss functions in our framework, we will focus on the familiar case of estimation of a scalar parameter $\theta\in\mathbb{R}$ with $\mathcal{A}=\mathbb{R}$ and squared
error loss $L(\theta,b,\estimator)=(\estimator-\theta)^2$.

The risk of an estimator is given by the function
\begin{align*}
  R(\theta,b,\estimator)=E_{\theta,b}L(\theta,b,\estimator(Y))
  =\int L(\theta,b,\estimator(y)) \, d P_{\theta,b}(y).
\end{align*}
An estimator $\estimator$ is \emph{minimax} over the set $\mathcal{C}$ for the parameter $(\theta,b)$ if it minimizes the maximum risk over $(\theta,b)\in\mathcal{C}$.  We are interested in a setting where the researcher entertains multiple parameter spaces $\mathcal{C}_B$, indexed by $B\in\mathcal{B}$, which may restrict the parameters $(\theta,b)$ in different ways.  Define the \emph{$B$-minimax} estimator as the $\hat \theta$ that is minimax over $\mathcal{C}_B$ and its maximum risk $R^*(B)$ as the \emph{$B$-minimax risk}:
\begin{align*}
  \Rminimax(B)
  = \inf_{\estimator} \Rmax (B,\estimator) \quad\text{where}\quad \Rmax(B,\estimator)=\sup_{(\theta,b)\in \mathcal{C}_B}R(\theta,b,\estimator).
\end{align*}

We will focus on the parameter spaces:
\begin{align*}
  \mathcal{C}_B=\{(\theta,b): \theta\in\mathbb{R}, b\in [-B,B]\} = \mathbb{R}\times [-B,B] 
\end{align*}
indexed by a scalar bound $B$ on the magnitude of the bias of the restricted estimator.
Hence, the set $\mathcal{C}_\infty$ corresponds to the unrestricted parameter space, while $\mathcal{C}_0$ corresponds to the restricted parameter space.  Consequently, the $\infty$-minimax estimator (the $B$-minimax estimator when $B=\infty$) is $Y_U$, while the $0$-minimax estimator (the $B$-minimax estimator when $B$=0) is 
$Y_{R,GMM}$. In the special case where the restricted estimator is fully efficient, the $0$-minimax estimator is additionally equal to the restricted estimator $Y_R=Y_U+Y_O$.

\subsection{Adaptation}

Researchers are often unwilling to commit to a restricted parameter space $\mathcal{C}_B$, either because they lack appropriate prior information or because priors differ among their scientific peers. 
Relative to an oracle that knows $|b|\leq B$ and is able to compute the $B$-minimax estimator, an estimator $\estimator$ formed without reference to a particular parameter space $\mathcal{C}_B$ yields a proportional increase in worst-case risk given by
\begin{align*}
  &A(B,\estimator) 
    =\frac{\Rmax(B,\estimator)}{\Rminimax(B)}.
\end{align*}
We refer to $A(B,\estimator)$ as the \emph{adaptation regret} of the estimator $\estimator$ under the set $\mathcal{C}_B$. In our main results, risk corresponds to mean squared error. Hence, $(A(B,\estimator)-1)\times100$ gives the percentage increase in worst-case MSE over $\mathcal{C}_B$ faced by an estimator $\estimator$ relative to the $B$-minimax estimator.

Define the \emph{worst case adaptation regret}
as $A_{\max}(\mathcal{B},\estimator)=\sup_{B\in\mathcal{B}}A(B,\estimator)$.
The lowest possible value $A_{\max}(\mathcal{B},\estimator)$ can take is
\begin{align}\label{Aopt_def_eq}
  \Aopt(\mathcal{B})
  = \inf_{\estimator} \sup_{B\in\mathcal{B}} A(B,\estimator) 
  = \inf_{\estimator}\sup_{B\in\mathcal{B}} \frac{\Rmax(B,\estimator)}{\Rminimax(B)}.
\end{align}
Following \citet{tsybakov_pointwise_1998}, $\Aopt(\mathcal{B})$
gives the \emph{loss of efficiency under adaptation}. An estimator $\estimator$ is \emph{optimally adaptive} if $A_{\max}(\mathcal{B},\estimator)=\Aopt(\mathcal{B})$. We use the symbol $\deltaadapt$ to represent such an estimator. 

Note that different ways of defining adaptation regret---e.g., in terms of the level increase in risk, rather than the proportional increase---would lead to different optimally adaptive estimators. 
The proposed definition has the important advantage of being scale invariant: a change of the units in which MSE is measured will not alter the percentage increase in risk over an oracle. However, when $R^*(0)$ is near zero, the optimally adaptive estimator will prioritize minimizing MSE under $b=0$, a difficulty addressed in Section \ref{constrained_adaptation_sec}.

We study parameter spaces $\mathcal{C}_B=\mathbb{R}\times [-B,B]$, where the set of values of $B$ under consideration is $\mathcal{B}=[0,\infty]$. 
Adaptive estimators yield worst case risk near $R^*(B)$ for all $B$, thereby avoiding commitment to a particular choice of $B$.
Another way to avoid specifying $B$ is to make the conservative choice $B=\infty$, leading to the $\infty$-minimax estimator $Y_U$.
Since $Y_U$ is admissible, the optimally adaptive estimator cannot provide a uniform improvement on $Y_U$ for all $b\in \mathbb{R}$.
However, the optimally adaptive estimator does a better job of mimicking the $B$-minimax estimator for small $B$, while also limiting the increase in risk over $Y_U$ in the worst case.

Early work by \cite{bickel_parametric_1984} considered adapting over the granular set $\mathcal{B}^{gran}=\{0,\infty\}$. Naturally, it is easier to adapt to the elements of the finite set $\mathcal{B}^{gran}$ than to the infinite set $\mathcal{B}$. Consequently, $\Aopt(\mathcal{B}^{gran})\leq\Aopt(\mathcal{B})$. However, consideration of $\mathcal{B}^{gran}$ may leave efficiency gains on the table for $0<b<\infty$ because $R^*(b) \leq R^*(\infty)$.

In \ref{appdx:group} we develop a stylized model that illustrates the ability of adaptive decisions to foster consensus among ``committees'' characterized by different sets of beliefs. 
When the loss of efficiency under adaptation $\Aopt(\mathcal{B})$ is not too large, the committees will agree to jointly follow the optimally adaptive decision because every committee can be compensated for the small increase in maximum risk over their preferred $B$-minimax level. Taking the committees to represent different camps of researchers, the model suggests adaptive estimation can help to forge consensus between researchers with varying beliefs about the suitability of different econometric models.

\section{Main results}\label{main_results_sec}

This section derives the form of the optimally adaptive estimator in our
setting.  We begin by noting that the problem of computing adaptive estimators
can be reduced to that of computing minimax estimators with a scaled loss function.

\subsection{Adaptation as minimax with scaled
  loss}\label{adaptation_as_scaled_minimax_sec}

Plugging in the definition of $\Rmax(B,\estimator)$ along with
$\mathcal{B}=[0,\infty]$ and $\mathcal{C}_B=\mathbb{R}\times[-B,B]$, the
criterion that the optimally adaptive estimator $\deltaadapt$ minimizes can be written
\begin{align*}
\sup_{B\in [0,\infty]} \frac{\Rmax(B,\estimator)}{\Rminimax(B)}
  =\sup_{B\in [0,\infty]}\sup_{\theta\in\mathbb{R},b\in [-B,B]}\frac{R(\theta,b,\estimator)}{\Rminimax(B)}
  =\sup_{(\theta,b)\in\mathbb{R}^2} \sup_{B\in [|b|,\infty]} \frac{R(\theta,b,\estimator)}{\Rminimax(B)}
\end{align*}
where the last equality follows by noting that the double supremum on either
side of this equality is over the same set of values of $(B,\theta,b)$.
Since $R^*(B)$ is increasing in $B$, the inner supremum is taken at $B=|b|$,
which gives the following lemma.

\begin{lemma}\label{adaptation_as_weighted_minimax_lemma}
The loss of efficiency under adaptation (\ref{Aopt_def_eq}) is given by
\begin{align*}
  \inf_{\estimator}\sup_{(\theta,b)\in \mathbb{R}^2} \omega(b) R(\theta,b,\estimator)
  \quad\text{where}\quad \omega(b) = 1/R^*(|b|)
\end{align*}
and an estimator $\deltaadapt$ that achieves this infimum (if it exists) is optimally adaptive.
\end{lemma}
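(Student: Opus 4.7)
The plan is to start from the chain of equalities stated immediately before the lemma, which already reduces the worst-case adaptation regret to a double supremum over $(\theta,b)$ and $B$:
\begin{align*}
\sup_{B\in\mathcal{B}} \frac{\Rmax(B,\delta)}{\Rminimax(B)}
=\sup_{(\theta,b)\in \cup_{B'\in\mathcal{B}} \mathcal{C}_{B'}}\;\sup_{B\in \mathcal{B}:\, (\theta,b)\in\mathcal{C}_B}  \frac{R(\theta,b,\delta)}{\Rminimax(B)}.
\end{align*}
The key observation I would make is that the inner supremum is over $B$ only, while the numerator $R(\theta,b,\delta)$ depends only on $(\theta,b)$. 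Since $R(\theta,b,\delta)\geq 0$ under squared error loss (and, more generally, under any nonnegative loss), I can factor it out of the inner supremum.

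Next, I would rewrite the supremum of the reciprocal as the reciprocal of the infimum: $\sup_{B\in\mathcal{B}:\,(\theta,b)\in\mathcal{C}_B} 1/\Rminimax(B) = 1/\inf_{B\in\mathcal{B}:\,(\theta,b)\in\mathcal{C}_B} \Rminimax(B)$, which is exactly $\omega(\theta,b)$ by definition \eqref{omega_scaling_def}. Combining these two steps gives
\begin{align*}
\sup_{B\in\mathcal{B}} \frac{\Rmax(B,\delta)}{\Rminimax(B)}
= \sup_{(\theta,b)\in \cup_{B'\in\mathcal{B}} \mathcal{C}_{B'}} \omega(\theta,b)\, R(\theta,b,\delta).
\end{align*}
Taking $\inf_\delta$ on both sides and recalling the definition of $\Aopt(\mathcal{B})$ in \eqref{Aopt_def_eq} yields the claimed identity, and any $\deltaadapt$ attaining the right-hand infimum therefore attains the left-hand one, hence is optimally adaptive.

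There is no substantive obstacle here: the result is essentially a rearrangement of suprema. The only minor point requiring a sentence of justification is the factoring of $R(\theta,b,\delta)$ out of $\sup_B$, which uses the nonnegativity of risk (the case $R(\theta,b,\delta)=0$ is trivial). If one wanted to be careful about $\Rminimax(B)=0$ or $\infty$ edge cases, a brief remark would suffice: under the main example $\Rminimax(B)$ is finite and strictly positive for all $B\in[0,\infty]$, so $\omega$ is well defined and positive.
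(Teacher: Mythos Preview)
Your proposal is correct and follows essentially the same approach as the paper: the paper derives the lemma directly from the chain of equalities preceding it together with the definition of $\omega(\theta,b)$, and you have simply made explicit the two routine steps (factoring $R(\theta,b,\delta)\ge 0$ out of the inner supremum and rewriting $\sup_B 1/\Rminimax(B)$ as $1/\inf_B \Rminimax(B)$) that the paper leaves implicit.
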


Lemma \ref{adaptation_as_weighted_minimax_lemma} shows that finding an optimally adaptive decision can be written as a minimax problem with a weighted version of the original loss function.  In particular, $\deltaadapt$ is found to minimize the maximum (over $\theta,b$) of the objective $\omega(b)R(\theta,b,\estimator)=E_{\theta,b}\omega(b)L(\theta,b,\estimator(Y))$. 
Hence, the optimal adaptive estimator corresponds to a minimax estimator under the loss function $\omega(b)L(\theta,b,\estimator(Y))$.

\subsection{$B$-minimax and adaptive estimators}\label{main_example_computation_sec}

According to Lemma \ref{adaptation_as_weighted_minimax_lemma}, computing
adaptive estimators amounts to solving a weighted minimax problem.  In our
setting, we can further simplify this problem using invariance.
We focus here on the case of squared error loss $L(\theta,b,\estimator)=(\theta-\estimator)^2$.
Appendix \ref{details_and_proofs_appendix} provides proofs of the results in this section and covers general loss functions for estimation of the form $L(\theta,b,\estimator)=\ell(\theta-\estimator)$. It will be useful to transform the data to $(Y_U,T_O)$, where
$T_O=Y_O/\sigma_O$ is the $t$-statistic for a specification test of the
null that $b=0$.
This representation is equivalent to our original setting because $\sigma_O$ is known. %

It follows from invariance arguments that both the $B$-minimax estimator and the optimally adaptive estimator take the form
\begin{align}\label{Bminimax_main_example_eq}
\estimator(Y_U,T_O)=\rho\sigma_U \bestimator\left(T_O\right) + Y_U - \rho \sigma_U T_O
=\rho\sigma_U \bestimator\left(T_O\right) + Y_{R,GMM},
\end{align}
where $Y_{R,GMM}$ is the efficient GMM estimator given in (\ref{eff_gmm_eq}) and $\delta:\mathbb{R}\rightarrow\mathbb{R}$ is an estimator of the scaled bias $b/\sigma_O$.
Note that,
when $b\ne 0$, $Y_{R,GMM}$ exhibits a bias of $-(\rho\sigma_{U}/\sigma_{O}) b$.  
The estimator in (\ref{Bminimax_main_example_eq}) subtracts from the GMM estimate a corresponding estimate $-\rho\sigma_{U} \bestimator\left(Y_O/\sigma_O\right)$ of this bias term. Estimators in this class were also considered by \citet{magnus_estimation_1999} in the context of linear regression.

The following theorem, which is proved in Appendix \ref{details_and_proofs_appendix}, describes the particular functions $\bestimator(\cdot)$ in the class of estimators defined by (\ref{Bminimax_main_example_eq}), used by the $B$-minimax and optimally adaptive estimators.

\begin{theorem}\label{main_example_thm_main_text}
Consider the model in (\ref{main_example_eq}) with parameter spaces $\mathcal{C}_B=\mathbb{R}\times [-B,B]$ for $B\in\mathcal{B}=[0,\infty]$ and squared error loss $L(\theta,b,\estimator)=(\estimator-\theta)^2$. The following results hold:
\begin{enumerate}[(i)]
    \item\label{main_example_thm_minimax_estimator}
    Let $\deltaBNM\left(T_O;B\right)$ be the minimax estimator of $\vartheta \in \mathcal{C}=[-B/\sigma_O,B/\sigma_O]$ when $T_O\sim N(\vartheta, 1)$ and let $\rBNM \left( B/\sigma_O \right)$ be the corresponding minimax risk.
    The $B$-minimax estimator of $\theta$ is given by
    \begin{align*}
        \deltaminimax(Y_U,T_O;B)=\rho\sigma_U \deltaBNM\left(T_O;B\right) + Y_U - \rho \sigma_U T_O
    \end{align*}
    and the $B$-minimax risk is given by
    \begin{align}\label{Rminimax_main_example_eq}
\Rminimax(B) =\rho^2 \sigma_U^2\rBNM \left( B/\sigma_O \right)
  + \sigma_U^2 - \rho^2 \sigma_U^2.
    \end{align}

    \item\label{main_example_thm_adaptive_estimator}
    An optimally adaptive estimator of $\theta$ takes the form
    \begin{align*}
    \deltaadapt(Y_U,T_O)=\rho\sigma_U \tildedeltaadapt\left(T_O;\rho^2\right) + Y_U - \rho \sigma_U T_O,
    \end{align*}
    where $\tildedeltaadapt(\cdot;\rho^2)$ is a function that minimizes
    \begin{align}\label{adaptation_objective_main_example_eq}
\sup_{\tilde b\in\mathbb{R}} \frac{E_{T\sim N(\tilde b,1)}(\bestimator(T)-\tilde b)^2 + \rho^{-2} - 1}{\rBNM(|\tilde b|) + \rho^{-2}-1}.
\end{align}

    \item\label{main_example_thm_lea}
    The loss of efficiency under adaptation $\Aopt(\mathcal{B})$ in \eqref{Aopt_def_eq} is equal to 
\begin{align*}
\inf_{\bestimator} \sup_{\tilde b\in\mathbb{R}} \frac{E_{T\sim N(\tilde b,1)}(\bestimator(T)-\tilde b)^2 + \rho^{-2} - 1}{\rBNM(|\tilde b|) + \rho^{-2}-1}
=\sup_{\pi} \inf_{\bestimator} \int \frac{E_{T\sim N(\tilde b,1)}(\bestimator(T)-\tilde b)^2 + \rho^{-2} - 1}{\rBNM(|\tilde b|) + \rho^{-2}-1}\, d\pi(\tilde b)
\end{align*}
where the supremum is over all probability distributions $\pi$ on $\mathbb{R}$.
\end{enumerate}

\end{theorem}

Part (\ref{main_example_thm_minimax_estimator}) of Theorem \ref{main_example_thm_main_text} establishes that  the $B$-minimax estimator relies on an estimator $\deltaBNM(T_O;B)$ of the scaled bias $b/\sigma_O$ that is minimax under the bound $|b|\le B$.  This minimax estimation problem is called the \emph{bounded normal mean} problem and has been studied extensively in the literature.
We detail the computation of this estimator in Online Appendix \ref{appdx: lookup_bnm}. For finite $B/\sigma_O$, the minimax estimator is the posterior mean against a least favorable prior. Figure \ref{fig:prior} illustrates several such priors. When the interval is small, the least favorable prior concentrates at the two endpoints. For larger intervals, it concentrates at a finite number of points within $[-B/\sigma_O, B/\sigma_O]$ \citep{casella_estimating_1981}. For $B/\sigma_O = \infty$, the minimax estimator is $T_O$.

Theorem \ref{main_example_thm_main_text}(\ref{main_example_thm_adaptive_estimator}) states that the optimally adaptive
estimator takes the form in (\ref{Bminimax_main_example_eq}) with $\bestimator(\cdot)$ given by $\tildedeltaadapt(\cdot;\rho^2)$: the solution to a
weighted minimax problem over the scaled bias
$\tilde b=b/\sigma_O$. Following part (iii) of Theorem
\ref{main_example_thm_main_text},  the problem is solved
numerically using a discrete approximation to the
least favorable
prior over $\tilde
b$
as in \citet{chamberlain_econometric_2000}.
The least favorable prior distributions reported in Figure \ref{fig:prior} were
computed using this approach.
The invariance arguments used to derive
(\ref{adaptation_objective_main_example_eq}) imply an independent flat prior for $\theta$. 
To streamline computation, $\tildedeltaadapt(\cdot; \rho^2)$ is evaluated on a grid of $\rho^2$ values, creating a lookup table. See Online \ref{appdx: lookup} for details.

One can write the optimally adaptive estimator as a weighted average:
\begin{align}
\deltaadapt(Y_U,T_O) = w(T_O)\cdot Y_U + \left(1-w(T_O)\right) \cdot Y_{R,GMM},\nonumber
\end{align}
where $w(T_O)=\tildedeltaadapt(T_O;\rho^2)/T_O$ is a data-dependent weight. %
We find numerically that the adaptive estimator ``shrinks'' $T_O$ towards zero, leading the weight $w(T_O) $ to fall between zero and one for all values of $\rho^2$. 
The data dependent nature of the weight $w(T_O)$ is clearly crucial for the robustness properties of the optimally adaptive estimator. As $T_O$ grows large, less weight is placed on the optimal GMM estimator and more weight is placed on the unrestricted estimator $Y_U$. If one were to commit ex-ante to a fixed (i.e., non-stochastic) weight on $Y_U$ below one, the worst-case risk of the procedure would become unbounded because the optimal GMM estimator can exhibit arbitrarily large bias.

\subsubsection{Impossibility of consistently estimating the asymptotic distribution}

The distribution of an estimator of the form (\ref{Bminimax_main_example_eq}) can be derived by noting that $Y_{R,GMM}$ and $T_O$ are independent, with $Y_{R,GMM}\sim N(\theta-b \rho \sigma_U/\sigma_O, \sigma^2_U(1-\rho^2))$ and $T_O\sim N(b/\sigma_O, 1)$. Let $Z_1$ and $Z_2$ denote independent $N(0,1)$ random variables. Substituting $T_O=Z_1+b/\sigma_O$ and $Y_{R,GMM}=\sigma_U\sqrt{1-\rho^2}Z_2+\theta-b\rho\sigma_U/\sigma_O$ into (\ref{Bminimax_main_example_eq}) and rearranging terms yields
\begin{align}\label{sampling_distribution_eq}
\frac{\estimator(Y_U,T_O) - \theta}{\sigma_U}
=\rho \left[\bestimator\left(Z_1+\tilde b\right) - \tilde b\right] + \sqrt{1-\rho^2} Z_2,
\quad\text{where}\quad \tilde b=b/\sigma_O.
\end{align}

This representation holds under the distribution for $(Y_U,T_O)$ maintained in (\ref{main_example_eq}), which provides an asymptotic approximation under local misspecification.  In this asymptotic regime, 
consistent estimators of $\rho$, $\sigma_U$ and $\sigma_O$ are available via the usual asymptotic variance formulas used in overidentification tests for GMM.
In contrast, $b$ gives the limit of the bias of the restricted estimator divided by $\sqrt{n}$ and cannot be consistently estimated. Consequently, it is not possible to consistently estimate the asymptotic distribution of $\hat \theta(Y_U,T_O)$.

For example, the MSE of the estimator $\estimator(Y_U,T_O)$ is
\begin{align*}
\sigma^2_U \left[\rho^2 r(b/\sigma_O;\bestimator(\cdot)) + 1- \rho^2 \right],
\quad\text{where}\quad r(\tilde b;\bestimator(\cdot))= E_{T\sim N(\tilde b,1)} (\bestimator(T)-\tilde b)^2.
\end{align*}
Figures \ref{fig:build_oracle} and \ref{fig:pre_test} of Section \ref{sec:intro_example} plot this quantity as a function of $\tilde b$ with consistent estimates of $\rho$, $\sigma_U$, and $\sigma_O$ plugged in.  However, 
$\tilde b$ itself cannot be consistently estimated.
See \citet{leeb_model_2005} for a discussion of these issues in the context of pre-test estimators.

\subsubsection{Confidence Intervals}

Using (\ref{sampling_distribution_eq}), one can obtain a $100\cdot (1-\alpha)\%$ CI that is valid under the parameter space $\mathcal{C}_B=\mathbb{R}\times [-B,B]$ for $(\theta,b)$ by using a critical value $c_{\alpha}(\tilde B)=c_{\alpha}(\tilde B;\rho,\bestimator)$ solving
\begin{align}
\inf \chi \quad\text{s.t.}\quad \sup_{\tilde b:|\tilde b|\le \tilde B} P\left(\left|\rho \left[\bestimator\left(Z_1+\tilde b\right) - \tilde b\right] + \sqrt{1-\rho^2} Z_2\right| > \chi\right) \le \alpha. \label{eq: crit}
\end{align}
This critical value can then be used to form the \emph{fixed length confidence interval} (FLCI) $\left\{\hat\theta(Y_U,T_O) \pm \sigma_U c_{\alpha}(B/\sigma_O)\right\}$ centered at the estimator $\hat\theta(Y_U,T_O)$.
To emphasize the dependence on the parameter space $\mathcal{C}_B$ under which coverage is guaranteed, we will refer to such intervals as $B$-FLCIs.
For example, one can form the $B$-FLCI centered at the $B$-minimax estimator by using the critical value
$c_{\alpha}(B/\sigma_U)$ for this estimator.
Setting $B=\infty$, the $\infty$-FLCI centered at the $\infty$-minimax estimator is the usual CI centered at the unrestricted estimator: $\{Y_U\pm z_{1-\alpha/2}\sigma_U\}$.  This CI turns out to be larger than the $B$-FLCI centered at the $B$-minimax estimator for finite $B$, reflecting its validity over the larger parameter space $b\in \mathbb{R}$.

One can 
compute a $B$-FLCI centered at the adaptive estimator by computing the critical value $c_{\alpha}(B/\sigma_O;\rho,\tildedeltaadapt(\cdot;\rho^2))$ for the adaptive estimator. 
Unfortunately, it can be shown formally that 
any CI that is valid for all $b\in \mathbb{R}$ must have average length close to the length $2z_{1-\alpha/2}\sigma_U$ of the CI centered at $Y_U$, even if $b$ happens to be close to zero \citep[see Section 4 of][]{armstrong_sensitivity_2021}.
In light of this impossibility result, it is reasonable to report alongside an adaptive estimate the critical values for a $0$-FLCI and $\infty$-FLCI, thereby summarizing the range of critical values needed to guarantee coverage under different assumptions. When $|\rho|$ is large, the critical value for a $0$-FLCI will be far below the usual 1.96 benchmark for a 95\% test. Conversely, the corresponding critical value for a $\infty$-FLCI interval will be much larger than 1.96, reflecting the inherent tradeoffs involved in centering the CI around the adaptive estimator rather than the unbiased estimator. \citet{cai_adaptive_2005} discuss analogous tradeoffs involving centering in the context of nonparametric estimation.

An alternate approach, which we explore in our main empirical example, is to construct a $B$-FLCI for some intermediate value of $B$ and report both its worst and best case coverage. Researchers who are open to trading off some worst-case coverage for a shorter CI or enhanced best-case coverage might find an interval centered around an adaptive estimator, offering coverage (say) between 90\% and 97\%, more appealing than a longer interval centered around $Y_U$ that consistently provides 95\% coverage. This interval could also be preferable to a slightly shorter 90\% CI centered around $Y_U$, as the additional 7 percentage points of potential coverage may be more valuable than a modest reduction in length.

\subsection{Analytic adaptive estimators}\label{nearly_adaptive_computation_sec}

While the optimally adaptive estimator is 
trivial to implement once the solution is tabulated, it lacks a simple closed form. To reduce the opacity of the procedure, one can replace the term $\bestimator(T_O)$ in (\ref{Bminimax_main_example_eq}) with an analytic approximation. 
A natural choice of approximations for $\bestimator(T_O)$ is the class of \emph{soft-thresholding} estimators, which are indexed by a threshold $\lambda\ge 0$ and given by
\begin{align*}
\delta_{S,\lambda}(T)=\max\left\{|T|-\lambda,0\right\}\operatorname{sgn}(T)
=\begin{cases}
T-\lambda & \text{if } T>\lambda  \\
T+\lambda & \text{if } T<-\lambda  \\
0         & \text{if } |T|\le \lambda.
\end{cases}
\end{align*}
We also consider the class of \emph{hard-thresholding} estimators, which are given by
\begin{align*}
\delta_{H,\lambda}(T)=T\cdot I(|T|\ge \lambda)
=\begin{cases}
T         & \text{if } |T|>\lambda  \\
0         & \text{if } |T|\le \lambda.
\end{cases}
\end{align*}
Note that hard-thresholding leads to a simple pre-test rule: use the unrestricted estimator if $|T_O|>\lambda$ (i.e. if we reject the null that $b=0$ using critical value $\lambda$) and otherwise use the GMM estimator that is efficient under the restriction $b=0$.
The soft-thresholding estimator uses a similar idea, but avoids the discontinuity at $T_O=\lambda$.

A third estimator, which we will call the empirical risk minimizer (ERM), takes the form $\delta_{ERM}(T_{O})=\frac{T_{O}^{2}}{T_{O}^{2}+1}\cdot T_{O}$. The ERM estimator, which was proposed by \cite{de_chaisemartin_empirical_2020}, minimizes the estimated risk of the weighted average between $Y_{U}$ and $Y_{R,GMM}$.
The ERM can be generalized to a broader class of estimators 
$\delta_{ERM,\lambda}(T_{O})=\frac{T_{O}^{2}}{T_{O}^{2}+\lambda}\cdot T_{O}$,
which was briefly considered in \citet[][p. 230]{Magnus_Estimation_2002}.
We can optimize $\lambda$ for the worst-case adaptation regret given a specific value of $\rho^{2}$,
which yields the \emph{adaptive ERM} estimator.

To compute the adaptive ERM estimator along with the hard and soft-thresholding estimators that are optimally adaptive in these classes of estimators, we  numerically minimize (\ref{adaptation_objective_main_example_eq}) over $\lambda$  as explained in Online Appendix~\ref{appdx:st}.  We plot the respective optimal thresholds in Figure~\ref{fig:thresholds}, which are only a function of the relative efficiency $\sigma_{R,GMM}^2/\sigma_U^2=1-\rho^2$. We will be especially interested in the optimal soft-threshold, which can be closely approximated using the formula $\lambda= 0.45 - 0.24 \cdot \ln(1-\rho^2)$ for $\rho^2\in(0.002,0.99)$.%
\begin{figure}[h!]
\caption{\label{fig:thresholds}Thresholds minimizing the worst-case adaptation regret}

\begin{centering}
\includegraphics[scale=0.20]{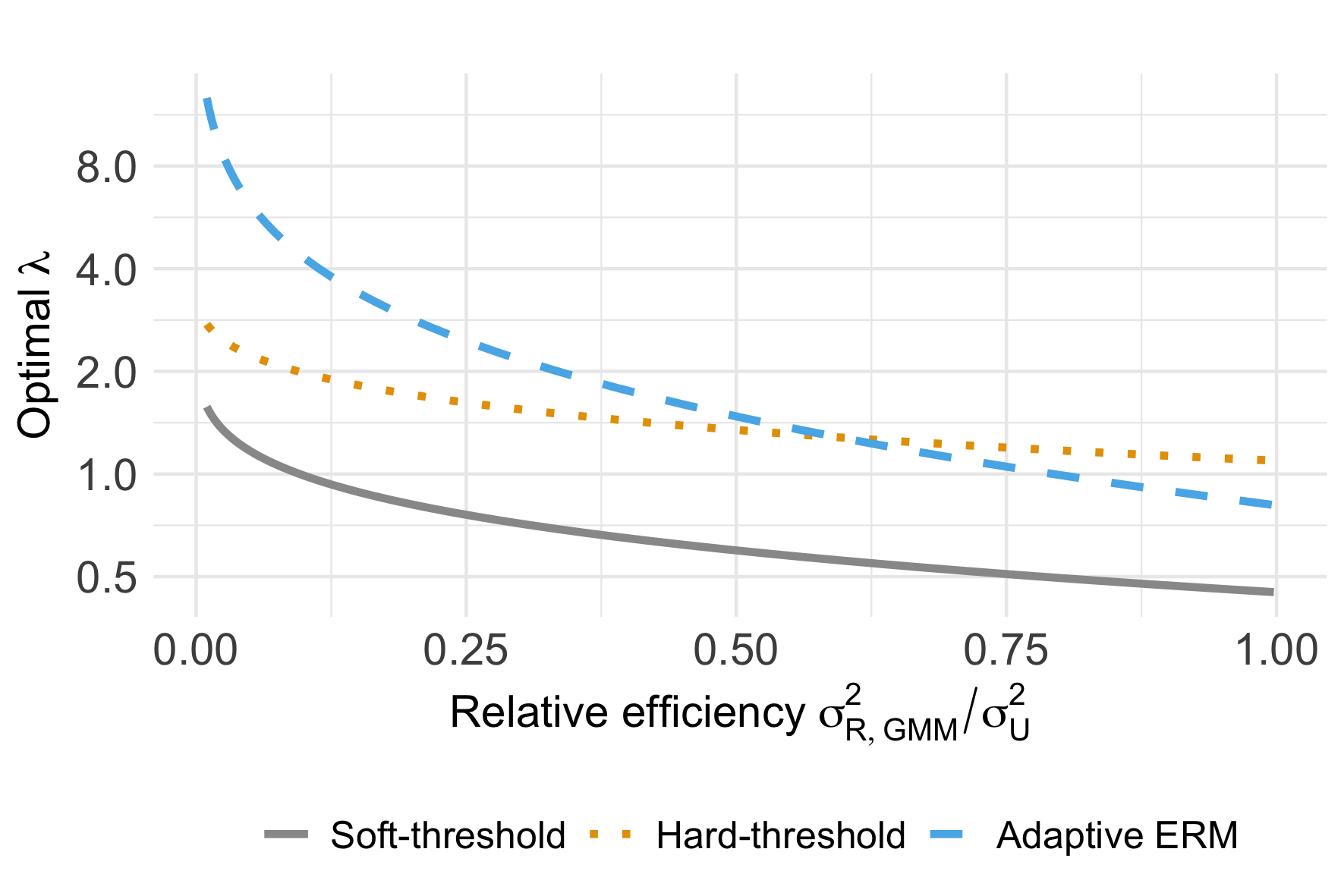}
\par\end{centering}
\centering{\footnotesize{Notes: Vertical axis is spaced on a $\log_2$ scale.}}

\end{figure}

\begin{figure}[h!]
\caption{\label{fig:policy} Estimators of scaled bias when $\sigma_{R,GMM}^2/\sigma_U^2=0.41$}
\begin{centering}
\includegraphics[scale=0.20, trim={0 1.5cm 0 0.5cm}, clip]{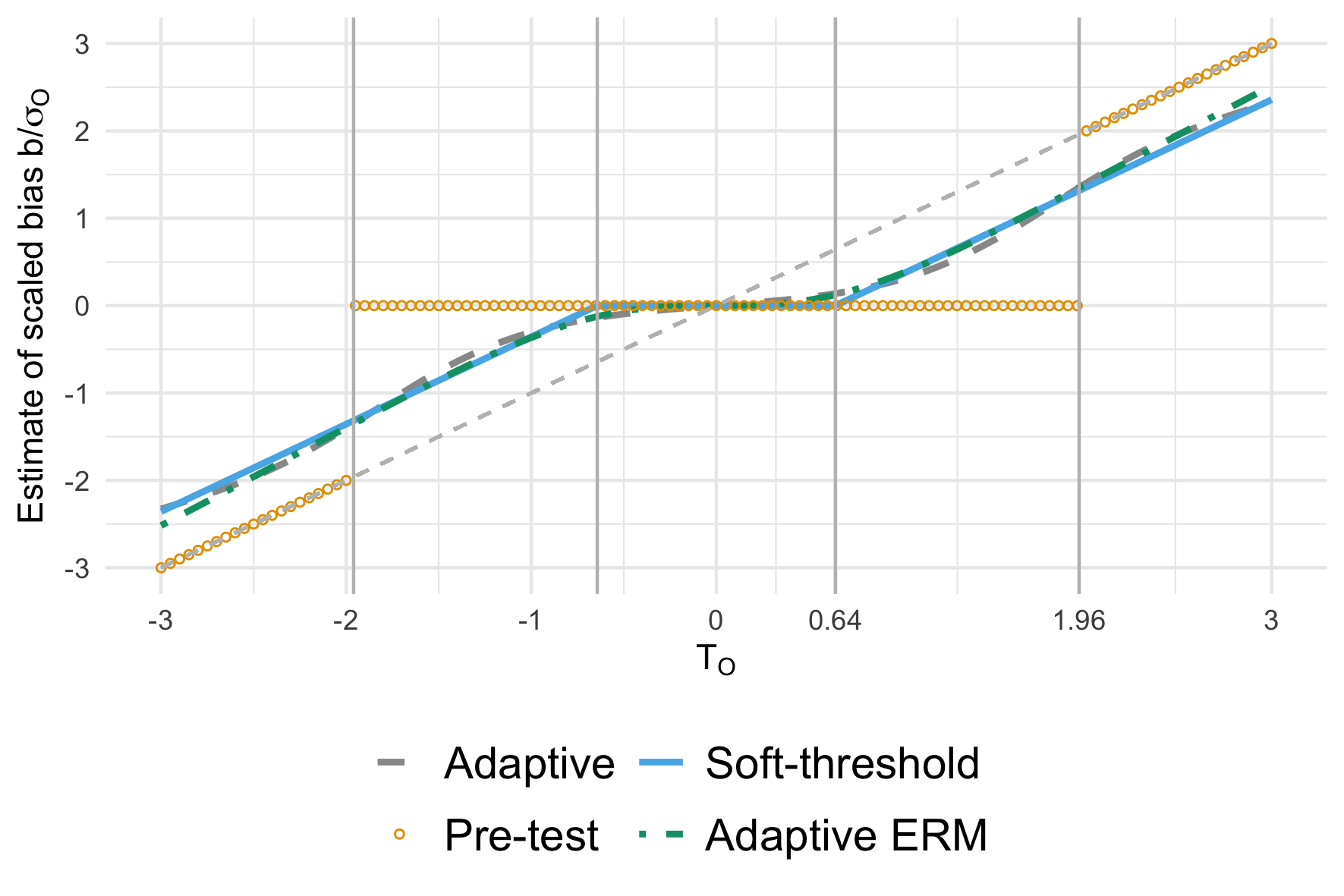}
\par\end{centering}
\footnotesize{Notes: Solid vertical line at 0.64 depicts optimal soft-threshold. Solid line at 1.96 depicts conventional pre-test threshold.}
\end{figure}

Figure \ref{fig:policy} plots the optimally adaptive and soft-thresholding estimators of the scaled bias as functions of $T_O$. To ease visual inspection of the differences between these estimators, they have been plotted over the restricted range [-3,3]. These functions depend on the data only through the estimated value of $1-\rho^2$, which takes the value 0.41 here, as in the two-way fixed effects example introduced in Section \ref{sec:intro_example}. The optimal soft-threshold $\lambda$ yielding the lowest worst cast adaptation regret in this example is 0.64. The optimally adaptive, adaptive ERM, and soft-thresholding estimators continuously shrink small values of $T_O$ towards zero. However, the soft-thresholding estimator sets all values of $|T_O|$ less than $0.64$ to zero, while the optimally adaptive and adaptive ERM estimators avoid flat regions. In contrast to the continuous nature of these adaptive estimators, a conventional pre-test using $\lambda=1.96$ exhibits large discontinuities at the hard-threshold.  The pre-test choice of $\lambda=1.96$ differs from the value that minimizes worst-case adaptation regret, which in this example is 1.43.

Like the optimally adaptive estimator $\deltaadapt$, the worst-case adaptation regret of the adaptive soft and hard-thresholding estimators depends only on $1-\rho^2$.  We report comparisons between these estimators in our empirical applications in Section~\ref{sec:Examples}.  As discussed in Online Appendix~\ref{appdx:nearly_adaptive_approx}, 
soft-thresholding yields nearly optimal performance for the adaptation problem relative to $\deltaadapt$ in a wide range of settings.
In contrast, hard-thresholding typically exhibits both substantially elevated worst case adaptation regret and worst case risk driven by the possibility that the scaled bias has magnitude near $\lambda$. The adaptive ERM estimator generally exhibits slightly higher worst case risk and adaptation regret than the soft-thresholding estimator but exhibits lower risk when the bias is very large.

Our finding that soft-thresholding is nearly optimal for adaptation mirrors the findings of \citet{bickel_parametric_1984} for the case where the set $\mathcal{B}$ of bounds $B$ on the bias consists of the two elements $0$ and $\infty$.
\citet[][p. 231]{Magnus_Estimation_2002} reports that soft-thresholding 
optimizes a related regret problem over a certain class of estimators indexed by two scalar parameters.
While soft-thresholding is perhaps the simplest way of achieving near-optimal performance for adaptation, other generalizations of thresholding estimators \citep[e.g.,][pp. 200-201]{johnstone_gaussian_2019} have been found to have similar risk properties to soft-thresholding, and may also perform well in our setting.

As detailed in Appendix~\ref{lasso_appendix}, the soft-thresholding estimator is numerically equivalent to a generalized lasso estimator \citep{tibshirani1996regression}
applied to a dataset comprised of the restricted and unrestricted estimates. The regressors are a constant and an indicator for the restricted estimate, the coefficient on which measures the bias $b$. The lasso penalty shrinks the bias estimate towards zero and depends only on the soft-threshold $\lambda$. Hence, the adaptive soft-threshold provides an optimal tuning of lasso for low-dimensional settings in which interest centers on a scalar parameter.
This exact tuning contrasts with high-dimensional settings where existing tuning methods typically only offer rate results. %

\subsection{Constrained adaptation}\label{constrained_adaptation_sec}

If the loss of efficiency under adaptation $A^*(\mathcal{B})$ is large, both the optimally adaptive estimator and its soft-thresholding approximation will possess worst case risk far above the oracle minimax risk, which limits their practical appeal.
As we show in Online Appendix~\ref{rho_to_one_section},
$A^*(\mathcal{B})$ will tend to be large when $|\rho|$ is large, which corresponds to settings where $Y_R$ is orders of magnitude more precise than $Y_U$. 

In such cases, it may be attractive to temper the degree of adaptation that takes place by restricting attention to estimators that exhibit worst case risk no greater than a constant $\bar R$. Online Appendix Section~\ref{constrained_adaptation_sec_appx} details how to compute such a constrained adaptive estimator. As noted by \citet{bickel_parametric_1984} in his analysis of the granular case where $\mathcal{B}=\{0,\infty\}$, it is often possible to greatly improve the risk at $b=0$ relative to the unbiased estimator $Y_U$ in exchange for modest increases in risk when $b=\infty$.  Similarly, we find that setting $\overline R$ to  $50\%$ above the risk of $Y_U$ yields large efficiency improvements when $b$ is small.

The constrained adaptive estimator bears some similarity to the ERM estimator. 
\cite{de_chaisemartin_empirical_2020} prove that the maximal risk decrease of $\delta_{ERM}$ relative to the risk of the unbiased estimator is larger than the maximal risk increase of $\delta_{ERM}$ relative to the unbiased estimator.  %
Through numerical calculations reported in a prior version of this paper \citep{armstrong2023adapting}, we find that this property holds for the constrained soft-thresholding version of our estimator so long as $\overline R$ is less than $70\%$ above the risk of $Y_U$. Remarkably, the property holds even for unconstrained soft-thresholding ($\overline R=\infty$) so long as $\rho^2$ is less than 0.86.

\section{Examples}\label{sec:Examples}

We now consider two empirical examples where questions of specification arise and examine how adapting to misspecification compares to pre-testing and other strategies such as committing ex-ante to either the unrestricted or restricted estimator. A third example, provided in Online \ref{sec:lalonde}, considers a multivariate adaptation problem with two restricted models and corresponding bias estimates. %

\subsection{Adapting to heterogeneous effects \citep{gentzkow_effect_2011}} \label{sec:gentzkow}

Returning to the example introduced in Section \ref{sec:intro_example}, Table \ref{tab: turnout} reports the realizations of $(Y_U,Y_R)$ and their standard errors, which exactly replicate those given in Table 3 of \citet{de_chaisemartin_two-way_2020} after dividing by 100. The estimated variance of $Y_O$ is closely approximated by the difference in squared standard errors between $Y_U$ and $Y_R$, suggesting $Y_R$ is nearly efficient. Hence, the downstream GMM, adaptive, and soft-thresholding estimators could have been accurately approximated using only the published point estimates and standard errors.  In contrast to the analysis in Section \ref{sec:intro_example}, we treat $Y_{R,GMM}$ rather than $Y_R$ as the efficient estimator, resulting in small differences from the previously reported downstream results. Standard errors are not reported for the soft-thresholding, adaptive, or pre-test estimators because the variability of these procedures depends on the unknown bias level $b$.

\begin{table}[h]
\begin{centering}
\caption{\label{tab: turnout} Estimates of the effect of an additional newspaper
on turnout.}
\begin{tabular}{cccccccccc}
\hline 
 &   &    &   & Pre- & Opt. & Soft- & Hard- &  &  Adapt\tabularnewline
 & $Y_{U}$  & $Y_{R}$  & $Y_{R,GMM}$ & test & Adapt & thresh & thresh &  ERM & ERM\tabularnewline
\hline 
\hline 
Estimate & 0.43 & 0.26  & 0.24 & 0.24 & 0.36 & 0.36 & 0.43 &  0.38 & 0.36 \tabularnewline
Std Error & (0.14) & (0.09)  & (0.09) &  &  &  &  &   \tabularnewline
\hline
Max Risk & 0\% & $\infty$ & $\infty$ & 87\% & 39\% & 25\% & 39\% & 15\% & 25\%\tabularnewline
Max Regret & 145\% & $\infty$ & $\infty$ & 134\% & 44\% & 46\% & 82\% & 68\% & 50\%\tabularnewline
Threshold &  &  & & 1.96 &  & 0.64  & 1.43 & 1  & 1.73\tabularnewline
\hline \hline
\end{tabular}
\par\end{centering}
\footnotesize{Notes: Bootstrap standard
errors in parentheses computed using the same 100 bootstrap samples utilized by \citet{de_chaisemartin_two-way_2020}. The over-identification test statistic is $T_O=-1.75$. ``Pre-test'' selects between $Y_U$ and GMM based on $|T_O|\gtrless1.96\sigma_O$. The relative efficiency of $Y_U$ to $Y_{R,GMM}$ is $1-\rho^2=0.41$. ``Max Risk'' gives the percentage increase in worst case risk over $Y_U$: $(\sup_B R_{\max}(B,\estimator)/\sigma_U^2-1)\times 100$. ``Max Regret'' refers to the worst case adaptation regret in percentage terms $(A_{\max}(\mathcal{B},\estimator)-1)\times100$.  %
 }
\end{table}

Both $Y_R$ and $Y_{R,GMM}$ exhibit standard errors roughly 35\% below that of $Y_U$. Consequently, relying solely on the convex-weighted estimator $Y_U$ exposes the researcher to a large worst-case adaptation regret of 145\%. Though the realized value of $Y_U$ is nearly twice as large as that of $Y_{R,GMM}$, the two estimators are not statistically distinguishable from one another at the 5\% level. Hence, a conventional pre-test suggests ignoring the perils of negative weights and confining attention to $Y_{R,GMM}$ on account of its substantially increased precision. The worst case MSE of the pre-test estimator, which exhibits a hump shaped risk profile similar to that depicted in Figure \ref{fig:pre_test}, is 87\% higher than the MSE $\sigma_U^2$ of $Y_U$. Pre-testing also yields sizable worst-case adaptation regret reflecting the possibility that the test selects the inefficient $Y_U$ when $b=0$. 

In contrast to the pre-test estimator, both the optimally adaptive estimator and its soft-thresholding approximation place substantial weight $w(T_O)$ on the convex estimator, yielding estimates roughly 60\% of the way towards $Y_U$ from $Y_{R,GMM}$. This phenomenon owes to the fact that with $T_O=-1.75$ both estimators detect the presence of a non-trivial amount of bias in $Y_R$. We can easily compute the soft-thresholding bias estimate from the figures reported in the table as $(-1.75+.64)\times-0.77\times0.14\approx 0.12$, suggesting that $Y_{R,GMM}$ exhibits a bias of roughly 50\%. Balancing this bias against the estimator's increased precision leads the soft-thresholding estimator to essentially split the difference between the convex and non-convex weighted estimators. 

By construction, the adaptive estimator exhibits lower worst case adaptation regret than the soft-thresholding estimator but the differences are quantitatively trivial. However, the soft-thresholding estimator exhibits meaningfully lower worst case risk than the adaptive estimator. Though the two estimators happen to yield identical estimates ex-post in this example, the ex-ante risk properties of the adaptive soft-thresholding estimator arguably commend it over the optimally adaptive estimator. 

The ERM estimator of \cite{de_chaisemartin_empirical_2020} yields lower worst case risk than soft-thresholding but substantially larger adaptation regret. Optimizing the ERM threshold to minimize adaptation regret yields worst case risk equivalent to the soft-thresholding estimator but higher adaptation regret. Of the estimators considered, soft-thresholding offers the most attractive tradeoff between worst case risk and adaptation regret.

\textbf{Confidence Intervals}
Table \ref{tab:coverage} reports the best case and worst case coverage of a series of confidence intervals. The first two columns of Panel A show that the usual 95\% confidence interval centered around the unbiased estimator has proper size, while a naive CI centered around the restricted estimator has best case coverage of 95\% and worst case coverage of 0\% attributable to the potentially unlimited bias of the restricted estimator. Relying on a pre-test to select one of these two confidence intervals yields a minimum coverage level of 67\%. By contrast, centering a CI around the optimally adaptive estimator using the standard error of the unbiased estimator yields best case coverage of 98\% and worst case coverage of 90\%. Centering around the soft-thresholding estimator yields even more favorable results, raising the worst case coverage to 93\%.

\begin{table}[h!]
\caption{\label{tab:coverage} Coverage and length of confidence intervals} 
\begin{centering}

\textbf{Panel A: Simple CIs} \\[0.2cm]
\begin{tabular}{cccccc}
\hline
 &  &  & & Opt. & Soft- \tabularnewline

 & $Y_{U}$ & $Y_{R}$ & Pre- & Adapt & Thresh \tabularnewline
 & $\pm 1.96\sigma_U$  & $\pm 1.96\sigma_R$ & test & $\pm 1.96\sigma_U$ & $\pm 1.96\sigma_U$ \tabularnewline
\hline
\hline
Max Coverage & 95\% & 95\% & 95\% & 98\% & 98\% \tabularnewline
Min Coverage & 95\% & 0\% & 67\% & 90\% & 92\% \tabularnewline
\hline \hline
\end{tabular}

\vspace{0.5cm} %

\textbf{Panel B: $B$-FLCIs} \\[0.2cm]
\begin{tabular}{cccccccc}
\hline
 & Opt. &Soft-  & Opt. & Soft- & Opt. & Soft-
\tabularnewline
 & Adapt & Thresh & Adapt & Thresh & Adapt & Thresh \tabularnewline
 & $\pm c_{.05}(0) \sigma_U$ & $\pm c_{.05}(0) \sigma_U$ & $\pm c_{.05}(1) \sigma_U$ & $\pm c_{.05}(1) \sigma_U$ & $\pm c_{.05}(9) \sigma_U$ & $\pm c_{.05}(9) \sigma_U$ \tabularnewline
\hline
\hline
Max Coverage & 95\% & 95\% & 97\% & 97\% & 99\% & 99\% \tabularnewline
Min Coverage & 80\% & 87\% & 86\% & 90\% & 95\% & 95\% \tabularnewline
Critical Val & 1.54 & 1.62 & 1.74 & 1.77 & 2.32 & 2.11 \tabularnewline
\hline \hline
\end{tabular}

\par\end{centering}
\footnotesize{Notes: ``Max coverage'' refers to the maximal coverage probability for the given confidence interval. ``Min Coverage'' refers to the min coverage probability. ``Adaptive'' refers to the optimally adaptive estimator and ``Soft-Thresh'' refers to soft-thresholding. ``Pre-test'' switches between $Y_U\pm1.96\sigma_U$ and $Y_R\pm1.96\sigma_R$ based on whether $|T_O|\gtrless 1.96 \sigma_O$. Critical values for $B$-FLCIs found by solving \eqref{eq: crit}. Min/max coverage evaluated using the expression for the constraint in \eqref{eq: crit}.}
\end{table}

Panel B of Table \ref{tab:coverage} considers $B$-FLCIs centered around the adaptive estimators. A $0$-FLCI centered around the optimally adaptive estimator has a half length of only about $1.54\sigma_U$ (as opposed to the traditional $1.96\sigma_U$ utilized in Panel A) but exhibits worst case coverage of 80\%. Centering around the soft-thresholding estimator yields a slightly longer interval, which improves minimum coverage to 87\%. The third row of Panel B shows the coverage of a $\sigma_O$-FLCI centered around the optimally adaptive estimator, which yields modestly longer CI but lowers worst case coverage to 86\%. Again, centering at the soft-thresholding estimator raises worst case coverage slightly, in this case to 90\%. Finally, we approximate an $\infty$-FLCI by setting $B=9\sigma_O$, which yields very conservative intervals with half-lengths exceeding $2.1\sigma_U$. 

The simplicity and robustness of intervals based upon an adaptive estimator $\pm 1.96\sigma_U$ make them an attractive option. For researchers who seek shorter intervals, the $\sigma_O$-FLCI centered around the soft-thresholding estimator seems to offer a reasonable mix of worst and best case coverage. Notably, all of these options offer substantially higher worst case coverage than pre-testing, which remains widespread in applied research.

\subsection{Adapting to endogeneity \citep{angrist_does_1991}}
Our second example, which is meant to highlight the limits of optimal adaptation, comes from \citet{angrist_does_1991}'s seminal analysis of the returns to schooling using quarter of birth as an instrument for schooling attainment. 
Table~\ref{tab:AK91_estimates} replicates exactly the estimates reported in \citet[][Panel B, Table III]{angrist_does_1991} for men born 1930-39. $Y_U$ gives the Wald-IV estimate of the returns to schooling using an indicator for being born in the first quarter of the year as an instrument for years of schooling completed, while $Y_R$ gives the corresponding OLS estimate. Neither estimator controls for additional covariates.

The first stage relationship between quarter of birth and years of schooling exhibits a z-score of 8.22, suggesting an asymptotic normal approximation to $Y_U$ is likely to be highly accurate. We follow the original study in assuming homoscedasticity, in which case OLS ($Y_R$) is known to be the asymptotically efficient GMM estimator under exogeneity.

\begin{table}[h!]
\caption{\label{tab:AK91_estimates} Estimates of the return to an additional year of schooling.}
\begin{centering}
\begin{tabular}{ccccccccc}
\hline 
&  &  &  &  \multicolumn{3}{c}{\underline{Unconstrained}} & \multicolumn{2}{c}{ \underline{Constrained}}\tabularnewline 
&  &  &Pre-   & Opt. & Soft- & Hard- & Opt. & Soft-\tabularnewline
 & $Y_{U}$ & $Y_{R}$ & test & Adapt & thresh & thresh & Adapt & thresh\tabularnewline
\hline 
\hline 
Estimate & 0.102 & 0.071 & 0.071 & 0.071 & 0.071 & 0.071 & 0.080 & 0.085\tabularnewline
Std Error & (0.0239) & (0.0003) &  &  &  & &  & \tabularnewline \hline
Max Risk & 0\% & $\infty$ & 147\%  & 465\% &  440\% &  521\% & 50\% & 50\%\tabularnewline
Max Regret & 500,145\% & $\infty$ & 21,081\% &505\% &  552\% &  724\%& 17,375\% &  20,579\%\tabularnewline
Threshold &  &  & 1.96 &  & 2.10& 3.34 &  & 0.71\tabularnewline
\hline \hline
\end{tabular}
\par\end{centering}

\footnotesize{Notes: Standard
errors in parentheses computed under homoscedasticity as in original study. Under homoscedasticity, $Y_R$ coincides with GMM. The over-identification test statistic is $T_O=-1.3$. ``Max Risk'' gives the percentage increase in worst case risk over $Y_U$: $(\sup_B R_{\max}(B,\estimator)/\sigma_U^2-1)\times 100$. ``Max regret'' refers to the worst case adaptation regret in percentage terms $(A_{\max}(\mathcal{B},\estimator)-1)\times100$. The relative efficiency of $Y_U$ to $Y_R=Y_{R,GMM}$ is $1-\rho^2 = 0.0002$.%
}
\end{table}

While the IV estimator accounts for endogeneity, it is highly imprecise, with a standard error two orders of magnitude greater than OLS. Consequently, the maximal regret associated with using IV instead of OLS is extremely large, as $Y_U$ is only 0.02\% as efficient as $Y_R$ when exogeneity holds. IV and OLS cannot be statistically distinguished at conventional significance levels, with $T_O\approx-1.3$. The inability to distinguish IV from OLS estimates of the returns to schooling is characteristic not only of the specifications reported in \citet{angrist_does_1991} but of the broader quasi-experimental literature spawned by their landmark study \citep{card1999causal}.

The confluence of extremely large maximal regret for $Y_U$ with a statistically insignificant difference $Y_O$, leads the adaptive estimator, the soft-thresholding estimator, and the pre-test estimator to all coincide with $Y_R$. 
Despite the agreement of the three approaches, the extremely large adaptation regret exhibited by the optimally adaptive estimator suggests it is unlikely to garner consensus in this setting. 
While the adaptive and soft-thresholding estimators
avoid committing to either $Y_U$ or $Y_R$ before observing the data, they
still expose the researcher to more than a 400\% increase in maximal risk over $Y_U$. A skeptic concerned with the potential biases in OLS is therefore unlikely to be willing to rely on such an estimator.

If we instead limit ourselves to a 50\% increase in maximal risk, the adaptive and soft-threshold estimators yield returns to schooling estimates of 0.080 and 0.085 respectively. While the former estimate is a bit closer to OLS than IV, the latter is approximately halfway between the two. The maximal regret of both these estimators is extremely high, reflecting the potential efficiency costs of weighting $Y_U$ so heavily.
These efficiency concerns are likely outweighed in this case by the potential for extremely large biases.

\section{Conclusion}\label{conclusion_sec}

Empiricists routinely encounter robustness-efficiency tradeoffs. The reporting of estimates from different models has emerged as a best practice at leading journals. The methods introduced here provide a scientific means of summarizing what has been learned from such exercises and arriving at a preferred estimate that trades off considerations of bias against variance. 

Computing the adaptive estimators proposed in this paper requires only point estimates, standard errors, and the covariance between estimators, objects that are easily produced by standard statistical packages. As our examples revealed, in many cases the restricted estimator is nearly efficient, implying the relevant covariance can be deduced from the standard errors of the restricted and unrestricted estimators. 

In line with earlier results from \cite{bickel_parametric_1984}, we found that soft-thresholding estimators closely approximate the optimally adaptive estimator in the scalar case, while requiring less effort to compute. An interesting topic for future research is whether similar approximations can be developed for higher dimensional settings where the curse of dimensionality renders direct computation of optimally adaptive estimators infeasible.

\bibliographystyle{chicago}
\bibliography{references_low_dim_adapt,references_empirical,additional_references}

\begin{thebibliography}{}

\bibitem[\protect\citeauthoryear{Angrist and Pischke}{Angrist and
  Pischke}{2009}]{angrist2009mostly}
Angrist, J.~D. and J.-S. Pischke (2009).
\newblock {\em Mostly harmless econometrics: An empiricist's companion}.
\newblock Princeton university press.

\bibitem[\protect\citeauthoryear{Bickel}{Bickel}{1983}]{bickel_minimax_1983}
Bickel, P.~J. (1983).
\newblock Minimax estimation of the mean of a normal distribution subject to
  doing well at a point.
\newblock In M.~H. Rizvi, J.~S. Rustagi, and D.~Siegmund (Eds.), {\em Recent
  {Advances} in {Statistics}}, pp.\  511--528. Academic Press.

\bibitem[\protect\citeauthoryear{Boyd and Vandenberghe}{Boyd and
  Vandenberghe}{2004}]{boyd_convex_2004}
Boyd, S.~P. and L.~Vandenberghe (2004, March).
\newblock {\em Convex {Optimization}}.
\newblock Cambridge University Press.

\bibitem[\protect\citeauthoryear{Dehejia and Wahba}{Dehejia and
  Wahba}{1999}]{dehejia1999causal}
Dehejia, R.~H. and S.~Wahba (1999).
\newblock Causal effects in nonexperimental studies: Reevaluating the
  evaluation of training programs.
\newblock {\em Journal of the American statistical Association\/}~{\em
  94\/}(448), 1053--1062.

\bibitem[\protect\citeauthoryear{Heckman and Hotz}{Heckman and
  Hotz}{1989}]{heckman1989choosing}
Heckman, J.~J. and V.~J. Hotz (1989).
\newblock Choosing among alternative nonexperimental methods for estimating the
  impact of social programs: The case of manpower training.
\newblock {\em Journal of the American statistical Association\/}~{\em
  84\/}(408), 862--874.

\bibitem[\protect\citeauthoryear{Huber, Lechner, and Wunsch}{Huber
  et~al.}{2013}]{huber2013performance}
Huber, M., M.~Lechner, and C.~Wunsch (2013).
\newblock The performance of estimators based on the propensity score.
\newblock {\em Journal of Econometrics\/}~{\em 175\/}(1), 1--21.

\bibitem[\protect\citeauthoryear{Johnstone}{Johnstone}{2019}]{johnstone_gaussian_2019}
Johnstone, I.~M. (2019).
\newblock {\em Gaussian estimation: {Sequence} and wavelet models}.
\newblock Online manuscript available at https://imjohnstone.su.domains/.

\bibitem[\protect\citeauthoryear{LaLonde}{LaLonde}{1986}]{lalonde1986evaluating}
LaLonde, R.~J. (1986).
\newblock Evaluating the econometric evaluations of training programs with
  experimental data.
\newblock {\em The American economic review\/}, 604--620.

\bibitem[\protect\citeauthoryear{Savage}{Savage}{1954}]{savage_foundations_1954}
Savage, L.~J. (1954).
\newblock {\em The {Foundations} of {Statistics}}.
\newblock John Wiley \& Sons.

\end{thebibliography}


\begin{thebibliography}{}

\bibitem[\protect\citeauthoryear{Angrist and Krueger}{Angrist and
  Krueger}{1991}]{angrist_does_1991}
Angrist, J.~D. and A.~B. Krueger (1991).
\newblock Does {Compulsory} {School} {Attendance} {Affect} {Schooling} and
  {Earnings}?
\newblock {\em The Quarterly Journal of Economics\/}~{\em 106\/}(4), 979--1014.

\bibitem[\protect\citeauthoryear{Armstrong, Kline, and Sun}{Armstrong
  et~al.}{2023}]{armstrong2023adapting}
Armstrong, T.~B., P.~Kline, and L.~Sun (2023).
\newblock Adapting to misspecification.
\newblock {\em arXiv preprint arXiv:2305.14265v3\/}.

\bibitem[\protect\citeauthoryear{Armstrong and Koles{\'a}r}{Armstrong and
  Koles{\'a}r}{2021}]{armstrong_sensitivity_2021}
Armstrong, T.~B. and M.~Koles{\'a}r (2021).
\newblock Sensitivity analysis using approximate moment condition models.
\newblock {\em Quantitative Economics\/}~{\em 12\/}(1), 77--108.

\bibitem[\protect\citeauthoryear{Bickel}{Bickel}{1983}]{bickel_minimax_1983}
Bickel, P.~J. (1983).
\newblock Minimax estimation of the mean of a normal distribution subject to
  doing well at a point.
\newblock In M.~H. Rizvi, J.~S. Rustagi, and D.~Siegmund (Eds.), {\em Recent
  {Advances} in {Statistics}}, pp.\  511--528. Academic Press.

\bibitem[\protect\citeauthoryear{Bickel}{Bickel}{1984}]{bickel_parametric_1984}
Bickel, P.~J. (1984, September).
\newblock Parametric {Robustness}: {Small} {Biases} can be {Worthwhile}.
\newblock {\em The Annals of Statistics\/}~{\em 12\/}(3), 864--879.
\newblock Publisher: Institute of Mathematical Statistics.

\bibitem[\protect\citeauthoryear{Cai and Low}{Cai and
  Low}{2005}]{cai_adaptive_2005}
Cai, T.~T. and M.~G. Low (2005, April).
\newblock Adaptive estimation of linear functionals under different performance
  measures.
\newblock {\em Bernoulli\/}~{\em 11\/}(2), 341--358.

\bibitem[\protect\citeauthoryear{Card}{Card}{1999}]{card1999causal}
Card, D. (1999).
\newblock The causal effect of education on earnings.
\newblock {\em Handbook of labor economics\/}~{\em 3}, 1801--1863.

\bibitem[\protect\citeauthoryear{Casella and Strawderman}{Casella and
  Strawderman}{1981}]{casella_estimating_1981}
Casella, G. and W.~E. Strawderman (1981).
\newblock {Estimating a Bounded Normal Mean}.
\newblock {\em The Annals of Statistics\/}~{\em 9\/}(4), 870 -- 878.

\bibitem[\protect\citeauthoryear{Chamberlain}{Chamberlain}{2000}]{chamberlain_econometric_2000}
Chamberlain, G. (2000, November).
\newblock Econometric applications of maxmin expected utility.
\newblock {\em Journal of Applied Econometrics\/}~{\em 15\/}(6), 625--644.

\bibitem[\protect\citeauthoryear{de~Chaisemartin and
  D'Haultf{\oe}uille}{de~Chaisemartin and
  D'Haultf{\oe}uille}{2020a}]{de_chaisemartin_empirical_2020}
de~Chaisemartin, C. and X.~D'Haultf{\oe}uille (2020a, June).
\newblock Empirical {MSE} {Minimization} to {Estimate} a {Scalar} {Parameter}.

\bibitem[\protect\citeauthoryear{de~Chaisemartin and
  D'Haultf{\oe}uille}{de~Chaisemartin and
  D'Haultf{\oe}uille}{2020b}]{de_chaisemartin_two-way_2020}
de~Chaisemartin, C. and X.~D'Haultf{\oe}uille (2020b, September).
\newblock Two-{Way} {Fixed} {Effects} {Estimators} with {Heterogeneous}
  {Treatment} {Effects}.
\newblock {\em American Economic Review\/}~{\em 110\/}(9), 2964--2996.

\bibitem[\protect\citeauthoryear{Donoho}{Donoho}{1994}]{donoho_statistical_1994}
Donoho, D.~L. (1994, March).
\newblock Statistical {Estimation} and {Optimal} {Recovery}.
\newblock {\em The Annals of Statistics\/}~{\em 22\/}(1), 238--270.

\bibitem[\protect\citeauthoryear{Gentzkow, Shapiro, and Sinkinson}{Gentzkow
  et~al.}{2011}]{gentzkow_effect_2011}
Gentzkow, M., J.~M. Shapiro, and M.~Sinkinson (2011, December).
\newblock The {Effect} of {Newspaper} {Entry} and {Exit} on {Electoral}
  {Politics}.
\newblock {\em American Economic Review\/}~{\em 101\/}(7), 2980--3018.

\bibitem[\protect\citeauthoryear{Hausman}{Hausman}{1978}]{hausman_specification_1978}
Hausman, J.~A. (1978).
\newblock Specification {Tests} in {Econometrics}.
\newblock {\em Econometrica\/}~{\em 46\/}(6), 1251--1271.

\bibitem[\protect\citeauthoryear{Hodges and Lehmann}{Hodges and
  Lehmann}{1952}]{hodges_use_1952}
Hodges, J.~L. and E.~L. Lehmann (1952).
\newblock The use of {Previous} {Experience} in {Reaching} {Statistical}
  {Decisions}.
\newblock {\em The Annals of Mathematical Statistics\/}~{\em 23\/}(3),
  396--407.

\bibitem[\protect\citeauthoryear{Johnstone}{Johnstone}{2019}]{johnstone_gaussian_2019}
Johnstone, I.~M. (2019).
\newblock {\em Gaussian estimation: {Sequence} and wavelet models}.
\newblock Online manuscript available at https://imjohnstone.su.domains/.

\bibitem[\protect\citeauthoryear{LaLonde}{LaLonde}{1986}]{lalonde1986evaluating}
LaLonde, R.~J. (1986).
\newblock Evaluating the econometric evaluations of training programs with
  experimental data.
\newblock {\em The American economic review\/}, 604--620.

\bibitem[\protect\citeauthoryear{Leeb and Pötscher}{Leeb and
  Pötscher}{2005}]{leeb_model_2005}
Leeb, H. and B.~M. Pötscher (2005).
\newblock Model selection and inference: {Facts} and fiction.
\newblock {\em Econometric Theory\/}~{\em 21\/}(01), 21--59.

\bibitem[\protect\citeauthoryear{Lehmann and Casella}{Lehmann and
  Casella}{1998}]{lehmann_theory_1998}
Lehmann, E.~L. and G.~Casella (1998).
\newblock {\em Theory of {Point} {Estimation}\/} (2nd edition ed.).
\newblock New York: Springer.

\bibitem[\protect\citeauthoryear{Magnus}{Magnus}{2002}]{Magnus_Estimation_2002}
Magnus, J.~R. (2002).
\newblock Estimation of the mean of a univariate normal distribution with known
  variance.
\newblock {\em The Econometrics Journal\/}~{\em 5\/}(1), 225--236.

\bibitem[\protect\citeauthoryear{Magnus and Durbin}{Magnus and
  Durbin}{1999}]{magnus_estimation_1999}
Magnus, J.~R. and J.~Durbin (1999).
\newblock Estimation of {Regression} {Coefficients} of {Interest} when {Other}
  {Regression} {Coefficients} are of no {Interest}.
\newblock {\em Econometrica\/}~{\em 67\/}(3), 639--643.

\bibitem[\protect\citeauthoryear{Tibshirani}{Tibshirani}{1996}]{tibshirani1996regression}
Tibshirani, R. (1996).
\newblock Regression shrinkage and selection via the lasso.
\newblock {\em Journal of the Royal Statistical Society Series B: Statistical
  Methodology\/}~{\em 58\/}(1), 267--288.

\bibitem[\protect\citeauthoryear{Tsybakov}{Tsybakov}{1998}]{tsybakov_pointwise_1998}
Tsybakov, A.~B. (1998, December).
\newblock Pointwise and sup-norm sharp adaptive estimation of functions on the
  {Sobolev} classes.
\newblock {\em The Annals of Statistics\/}~{\em 26\/}(6), 2420--2469.

\end{thebibliography}

\appendix
\renewcommand{\thesection}{Appendix \Alph{section}}
\renewcommand{\thesubsection}{\Alph{section}.\arabic{subsection}}
\setcounter{theorem}{0}
\renewcommand{\thetheorem}{\Alph{section}.\arabic{theorem}}
\setcounter{lemma}{0}
\renewcommand{\thelemma}{\Alph{section}.\arabic{lemma}}
\setcounter{table}{0}
\renewcommand{\thetable}{A\arabic{table}}
\setcounter{figure}{0}
\renewcommand{\thefigure}{A\arabic{figure}}

\section{Group decision making interpretation}\label{appdx:group}

This appendix develops a stylized model of group decision making inspired by \citeOnline{savage_foundations_1954}'s arguments regarding the ability of minimax decisions to foster consensus among individuals with heterogeneous beliefs. Extending these arguments, we illustrate how adaptive decisions can serve to foster consensus across groups of individuals with different sets of beliefs.

\subsection{Consensus in a single committee}
Suppose there is a committee charged with deciding on the value of a parameter $\theta$ based on the evidence $(Y_U,Y_R)$. The committee is comprised of members with heterogeneous beliefs over $(\theta,b)$ that include all priors supported on the set $\mathcal{C}_{B}$. The committee chair, who we will call the \emph{$B$-chair}, offers a take it or leave it proposal that her committee agree on the estimator $\estimator$ in exchange for the provision of a public good providing payoff $G$ to each member of the committee. 

If the committee agrees to the proposal, the $B$-chair earns a payoff $K-C(G)$, where $K$ is the value of consensus and $C(\cdot)$ is an increasing cost function. If some member of the committee does not agree to the proposal, the chair and all committee members receive payoff zero. The $B$-chair therefore seeks an estimator $\estimator$ allowing payment of the smallest $G$ that ensures consensus.

A committee member who is certain of the parameters $\left(\theta,b\right)$ will accept the chair's offer if and only if $R\left(\theta,b,\estimator\right)\leq G$. However, the committee member with the most pessimistic beliefs %
will require a public goods provision level of at least $R_{\max}\left(B,\estimator\right)$ to agree to the offer.  To achieve consensus at minimal cost, the $B$-chair can propose the $B$-minimax estimator, which requires public goods provision level $R^{*}\left(B\right)$ to achieve consensus. 

The $B$-chair will be willing to provide this level of public goods if and only if $K \geq C(R^{*}\left(B\right))$, in which case consensus ensues. If this condition does not hold, the chair deems the $B$-minimax estimator too costly to implement and consensus is not achieved. 

\subsection{Consensus among committees}

Now suppose there is a collection $\mathcal{B}$ of committees,
each of which must decide on the parameter $\theta$ using $(Y_U,Y_R)$. This collection is led by a \emph{chair of chairs} (CoC) who would like for the $B$-chairs to agree on a common estimator $\estimator$. Suppose also that $K>\sup_{B\in\mathcal{B}}C(R^{*}\left(B\right))$, so that each $B$-chair would privately prefer the $B$-minimax estimator. The CoC has a fixed budget $F>0$ that can be used to provide a public good $\tilde G$ enjoyed by all chairs. The CoC makes provision of $\tilde G$ contingent on the agreement of all $B$-chairs to use $\estimator$: if they fail to reach consensus, the public good is not provided. The cost to the CoC of providing public goods level $\tilde G$ is $\tilde C (\tilde G)$, where $\tilde C(\cdot)$ is monotone increasing.

By the arguments above, each $B$-chair must pay a cost $C( R_{\max}\left(B,\estimator\right))$ to secure consensus regarding the CoC's proposed $\estimator$, leaving her with payoff $K-C( R_{\max}\left(B,\estimator\right))$. However, each chair can also defy the CoC and propose the $B$-minimax estimator to her committee, yielding payoff $K- C(R^{*}\left(B\right))$. Hence, to compel a $B$-chair to use $\estimator$, the CoC must offer a public good providing utility of at least $\Delta_B(\estimator) = C( R_{\max}\left(B,\estimator\right))-C(R^{*}\left(B\right))$. To minimize costs, the CoC sets $\tilde G=\sup_{B\in\mathcal{B}} \Delta_B(\estimator)$, which is the level required to appease the most reticent $B$-chair. 

Different functional forms for the cost function $C$ yield different notions of adaptation. To motivate the formulation in \eqref{Aopt_def_eq}, we assume $C(G)\propto\ln G$, which implies chairs produce the public good using a technology that is exponential in costs. With this choice of $C(\cdot)$, the CoC's problem is to find a $\estimator$ that minimizes $\sup_{B\in \mathcal{B}} \ln \left(R_{\max}\left(B,\estimator\right) / R^{*}\left(B\right)\right)=\sup_{B\in \mathcal{B}} \ln A(B,\estimator)$.
The CoC will therefore propose the optimally adaptive estimator $\estimator^{*}$, which yields $\sup_{B\in\mathcal{B}} \Delta_B(\estimator^*)\propto\ln \Aopt(\mathcal{B})$. When $\tilde C(\ln \Aopt(\mathcal{B}))>F$, the CoC balks at the cost of implementing $\estimator^*$ and consensus fails.

\subsection{Discussion}
The prospects for achieving consensus are governed by the loss of efficiency under adaptation. When $\Aopt(\mathcal{B})$ is small, consensus is likely, as the adaptive estimator will yield maximal risk similar to each committee's perceived $B$-minimax risk. When $\Aopt(\mathcal{B})$ is large, however, consensus is unlikely to emerge, as the optimally adaptive estimator will be perceived as excessively risky by committees with extreme beliefs. 

\section{Details and proofs} %

\subsection{Details for Theorem \ref{main_example_thm_main_text} and extensions}\label{details_and_proofs_appendix}

Consider a slight extension of \eqref{main_example_eq} with $p$ misspecified estimates, leading to a $p\times 1$ vector $Y_O$:
\begin{align}\label{Y_U_Y_O_multivariate_def}
  Y=\left(
  \begin{array}{c}
    \underset{1\times 1}{Y_U}  \\
    \underset{p\times 1}{Y_O}
  \end{array}
  \right)
  \sim N\left(
  \left( 
  \begin{array}{c}
    \underset{1\times 1}{\theta}  \\
    \underset{p\times 1}{b}
  \end{array}
  \right),
  \Sigma
  \right),
  \quad
  \Sigma
  =\left(
  \begin{array}{cc}
    \underset{1\times 1}{\Sigma_U} & \underset{1\times p}{\Sigma_{UO}}  \\
    \underset{p\times 1}{\Sigma_{UO}'} & \underset{p\times p}{\Sigma_O}
  \end{array}
  \right).
\end{align}
In our main example, $p=1$ and $\rho=\Sigma_{UO}/\sqrt{\Sigma_U\Sigma_O}$.
We are interested in the minimax risk of an estimator $\delta:\mathbb{R}^{p+1}\to
\mathbb{R}$ under the loss function $L(\theta,b,d)$, which may incorporate a scaling to turn the minimax problem into a problem of finding an optimally adaptive estimator, following Lemma \ref{adaptation_as_weighted_minimax_lemma}.
The loss function satisfies the invariance condition
\begin{align}
  \label{eq:loss_invariance}
  L(\theta+t,b,d+t)=L(\theta,b,d)\quad\text{all }t\in\mathbb{R}.
\end{align}
We consider minimax estimation over a parameter space $\mathbb{R}\times
\mathcal{C}$:
\begin{align}\label{eq:minimax_problem_main_example_general}
  \inf_{\delta}\sup_{\theta\in\mathbb{R}, b\in\mathcal{C}} R(\theta,b,\delta).
\end{align}

\begin{theorem}\label{general_misspecification_theorem}
  Suppose that the loss function $L(\theta,b,d)$ is convex in $d$ and that
  (\ref{eq:loss_invariance}) holds.
  Then the minimax risk (\ref{eq:minimax_problem_main_example_general}) is
  given by
  \begin{align}\label{eq:general_minimax_b}
    &\inf_{\bar\delta}\sup_{b\in\mathcal{C}} E_{0,b}[\tilde L(b,\bar \delta(Y_O)-\Sigma_{UO}\Sigma_O^{-1}b)]  \\
    &=\sup_{\pi \text{ supported on }\mathcal{C}}\inf_{\bar\delta} \int E_{0,b}[\tilde L(b,\bar \delta(Y_O)-\Sigma_{UO}\Sigma_O^{-1}b)]\, d\pi(b)  \nonumber
  \end{align}
  where
  $\tilde L(b,t)=E L(0,b,t+V)$
  with $V\sim N(0,\Sigma_U-\Sigma_{UO}\Sigma_O^{-1}\Sigma_{UO}')$.
  Furthermore, the minimax problem (\ref{eq:minimax_problem_main_example_general}) has at least one
  solution, and any solution $\delta^*$ takes the form
  \begin{align*}
    \delta^*(Y_U,Y_O)=Y_U-\Sigma_{UO}\Sigma_{O}^{-1}Y_O + \bar \delta^*(Y_O)
  \end{align*}
  where $\bar \delta^*$ achieves the infimum in (\ref{eq:general_minimax_b}).
\end{theorem}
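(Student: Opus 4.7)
The strategy is to use translation invariance to collapse the $(\theta,b)$-minimax problem into one in $b$ alone, then invoke a minimax theorem on the reduced problem. First I would observe that the group $\mathbb{R}$ acts on the decision problem via $(\theta, d) \mapsto (\theta + t, d + t)$: the parameter space $\mathbb{R}\times\mathcal{C}$ and the loss \eqref{eq:loss_invariance} are invariant, the marginal distribution of $Y_O$ is unchanged, and $Y_U$ shifts by $t$. Since $\mathbb{R}$ is amenable, the Hunt-Stein theorem (combined with convexity of $L$ in $d$ to pass from randomized to deterministic equivariant rules) implies that some minimax rule may be taken to be equivariant, i.e., to satisfy $\delta(Y_U + t, Y_O) = \delta(Y_U, Y_O) + t$; this forces $\delta(Y_U, Y_O) = Y_U + g(Y_O)$ for some measurable $g$. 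A Rao-Blackwell style averaging argument along orbits further yields that any minimax rule agrees almost surely with one of this form.

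Next I would rewrite the risk of an equivariant rule using the Gaussian conditional decomposition $Y_U = \theta + \Sigma_{UO}\Sigma_O^{-1}(Y_O - b) + V$, where $V \sim N(0, \Sigma_U - \Sigma_{UO}\Sigma_O^{-1}\Sigma_{UO}')$ is independent of $Y_O$ under $P_{\theta,b}$. Invoking \eqref{eq:loss_invariance} to replace $L(\theta, b, \cdot)$ with $L(0, b, \cdot - \theta)$ and then integrating out $V$ gives $R(\theta, b, \delta) = E_{0, b}\bigl[\tilde L\bigl(b, g(Y_O) + \Sigma_{UO}\Sigma_O^{-1}(Y_O - b)\bigr)\bigr]$. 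The reparameterization $\bar\delta(y) = g(y) + \Sigma_{UO}\Sigma_O^{-1} y$ converts this into $E_{0, b}[\tilde L(b, \bar\delta(Y_O) - \Sigma_{UO}\Sigma_O^{-1} b)]$, which is exactly the expression appearing on the left of \eqref{eq:general_minimax_b}; unwinding the substitution also delivers the asserted form $\delta^*(Y_U, Y_O) = Y_U - \Sigma_{UO}\Sigma_O^{-1} Y_O + \bar\delta^*(Y_O)$ of the minimax rule.

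For the saddle-point equality with priors and existence of $\bar\delta^*$, convexity of $L$ in $d$ and Jensen's inequality make $\tilde L(b, \cdot)$ convex, so the Bayes risk $\int E_{0,b}[\tilde L(b, \bar\delta(Y_O) - \Sigma_{UO}\Sigma_O^{-1} b)]\, d\pi(b)$ is convex in $\bar\delta$ and linear (hence concave) in $\pi$. I would then apply a statistical minimax theorem of Le Cam / Sion type after equipping the prior space with the weak topology and the decision-rule space with a suitable convex, sequentially compact closure (e.g., the class of behavioral rules, which collapses to deterministic rules under convex loss); existence of $\bar\delta^*$ would follow by extracting a weak limit of Bayes rules against approximating least favorable priors. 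The main obstacle is making these compactness and limit arguments rigorous when $\mathcal{C}$ is non-compact (as when $\mathcal{C} = [-B, B]$ with $B = \infty$, or $\mathcal{C} = \mathbb{R}^p$): the standard remedy is to exhaust $\mathcal{C}$ by compact subsets $\mathcal{C}_n$, establish the saddle-point on each, and pass to the limit using tightness of the associated least favorable priors, which must be justified by a separate finiteness argument on the minimax risk.
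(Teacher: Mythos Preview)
Your plan is correct and follows essentially the same route as the paper: reduce by translation invariance and Hunt--Stein to equivariant rules $\delta(Y_U,Y_O)=Y_U+g(Y_O)$, apply the Gaussian regression decomposition $Y_U-\theta=\Sigma_{UO}\Sigma_O^{-1}(Y_O-b)+V$ to obtain the $\tilde L$ representation, and reparameterize to $\bar\delta$. The only notable difference is in the technical back end: the paper establishes the saddle-point equality and existence by invoking Johnstone's minimax theorem and the compactness of the space of \emph{randomized decision rules} (in the weak topology of \citet{johnstone_gaussian_2019}), so no assumption on $\mathcal{C}$ is needed and your proposed exhaustion of $\mathcal{C}$ by compacta is unnecessary---compactness lives on the decision side, not the parameter side.
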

\begin{proof}
  The minimax problem (\ref{eq:minimax_problem_main_example_general}) is
  invariant (in the sense of pp. 159-161 of \citet{lehmann_theory_1998})
  to the transformations $(\theta,b)\mapsto (\theta+t,b)$ and the
  associated transformation of the data $(Y_U,Y_O)\mapsto (Y_U+t,Y_O)$, where $t$
  varies over $\mathbb{R}$.  Equivariant estimators for this group of
  transformations are those that satisfy $\delta(y_U+t,y_O)=\delta(y_U,y_O)+t$,
  which is equivalent to imposing that the estimator takes the form
  $\delta(y_U,y_O)=\delta(0,y_O)+y_U$.
  The risk of such an estimator does not depend on $\theta$ and is given by
  \begin{align*}
    R(\theta,b,\delta)=R(0,b,\delta)
    =E_{0,b}\left[ L(0,b,\delta(0,Y_O)+Y_U) \right].
  \end{align*}
  Using the decomposition $Y_U-\theta=\Sigma_{UO}\Sigma^{-1}(Y_O-b)+V$
  where
  $V\sim N(0,\Sigma_U-\Sigma_{UO}\Sigma_O^{-1}\Sigma_{UO}')$ is independent of
  $Y_O$, the above display is equal to
  \begin{align*}
    E_{0,b}\left[ L(0,b,\delta(0,Y_O)+\Sigma_{UO}\Sigma_O^{-1}(Y_O-b)+V) \right]
    =E_{0,b} \tilde L(b,\delta(0,Y_O)+\Sigma_{UO}\Sigma_O^{-1}(Y_O-b)).
  \end{align*}
  Letting $\bar\delta(Y_O)=\delta(0,Y_O)+\Sigma_{UO}\Sigma_O^{-1}Y_O$, the above
  display is equal to $E_{0,b}[\tilde L(b,\bar
  \delta(Y_O)-\Sigma_{UO}\Sigma_O^{-1}b)]$.  Thus, if an estimator $\bar
  \delta^*$ achieves the infimum in (\ref{eq:general_minimax_b}), the
  corresponding estimator
  $\delta(Y_U,Y_O)=\delta(0,Y_O)+Y_U=\bar\delta^*(Y_O)-\Sigma_{UO}\Sigma_O^{-1}Y_O+Y_U$
  will be minimax among equivariant estimators for
  (\ref{eq:minimax_problem_main_example_general}).
  It will then follow from the Hunt-Stein Theorem \citep[][Theorem
  9.2]{lehmann_theory_1998} that this minimax equivariant estimator is minimax
  among all estimators, that any other minimax estimator takes this form and that the minimax risk is given by the first line of
  (\ref{eq:general_minimax_b}).

  It remains to show that the infimum in the first line of
  (\ref{eq:general_minimax_b}) is achieved, and that the equality claimed in
  (\ref{eq:general_minimax_b}) holds.  The equality in
  (\ref{eq:general_minimax_b}) follows from the minimax theorem, as stated in
  Theorem A.5 in \citet{johnstone_gaussian_2019} (note that $d\mapsto \tilde
  L(b,d-\Sigma_{UO}\Sigma_O^{-1}b)$ is convex since it is an integral of the
  convex functions $d\mapsto L(0,b,d-\Sigma_{UO}\Sigma_O^{-1}b+v)$ over the
  index $v$).
  The existence of an estimator $\bar \delta^*$ that achieves the infimum
  in the first line of (\ref{eq:general_minimax_b}) follows by noting that the
  set of decision rules (allowing for randomized decision rules) is compact in the topology defined on p. 405
  of \citet{johnstone_gaussian_2019}, and the risk
  $E_{0,b}[\tilde L(b,\bar\delta(Y_O)-\Sigma_{UO}\Sigma_{O}^{-1}b)]$ is continuous in
  $\bar\delta$ under this topology.  As noted immediately after Theorem A.1 in
  \citet{johnstone_gaussian_2019}, this implies that
  $\bar\delta\mapsto \sup_{b} E_{0,b}[\tilde
  L(b,\bar\delta(Y_O)-\Sigma_{UO}\Sigma_{O}^{-1}b)]$ is a lower semicontinuous function on
  the compact set of possibly randomized decision rules under this topology, which means
  that there exists a decision rule that achieves the minimum.  From this
  possibly randomized decision rule, we can construct a nonrandomized decision
  rule that achieves the minimum by constructing a nonrandomized decision rule
  with uniformly smaller risk by averaging, following
  \citet[][p. 404]{johnstone_gaussian_2019}.
\end{proof}

We now prove Theorem \ref{main_example_thm_main_text} by specializing this result.  
Note that $\Sigma_U$ and $\Sigma_O$ correspond to $\sigma_U^2$ and $\sigma_O^2$ in the main text respectively, and that $\rho$ in the main text is given by $\Sigma_{UO}/\sqrt{\Sigma_{U}\Sigma_{O}}$.
First, we derive the minimax estimator and minimax risk in
(\ref{eq:minimax_problem_main_example_general}) when
$L(\theta,b,d)=(\theta-d)^2$ and $\mathcal{C}=[-B,B]$.  We have
$\tilde L(b,t)=E(t+V)^2=t^2+\Sigma_{U}-\Sigma_{UO}^2/\Sigma_{O}$.  Thus,
(\ref{eq:general_minimax_b}) becomes
\begin{align*}
  &\inf_{\bar\bestimator}\sup_{b\in[-B,B]} E_{0,b}\left[\left(\bar\bestimator(Y_O)-\frac{\Sigma_{UO}}{\Sigma_{O}}b\right)^2\right] + \Sigma_{U}-\frac{\Sigma_{UO}^2}{\Sigma_{O}}  \\
  &=\inf_{\bar\bestimator}\sup_{b\in[-B,B]} \frac{\Sigma_{UO}^2}{\Sigma_O}E_{0,b}\left[\left(\frac{\sqrt{\Sigma_O}}{\Sigma_{UO}}\bar \delta(Y_O)-\frac{b}{\sqrt{\Sigma_{O}}}\right)^2\right] + \Sigma_{U}-\frac{\Sigma_{UO}^2}{\Sigma_{O}}.
\end{align*}
This is equivalent to observing $T_O=Y_O/\sqrt{\Sigma_O}\sim N(t,1)$ and
finding the minimax estimator of $t$ under the constraint
$|t|\le B/\sqrt{\Sigma_O}$.  Letting $\deltaBNM(T_O;B)$ denote
the solution to this minimax problem and letting $\rBNM(B/\sqrt{\Sigma_O})$
denote the value of this minimax problem, the optimal $\bar \delta$ in the above
display satisfies $\frac{\sqrt{\Sigma_O}}{\Sigma_{UO}}\bar
\delta(Y_O)=\deltaBNM(Y_O/\sqrt{\Sigma_O};B)$, which gives the
value of the above display as
\begin{align}\label{eq:B_minimax_risk_univariate_appendix}
  \frac{\Sigma_{UO}^2}{\Sigma_O}\rBNM(B/\sqrt{\Sigma_O}) + \Sigma_{U}-\frac{\Sigma_{UO}^2}{\Sigma_{O}}
\end{align}
and the $B$-minimax estimator as
\begin{align}\label{eq:B_minimax_estimator_univariate_appendix}
  \frac{\Sigma_{UO}}{\sqrt{\Sigma_O}}\deltaBNM(Y_O/\sqrt{\Sigma_O};B)
  +Y_U-\frac{\Sigma_{UO}}{\Sigma_{O}}Y_O.
\end{align}
Substituting $T_O=Y_O/\sqrt{\Sigma_O}$ and the notation $\rho=\Sigma_{UO}/\sqrt{\Sigma_U\Sigma_O}$, $\sigma_U^2=\Sigma_U$ and $\sigma_O^2=\Sigma_O$ used in the
main text gives (\ref{Bminimax_main_example_eq}) and (\ref{Rminimax_main_example_eq}).
This proves part (\ref{main_example_thm_minimax_estimator}) of Theorem \ref{main_example_thm_main_text}.

To find the optimally adaptive estimator and loss of efficiency under adaptation
in our main example, we apply Lemma \ref{adaptation_as_weighted_minimax_lemma} with $\omega(\theta,b)=\Rminimax(|b|)^{-1}$, with
$\Rminimax(B)$ given by (\ref{eq:B_minimax_risk_univariate_appendix}).  This
leads to the minimax problem (\ref{eq:minimax_problem_main_example_general})
with $\mathcal{C}=\mathbb{R}$ and
$L(\theta,b,d)=\Rminimax(|b|)^{-1}(\theta-d)^2$.  The function $\tilde L$
in Theorem \ref{general_misspecification_theorem} is then given by
$\tilde L(b,t)=E\Rminimax(|b|)^{-1}(t+V)^2
=\Rminimax(|b|)^{-1}(t^2+\Sigma_U-\Sigma_{UO}^2/\Sigma_O)$,
which gives (\ref{eq:general_minimax_b}) as
\begin{align*}
  &\inf_{\bar \delta}\sup_{b\in\mathbb{R}} \frac{E_{0,b}\left[\left(\bar\delta(Y_O)-\frac{\Sigma_{UO}}{\Sigma_O}b\right)^2\right]+\Sigma_U-\frac{\Sigma_{UO}^2}{\Sigma_O}}{\frac{\Sigma_{UO}^2}{\Sigma_O}\rBNM(|b|/\sqrt{\Sigma_O}) + \Sigma_{U}-\frac{\Sigma_{UO}^2}{\Sigma_{O}}}  %
  =\inf_{\bar \delta}\sup_{b\in\mathbb{R}} \frac{E_{0,b}\left[\left(\frac{\sqrt{\Sigma_O}}{\Sigma_{UO}}\bar\delta(Y_O)-\frac{b}{\sqrt{\Sigma_O}}\right)^2\right]+\rho^{-2}-1}{\rBNM(|b|/\sqrt{\Sigma_O}) + \rho^{-2}-1}.
\end{align*}
This proves part (\ref{main_example_thm_lea}) of Theorem \ref{main_example_thm_main_text}.
The above display is minimized by $\bar\delta$ satisfying
$\frac{\sqrt{\Sigma_O}}{\Sigma_{UO}}\bar\delta(Y_O)=\tildedeltaadapt(Y_O/\sqrt{\Sigma_O};\rho^2)$
where $\tildedeltaadapt(T;\rho^2)$ 
minimizes (\ref{adaptation_objective_main_example_eq}) in the main text.
By Theorem \ref{general_misspecification_theorem}, the optimally adaptive
estimator is given by
\begin{align}\label{eq:adaptive_estimator_main_example_appendix}
  \frac{\Sigma_{UO}}{\sqrt{\Sigma_O}}\tildedeltaadapt(Y_O/\sqrt{\Sigma};\rho^2)
  +Y_U-\frac{\Sigma_{UO}}{\Sigma_O}Y_O
  =\rho\sqrt{\Sigma_U}\tildedeltaadapt(T_O;\rho^2)
  +Y_U- \rho\sqrt{\Sigma_U}T_O.
\end{align}
This proves the part (\ref{main_example_thm_adaptive_estimator}) of Theorem \ref{main_example_thm_main_text}.

\subsection{Lasso interpretation of soft-thresholding}\label{lasso_appendix}

To connect the soft-thresholding estimator to lasso, consider a dataset with two observations comprised of the realizations of $Y_U$ and $Y_R$, and a linear model relating these estimates to a constant and an indicator for whether the observation is from the restricted specification. Letting $y_1=Y_U$, $d_1=0$, $y_2=Y_R$, and $d_2=1$, the model can be written
\begin{align*}
  y_i &= \beta + d_i \gamma + u_i,
\end{align*}
where $\beta=\theta$, $\gamma=b$. Now consider an $\ell_1$-penalized GLS regression estimator 
\begin{align*}
  (\hat\beta_{lasso,\lambda}',\hat\gamma_{lasso,\lambda})
  =\arg\min_{\beta,\gamma} \frac{1}{2} \|\tilde y-\tilde X\beta-\tilde z\gamma\|_2^2 + \lambda |\gamma|,
\end{align*}
where $\tilde y$, $\tilde z$,  and $\tilde X$ are transformed so that the observations are orthogonalized and standardized. 

\begin{theorem}\label{lasso_soft_threshold_theorem}
  Suppose that the lasso penalty $\lambda$ is set to equal to the adaptive soft-threshold (divided by $ \sigma_O$). Then the lasso regression coefficient estimator 
  \begin{align*}
  \hat\beta_{lasso,\lambda} = Y_{R,GMM} + \rho \sigma_U \delta_{S,\lambda \sigma_O}(T_O).
\end{align*}
 is the same as the soft-thresholding nearly adaptive estimator.
\end{theorem}
\begin{proof}
We first prove a general representation of the lasso regression coefficient estimator as a soft-thresholding estimator, and then we specialize the result to our setting. Consider a penalized regression estimator
\begin{align}\label{regularized_regression_eq}
  (\hat\beta_{\Pen,\lambda}',\hat\gamma_{\Pen,\lambda})
  =\arg\min_{\beta,\gamma} \frac{1}{2} \|y-X\beta-z\gamma\|_2^2 + \lambda \Pen(\gamma)
\end{align}
where $y$ and $Z$ are $n\times 1$ vectors and $X$ is a $n\times k$ matrix.
We use $P_X=X(X'X)^{-1}X'$ and $M_X=I-P_X$ to denote the projection onto the
column space of $X$ and onto its orthogonal complement. We are interested in the scalar parameter $\ell'\beta$ for some known vector
$\ell$ and wish to compare the estimator
$\ell'\hat\beta_{\Pen,\lambda}$ to estimators that are optimally adaptive or
constrained optimally adaptive for $\ell'\beta$ under a restriction on the bias
of the short regression estimator $\ell'\hat\beta_{\operatorname{short}}$ where
$\hat\beta_{\operatorname{short}}=(X'X)^{-1}X'y$.  

Note that standard regression algebra
immediately implies that $\hat\beta_{\Pen,\lambda}$ can be obtained by
regressing $y-z\hat\gamma_{\Pen,\lambda}$ on $X$, which gives
\begin{align}\label{beta_penalized_regression_as_a_function_of_gamma_eq}
  \ell'\hat\beta_{\Pen,\lambda}
  =\ell'(X'X)^{-1}X'(y-z\hat\gamma_{\Pen,\lambda})
  =\ell'\hat\beta_{\operatorname{short}} - \ell'(X'X)^{-1}X'z\hat\gamma_{\Pen,\lambda}.
\end{align}
To derive $\hat\gamma_{\Pen,\lambda}$, note that the objective in
(\ref{regularized_regression_eq}) can be written as
\begin{align*}
  \frac{1}{2}\|M_Xy-M_X z\gamma\|_2^2 + \frac{1}{2}\|P_X(y-z\gamma) - X\beta\|_2^2 + \lambda  \Pen(\gamma).
\end{align*}
Since the second term can be set to zero for any value of $\gamma$ by taking
$\beta=(X'X)^{-1}X'(y-z\gamma)$, and $\beta$ does not show up in the remaining
terms, it follows that this term can be ignored when optimizing
$\hat\gamma_{\Pen,\lambda}$.  Thus, $\hat\gamma_{\Pen,\lambda}$ minimizes
\begin{align*}
  \frac{1}{2}\|M_Xy-M_X z\gamma\|_2^2 + \lambda \Pen(\gamma).
\end{align*}

Consider the lasso case where $\Pen(\gamma)=|\gamma|$.
Taking FOCs gives
\begin{align*}
  &- z'M_X(y-z\gamma) + \lambda \operatorname{sign}(\gamma) = 0  \\
  &\Longleftrightarrow
  \gamma =\frac{z'M_Xy}{z'M_X z} - \frac{\lambda}{z'M_Xz} \operatorname{sign}(\gamma)
  =\hat\gamma_{\operatorname{long}} - \frac{\lambda}{z'M_Xz} \operatorname{sign}(\gamma)
\end{align*}
where $\operatorname{sign}(\gamma)$ is the set-valued function equal to the sign
of $\gamma$ when $\gamma$ is nonzero, and equal to $[-1,1]$ when $\gamma=0$.
There are three cases to consider.  First, if
$\hat\gamma_{\operatorname{long}}>\lambda/z'M_Xz$, then
$\operatorname{sign}(\gamma)=1$ so that
$\gamma=\hat\gamma_{\operatorname{long}}-\lambda/z'M_Xz$.  Second, if
$\hat\gamma_{\operatorname{long}}<-\lambda/z'M_Xz$, then
$\operatorname{sign}(\gamma)=-1$ so that
$\gamma=\hat\gamma_{\operatorname{long}}+\lambda/z'M_Xz$.  Finally, if
$\hat\gamma_{\operatorname{long}}\in [-\lambda/z'M_Xz,\lambda/z'M_Xz]$, then we
will run into a contradiction if $\gamma\ne 0$: $\gamma>0$ would imply
$\operatorname{sign}(\gamma)=1$ which would give
$\gamma=\hat\gamma_{\operatorname{long}}-\lambda/z'M_Xz\le 0$
and $\gamma<0$ would imply $\operatorname{sign}(\gamma)=-1$ which would give
$\gamma=\hat\gamma_{\operatorname{long}}+\lambda/z'M_Xz\ge 0$.  Thus, if
$\hat\gamma_{\operatorname{long}}\in [-\lambda/z'M_Xz,\lambda/z'M_Xz]$, we must
have $\gamma=0$.  It follows that the solution to the optimization problem is
given by
\begin{align*}
  \hat\gamma_{\Pen,\gamma}
  &=\begin{cases}
     0 & \text{when } |\hat\gamma_{\operatorname{long}}|\le |\lambda/z'M_Xz|    \\
     \hat\gamma_{\operatorname{long}} - \lambda/z'M_Xz  & \text{when } \hat\gamma_{\operatorname{long}}>\lambda/z'M_Xz  \\
     \hat\gamma_{\operatorname{long}} + \lambda/z'M_Xz  & \text{when } \hat\gamma_{\operatorname{long}}<\lambda/z'M_Xz  \\
   \end{cases}
\end{align*}
This is the soft-threshold estimator $\delta_{S,\lambda/z'M_Xz}(\hat\gamma_{\operatorname{long}})$ with cutoff $\lambda/z'M_Xz$.
Plugging this into (\ref{beta_penalized_regression_as_a_function_of_gamma_eq})
gives the penalized regression estimate for our parameter of interest as
\begin{align*}
  \ell'\hat\beta_{\Pen,\lambda}
  =\ell'\hat\beta_{\operatorname{short}} - \ell'(X'X)^{-1}X'z \cdot \delta_{S,\lambda/z'M_Xz}(\hat\gamma_{\operatorname{long}})
\end{align*}

Now apply the GLS transformation to the data as follows
\begin{align*}
  \tilde y = \left(
  \begin{array}{c}
    Y_{R,GMM}/\sigma_{R,GMM}  \\  T_{O}
  \end{array}
  \right)
  =
    \left(
    \begin{array}{cc}
      \frac{1}{\sigma_{R,GMM}} & 0  \\
      0 & \frac{1}{\sigma_O}
    \end{array}
    \right)
    \left(
    \begin{array}{cc}
      1 + \rho \frac{\sigma_U}{\sigma_O}  & -\rho \frac{\sigma_U}{\sigma_O}  \\
      - 1 & 1
    \end{array}
    \right)
  \left(
  \begin{array}{c}
    Y_U \\  Y_R
  \end{array}
  \right),
\end{align*}
\begin{align*}
  \tilde X
  =  \left(
    \begin{array}{cc}
      \frac{1}{\sigma_{R,GMM}} & 0  \\
      0 & \frac{1}{\sigma_O}
    \end{array}
    \right)
    \left(
    \begin{array}{cc}
      1 + \rho \frac{\sigma_U}{\sigma_O}  & -\rho \frac{\sigma_U}{\sigma_O}  \\
      - 1 & 1
    \end{array}
    \right)
    \left(
    \begin{array}{c}
      1 \\ 1
    \end{array}
  \right)
  =\left(
  \begin{array}{c}
    \frac{1}{\sigma_{R,GMM}}  \\
    0
  \end{array}
  \right)
\end{align*}
\begin{align*}
  \tilde z
  =  \left(
       \begin{array}{cc}
         \frac{1}{\sigma_{R,GMM}} & 0  \\
         0 & \frac{1}{\sigma_O}
       \end{array}
       \right)
       \left(
       \begin{array}{cc}
         1 + \rho \frac{\sigma_U}{\sigma_O}  & -\rho \frac{\sigma_U}{\sigma_O}  \\
         - 1 & 1
       \end{array}
       \right)
       \left(
       \begin{array}{c}
         0 \\ 1
       \end{array}
  \right)
  =\left(
  \begin{array}{c}
    -\frac{1}{\sigma_{R,GMM}}\cdot \rho\frac{\sigma_U}{\sigma_O}  \\
    \frac{1}{\sigma_O}
  \end{array}
  \right).
\end{align*}

The least squares
estimator of $\gamma$ is the minimum variance unbiased estimate for $\gamma=b$,
which is $\hat\gamma_{\operatorname{long}}=Y_O$.  The short regression estimator of $\beta$ in the transformed model is
$\hat\beta_{\operatorname{short}}=(\tilde X'\tilde X)^{-1}\tilde X' \tilde y=Y_{R,GMM}$.  Finally,
$(\tilde X'\tilde X)^{-1}\tilde X'\tilde z=\sigma^2_{R,GMM}\cdot
\frac{1}{\sigma_{R,GMM}}\cdot \frac{-1}{\sigma_{R,GMM}}\cdot
\frac{\rho \sigma_U}{\sigma_O}=
-\rho\frac{\sigma_U}{\sigma_O}$ and  $\tilde z' M_{\tilde X}\tilde z=1/\sigma^2_O$.  Thus, the GLS lasso
estimate is $Y_{R,GMM} + \rho \frac{\sigma_U}{\sigma_O} \delta_{S,\lambda \sigma^2_O}(Y_O).$
Note that soft-thresholding $Y_O$ at $\lambda \sigma^2_O$ is equivalent to soft
thresholding $T_O=Y_O/\sigma_O$ at $\lambda \sigma_O$ and multiplying
by $\sigma_O$.  Thus, we can also write the GLS lasso estimate as
  $Y_{R,GMM} + \rho \sigma_U \delta_{S,\lambda \sigma_O}(T_O).$
This is the same as the soft-thresholding nearly adaptive estimator, but with
$\lambda$ replaced by $\lambda\cdot \sigma_O$.

\end{proof}

\newpage
\setcounter{page}{1}
\renewcommand{\thesection}{Appendix \Alph{section}}
\renewcommand{\thesubsection}{\Alph{section}.\arabic{subsection}}

\begin{LARGE}
\begin{center}
Online Appendix to ``Adapting to Misspecification''
\end{center}
\end{LARGE}

\begin{large}
\begin{center}
Timothy B. Armstrong,
Patrick Kline
and
Liyang Sun
\end{center}
\end{large}

\begin{large}
\begin{center}
\date{September 2025}
\end{center}
\end{large}

\bigskip

\section{Additional details}
\setcounter{theorem}{0}
\renewcommand{\thetheorem}{\Alph{section}.\arabic{theorem}}
\setcounter{lemma}{0}
\renewcommand{\thelemma}{\Alph{section}.\arabic{lemma}}
\setcounter{table}{0}
\renewcommand{\thetable}{A\arabic{table}}
\setcounter{figure}{0}
\renewcommand{\thefigure}{A\arabic{figure}}

\subsection{Constrained adaptation}\label{constrained_adaptation_sec_appx}

The constrained adaptive estimator solves the problem
\begin{align}\label{constrained_adaptation_eq}
  A^*(\mathcal{B};\overline R)
  =
  \inf_{\estimator} \sup_{B\in\mathcal{B}} \frac{\Rmax (B,\estimator)}{\Rminimax(B)}
  \quad \text{s.t.} \quad \sup_{B\in\mathcal{B}} \Rmax (B,\estimator)\le \overline R.
\end{align}
We can rewrite this formulation as a weighted minimax problem similar to the one in Section \ref{adaptation_as_scaled_minimax_sec} by setting $t=\overline
R/A^*(\mathcal{B};\overline R)$ and considering the problem
\begin{align}\label{t_constrained_adaptation_eq}
  \inf_{\estimator} \sup_{B\in\mathcal{B}} \max\left\{ \frac{\Rmax (B,\estimator)}{\Rminimax(B)},  \frac{\Rmax (B,\estimator)}{t}\right\}
  =\inf_{\estimator} \sup_{B\in\mathcal{B}} \frac{\Rmax (B,\estimator)}{\min\left\{\Rminimax(B),t\right\}}.
\end{align}
Indeed, any solution to (\ref{constrained_adaptation_eq}) must also be a
solution to (\ref{t_constrained_adaptation_eq}) with $t=\overline
R/A^*(\mathcal{B};\overline R)$, since any decision function achieving a
strictly better value of (\ref{t_constrained_adaptation_eq}) would satisfy the
constraint in (\ref{constrained_adaptation_eq}) and achieve a strictly better
value of the objective in (\ref{constrained_adaptation_eq}).
Conversely, letting $\tilde A^*(t)$ be the value of
(\ref{t_constrained_adaptation_eq}), any solution to
(\ref{t_constrained_adaptation_eq}) will achieve the same value of the objective
(\ref{constrained_adaptation_eq}) and will satisfy the constraint for $\bar
R=t\cdot \tilde A^*(t)$.  In fact, this solution to
(\ref{t_constrained_adaptation_eq}) will also solve
(\ref{constrained_adaptation_eq}) for $\bar R=t\cdot \tilde A^*(t)$ so long as
this value of $\bar R$ is large enough to allow some scope for adaptation.

Arguing as in Section \ref{adaptation_as_scaled_minimax_sec}, we can write the
optimization problem (\ref{t_constrained_adaptation_eq}) as
\begin{align}\label{t_constrained_weighted_minimax_eq}
  &\inf_{\estimator} \sup_{(\theta,b)\in \cup_{B'\in\mathcal{B}}\mathcal{C}_{B'}} \tilde\omega(\theta,b,t) R(\theta,b,\estimator),  \\
  \text{where } &
                  \tilde\omega(\theta,b,t)=\left( \inf_{B\in\mathcal{B}\text{ s.t. }(\theta,b)\in\mathcal{C}_B} \min\left\{ \Rmax(B),t \right\} \right)^{-1}
                  =\max\left\{ \omega(\theta,b), 1/t \right\}  \nonumber
\end{align}
and $\omega(\theta,b)$ is given in Lemma \ref{adaptation_as_weighted_minimax_lemma} in Section \ref{adaptation_as_scaled_minimax_sec}.
Thus, we can solve (\ref{t_constrained_adaptation_eq}) by solving for the
minimax estimator under the loss function $(\theta,b,d)\mapsto \tilde
\omega(\theta,b,t)L(\theta,b,d)$.
Letting $A^*(t)$ be the optimized objective function, we can then solve
(\ref{constrained_adaptation_eq}) by finding a $t$ such that $\bar R=t\cdot
A^*(t)$.

We summarize these results in the following lemma.

\begin{lemma}\label{constrained_adaptation_lemma}
  Any solution to (\ref{constrained_adaptation_eq}) is also a solution to
  (\ref{t_constrained_weighted_minimax_eq}) with $t=\overline
R/A^*(\mathcal{B};\overline R)$.
  Conversely, let $\tilde A^*(t)$ denote the value of
  (\ref{t_constrained_weighted_minimax_eq}) and let
  $\tilde R(t)=\tilde A^*(t)\cdot t$.  If $\tilde
  R(t)>\inf_{\estimator}\sup_{B\in\mathcal{B}}\Rmax(B,\estimator)$ and
  $\inf_{B\in\mathcal{B}}\Rminimax(B)>0$, then
  $A^*(\mathcal{B};\tilde R(t))=\tilde A^*(t)$ and any solution to
  (\ref{t_constrained_weighted_minimax_eq}) is also a solution to
  (\ref{constrained_adaptation_eq}) with $\bar R=\tilde R(t)$.
\end{lemma}

  \begin{proof}
  The first statement is immediate from the arguments proceeding the statement of the lemma in Section \ref{constrained_adaptation_sec}.  For
  the second statement, let $\bar\delta$ be a decision rule with
  $\sup_{B\in\mathcal{B}}\Rmax(B,\bar\delta)<\tilde R(t)$.  Such a decision rule
  exists and satisfies
  $\sup_{B\in\mathcal{B}} \frac{\Rmax(B,\bar\delta)}{\Rminimax(B)}<\infty$
  by the assumptions of the lemma.
  Let $\bestimator^*_t$ be a solution to
  (\ref{t_constrained_adaptation_eq}).

  Suppose, to get a contradiction, that a decision $\delta'$ satisfies
the constraint in (\ref{constrained_adaptation_eq}) with $\bar R=\tilde R(t)$
and achieves a strictly better value of the objective than $\tilde A^*(t)$.  For
$\lambda\in (0,1)$, let $\delta_{\lambda}'$ be the randomized decision rule that
places probability $\lambda$ on $\bar\delta$ and probability $1-\lambda$ on
$\delta'$, independently of the data $Y$.
Note that
$\Rmax(B,\delta'_{\lambda})
=\sup_{(\theta,b)\in\mathcal{C}_{B}} R(\theta,b,\delta'_{\lambda})
=\sup_{(\theta,b)\in\mathcal{C}_{B}} \left[ \lambda R(\theta,b,\bar\delta)
  +(1-\lambda)R(\theta,b,\delta') \right]
\le \sup_{(\theta,b)\in\mathcal{C}_{B}} \lambda R(\theta,b,\bar\delta)
+\sup_{(\theta,b)\in\mathcal{C}_{B}} (1-\lambda)R(\theta,b,\delta')
=\lambda \Rmax(B,\bar \delta)+(1-\lambda)
\Rmax(B,\delta')$
so that, for $\lambda\in (0,1)$,
\begin{align*}
  \sup_{B\in\mathcal{B}} \Rmax(B,\delta_\lambda)
  \le \lambda \sup_{B\in\mathcal{B}} \Rmax(B,\bar\delta)
  + (1-\lambda )\sup_{B\in\mathcal{B}} \Rmax(B,\delta')
  < \tilde R(t)
  = \tilde A^*(t)\cdot t
\end{align*}
and
\begin{align*}
  \sup_{B\in\mathcal{B}} \frac{\Rmax(B,\delta_\lambda)}{\Rminimax(B)}
  \le \lambda \sup_{B\in\mathcal{B}} \frac{\Rmax(B,\bar\delta)}{\Rminimax(B)}
  + (1-\lambda ) \sup_{B\in\mathcal{B}} \frac{\Rmax(B,\delta')}{\Rminimax(B)}.
\end{align*}
Since $\sup_{B\in\mathcal{B}} \frac{\Rmax(B,\bar\delta)}{\Rminimax(B)}$ is
finite and $\frac{\sup_{B\in\mathcal{B}}
  \Rmax(B,\delta')}{\Rminimax(B)}<\tilde A^*(t)$, the above display is strictly less
than $\tilde A^*(t)$ for small enough $\lambda$.  Thus, for small enough
$\lambda$, the objective function in (\ref{t_constrained_weighted_minimax_eq})
evaluated at the decision function $\delta_{\lambda}$ evaluates to
\begin{align*}
  \max\left\{ \sup_{B\in\mathcal{B}} \frac{\Rmax(B,\delta_\lambda)}{\Rminimax(B)}, \sup_{B\in\mathcal{B}}\frac{\Rmax(B,\delta_\lambda)}{t} \right\}
  <\max\left\{ \tilde A^*(t), \tilde R(t)/t \right\}
  =\tilde A^*(t),
\end{align*}
a contradiction.
\end{proof}

\subsection{Numerical results on estimators as a function of $1-\rho^2$}\label{appdx:nearly_adaptive_approx}

In practice, it is common to use a fixed threshold of 1.96, which corresponds to a pre-test rule that switches between the unrestricted estimator and the GMM estimator based on the result of the specification test. Doing so leads to high level of worst-case adaptation regret especially when $\rho^2$ is close to one as shown in Figure~\ref{fig:penalty}. To minimize the worst-case adaptation regret, the adaptive hard-threshold estimator needs to use a threshold that would increase to infinity as $\rho^2$ gets closer to one.

\begin{figure}[h!]
\caption{\label{fig:penalty}  Worst case adaptation regret as function of relative efficiency}

\begin{centering}
\includegraphics[width=4.5in, trim={0 1.3cm 0 0.5cm}, clip]{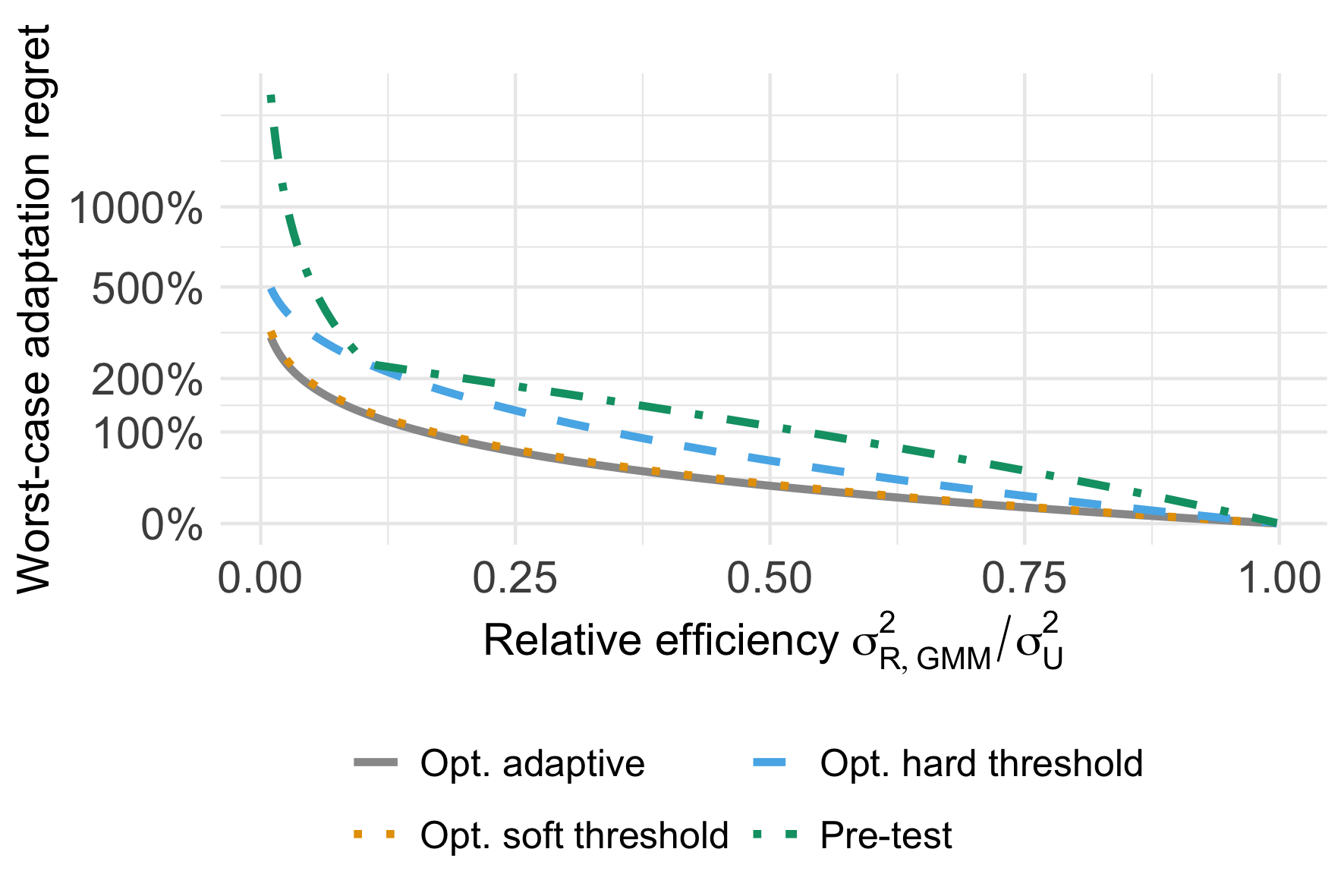}

\footnotesize{Notes: Vertical axis plots $(A_{\max}(\mathcal{B},\estimator)-1)\times100$ on $\log_{10}$ scale.}
\par\end{centering}
\end{figure}

\begin{figure}[h!]
\caption{\label{fig:max_risk} Worst case risk increase relative
to $Y_{U}$}

\begin{centering}
\includegraphics[width=4.5in, trim={0 1.3cm 0 0.5cm}, clip]{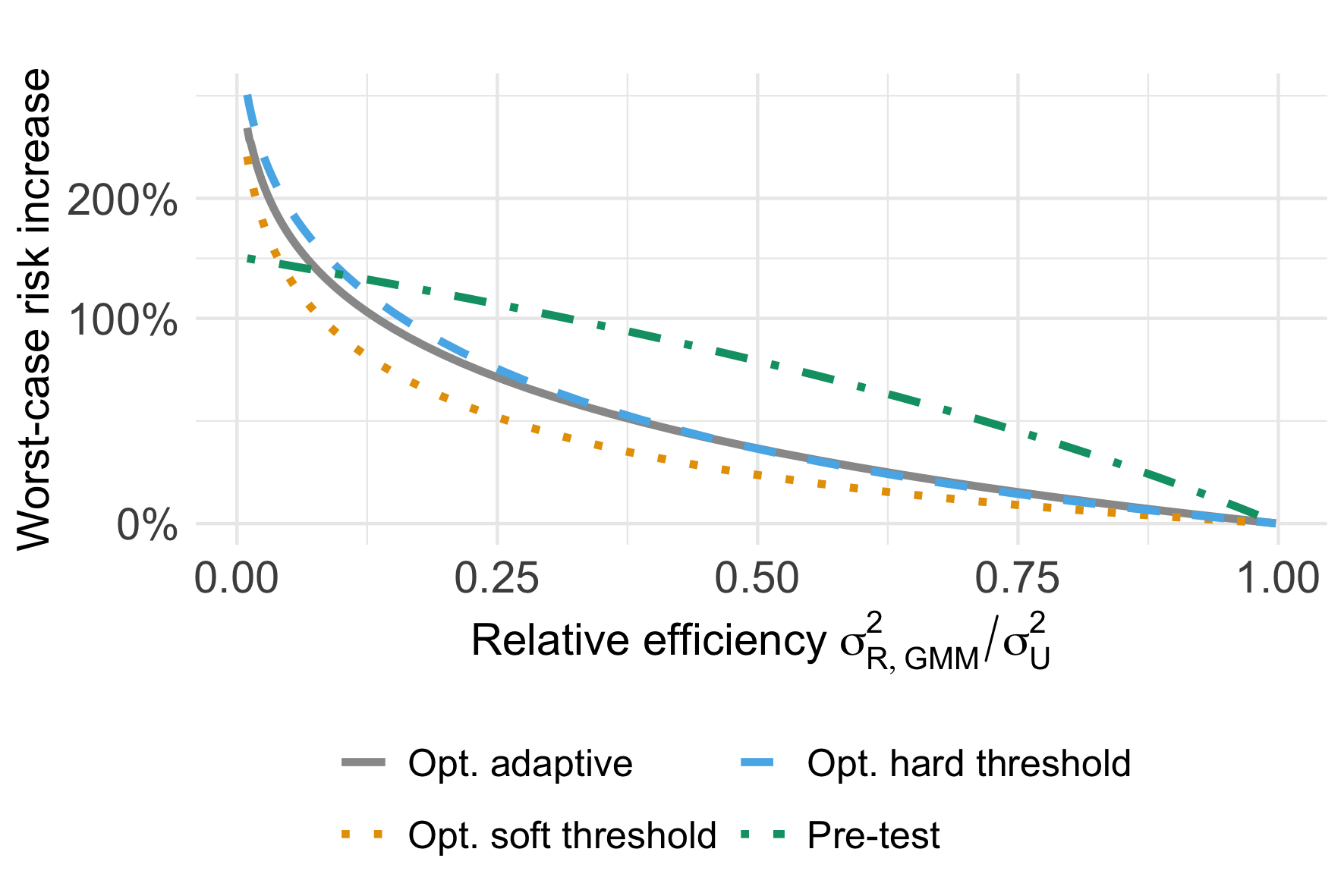}

\footnotesize{Notes: Vertical axis plots $(\Rmax(\infty,\estimator)-\sigma_{U})/\sigma_{U}\times100$ on $\log_{10}$ scale.}
\par\end{centering}
\end{figure}

A pre-test estimator utilizing a fixed threshold at 1.96 realizes its worst-case risk when the scaled bias $\tilde b$ is itself near the 1.96 threshold.  As shown in Figure~\ref{fig:max_risk}, the pre-test estimator tends to exhibit substantially greater worst-case risk than the class of adaptive estimators for most values of $\rho^2$.
As discussed in Section \ref{rho_to_one_section} below, adaptive estimators have large worst-case risk when $\rho^2$ is close to one.
The pre-test estimator has lower worst-case risk in these cases, due to the fixed threshold at $1.96$.

\subsection{Asymptotics as $|\rho|\to 1$}\label{rho_to_one_section}

This section considers the behavior of the worst-case adaptation regret as $|\rho|\to 1$ for the optimally adaptive estimator as well as for the hard and soft-thresholding estimators.
Recall that $1-\rho^2$ is equal to $\sigma_{R,GMM}^2/\sigma_U^2$, so that $|\rho|\to 1$ corresponds to the case where $\sigma_{R,GMM}^2/\sigma_U^2\to 0$.  It will be convenient to phrase our results in terms of $\rho^{-2}-1=(1-\rho^2)/\rho^2=(1+o(1))\cdot \sigma_{R,GMM}^2/\sigma_U^2$ as $|\rho|\to 1$.

Let $A(\delta,\rho)$ denote the worst-case adaptation regret of the
estimator given by (\ref{Bminimax_main_example_eq}) under the given value of
$\rho$, so that $A(\delta,\rho)$ returns the value of (\ref{adaptation_objective_main_example_eq}) with $\tilde\delta=\delta$.
We use $A^*(\rho)=\inf_{\delta}A(\delta,\rho)$ (where the infimum is over all estimators) to denote the loss of efficiency under
adaptation for the given
value of $\rho$.
Likewise, we denote by $A_{S}(\lambda,\rho)=A(\delta_{S,\lambda},\rho)$ and
$A_{H}(\lambda,\rho)=A(\delta_{H,\lambda},\rho)$ the
worst-case adaptation regret for
soft and hard-thresholding respectively with threshold $\lambda$, where $\delta_{S,\lambda}$ are $\delta_{H,\lambda}$ are defined in
Section \ref{nearly_adaptive_computation_sec}.
Finally, we use $A^*_S(\rho)=\inf_{\lambda} A_S(\lambda,\rho)$ and
$A^*_H(\rho)=\inf_{\lambda} A_H(\lambda,\rho)$ to denote the minimum worst-case
adaptation regret for soft and hard-thresholding respectively.

The following theorem characterizes the behavior of $A^*(\rho)$, $A^*_S(\rho)$ and
$A^*_H(\rho)$ as $|\rho|\to 1$.

\begin{theorem}\label{rho_to_one_limit_thm}
  We have
  \begin{align*}
    \lim_{|\rho|\uparrow 1} \frac{A^*(\rho)}{2\log (\rho^{-2}-1)^{-1}}
    =\lim_{|\rho|\uparrow 1} \frac{A^*_S(\rho)}{2\log (\rho^{-2}-1)^{-1}}
    =\lim_{|\rho|\uparrow 1} \frac{A^*_H(\rho)}{2\log (\rho^{-2}-1)^{-1}}
    =1.
  \end{align*}
\end{theorem}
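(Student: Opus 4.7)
Set $\epsilon = \rho^{-2}-1$, so $\epsilon\downarrow 0$ as $|\rho|\uparrow 1$, and restate the claim as $A^*(\rho),A^*_S(\rho),A^*_H(\rho)=2\log(1/\epsilon)(1+o(1))$. Since $A^*(\rho)\leq\min\{A^*_S(\rho),A^*_H(\rho)\}$ by definition, the theorem reduces to (i) matching upper bounds on $A^*_S$ and $A^*_H$ and (ii) a matching lower bound on $A^*$. All three steps are carried out in terms of the reformulation (\ref{adaptation_objective_main_example_eq}).

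For (i), I would pick $\lambda_\epsilon^2 = 2\log(1/\epsilon) - K\log\log(1/\epsilon)$ with $K$ a sufficiently large constant, and plug $\delta_{S,\lambda_\epsilon}$ and $\delta_{H,\lambda_\epsilon}$ into (\ref{adaptation_objective_main_example_eq}). Standard Gaussian tail estimates give $r(\delta_{S,\lambda},0) = 2\int_\lambda^\infty (t-\lambda)^2\phi(t)\,dt \asymp \phi(\lambda)/\lambda^3$, which by the choice of $\lambda_\epsilon$ is $o(\epsilon)$; an analogous calculation handles hard-thresholding. Hence the ratio at $\tilde b = 0$ tends to $1$. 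For $|\tilde b|>\lambda$, the worst-case risk satisfies $r(\delta_{S,\lambda},\tilde b)\leq 1+\lambda^2$ while $\rBNM(|\tilde b|)+\epsilon\to 1$, yielding a ratio of $\lambda^2+O(1)$. For intermediate $|\tilde b|\leq \lambda$, the thresholding estimator shrinks to $0$ with high probability so $r\leq \tilde b^2+o(1)$, and dividing by $\rBNM(|\tilde b|)+\epsilon\gtrsim \min\{\tilde b^2,1\}$ gives a ratio bounded by $O(1+\lambda^2)$. Taking suprema yields $A^*_S(\rho),A^*_H(\rho)\leq \lambda_\epsilon^2(1+o(1))=2\log(1/\epsilon)(1+o(1))$.

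For (ii), I would run a two-point Bayes argument. Consider the prior $\pi_{p_\epsilon,\lambda_\epsilon}$ placing mass $1-p_\epsilon$ at $\tilde b=0$ and mass $p_\epsilon/2$ at each of $\pm\lambda_\epsilon$. Since the sup in (\ref{adaptation_objective_main_example_eq}) dominates the $\pi_{p_\epsilon,\lambda_\epsilon}$-weighted average, for any estimator $\tilde\delta$,
\begin{align*}
A(\tilde\delta,\rho)\;\geq\;1+(1-p_\epsilon)\frac{r(\tilde\delta,0)}{\epsilon}+p_\epsilon\frac{r(\tilde\delta,\lambda_\epsilon)}{\rBNM(\lambda_\epsilon)+\epsilon}.
\end{align*}
Restricting to odd $\tilde\delta$ by symmetry, the right-hand side is minimized by a weighted posterior mean with an explicit expression in $\phi(t-\lambda_\epsilon),\phi(t),\phi(t+\lambda_\epsilon)$. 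Using $\rBNM(\lambda_\epsilon)\to 1$ (see \citealt{lehmann_theory_1998}, Section 9.7) and tuning $p_\epsilon$ so that the two Lagrange weights $(1-p_\epsilon)/\epsilon$ and $p_\epsilon/(\rBNM(\lambda_\epsilon)+\epsilon)$ are comparable, the minimum Bayes risk can be expanded asymptotically to give $A^*(\rho)\geq\lambda_\epsilon^2(1-o(1))=2\log(1/\epsilon)(1-o(1))$.

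The main obstacle will be pinning down the sharp constant $2$ in the lower bound. A direct Cauchy-Schwarz application to the likelihood-ratio identity $E_{N(\lambda,1)}\tilde\delta-E_{N(0,1)}\tilde\delta=E_{N(0,1)}[\tilde\delta(L-1)]$ with $\chi^2$-distance $E_{N(0,1)}(L-1)^2=e^{\lambda^2}-1$ only yields $\lambda^2\leq \log(1/\epsilon)(1+o(1))$, which is a factor of two short. Closing the gap to the correct constant $2$ requires the explicit Bayes-risk computation above together with careful tracking of the $\log\log(1/\epsilon)$ corrections in $\lambda_\epsilon$ and in the tail expansions of $\phi(\lambda_\epsilon\pm \cdot)$, paralleling the sharp-threshold analyses of Donoho-Johnstone wavelet shrinkage and the granular-$\mathcal{B}$ calculations of \citet{bickel_parametric_1984}.
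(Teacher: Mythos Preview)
Your overall architecture matches the paper's: upper bounds via thresholding with $\lambda\sim\sqrt{2\log(1/\epsilon)}$, lower bound via the constraint that small adaptation regret forces small risk at $\tilde b=0$, which in turn forces large worst-case risk. Two issues are worth flagging.

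\textbf{Upper bound.} Your choice $\lambda_\epsilon^2=2\log(1/\epsilon)-K\log\log(1/\epsilon)$ with $K$ large goes the wrong way: it gives $\phi(\lambda_\epsilon)/\lambda_\epsilon^3\asymp\epsilon(\log(1/\epsilon))^{(K-3)/2}$, which is $\omega(\epsilon)$ for $K>3$, not $o(\epsilon)$. The paper simply takes $\lambda_\rho^2=2\log(1/\epsilon)$ and does not ask the ratio at $\tilde b=0$ to tend to $1$; it only needs it to be $o(\log(1/\epsilon))$, which follows from $\lambda e^{-\lambda^2/2}=\epsilon\sqrt{2\log(1/\epsilon)}$. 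For intermediate $\tilde b$, your claim that the ratio is ``$O(1+\lambda^2)$'' with an unspecified constant does not by itself deliver the sharp leading constant $1$. The paper gets around this by splitting at a cutoff $c(\rho)$ with $c(\rho)\to\infty$ and $c(\rho)/\lambda_\rho\to 0$: for $|\tilde b|\le c(\rho)$ the bound $\rBNM(|\tilde b|)\ge 0.8\tilde b^2/(1+\tilde b^2)$ yields a ratio of order $c(\rho)^2=o(\lambda_\rho^2)$, while for $|\tilde b|\ge c(\rho)$ one has $\rBNM(c(\rho))\to 1$ and the numerator is at most $1+\lambda_\rho^2$. The paper also packages the thresholding risk bounds by citing \citet[Lemmas~8.3 and~8.5]{johnstone_gaussian_2019} rather than computing them directly.

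\textbf{Lower bound.} Here the paper takes a much shorter route than your two-point prior construction: it invokes \citet[Theorem~4.1]{bickel_minimax_1983} as a black box. That theorem says that if $E_{N(0,1)}\delta(T)^2\le m$ then $\sup_\mu E_{N(\mu,1)}(\delta(T)-\mu)^2\ge (1+o(1))\,2\log(1/m)$ as $m\downarrow 0$, with the correct constant $2$ already built in. Plugging in $m=(\rho^{-2}-1)A(\delta_\rho,\rho)$ from the $\tilde b=0$ term of (\ref{adaptation_objective_main_example_eq}) immediately yields the lower bound. Your three-point Bayes computation with tuned $p_\epsilon$ is essentially a reconstruction of Bickel's proof; it can be made to work, but you have correctly identified that closing the factor-of-two gap left by Cauchy--Schwarz requires real work, and there is no need to redo it when the result is available off the shelf.
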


In the remainder of this section, we prove Theorem \ref{rho_to_one_limit_thm}.
We split the proof into upper bounds (Section \ref{rho_to_one_upper_bound_sec})
and lower bounds (Section \ref{rho_to_one_lower_bound_sec}).  The lower bounds
in Section \ref{rho_to_one_lower_bound_sec} are essentially immediate from
results in \citetOnline{bickel_minimax_1983} for adapting to $B\in
\mathcal{B}=\{0,\infty\}$, whereas the upper bounds in Section
\ref{rho_to_one_upper_bound_sec} involve new arguments to deal with intermediate
values of $B$.

\subsubsection{Upper bounds}\label{rho_to_one_upper_bound_sec}

In this section, we show that $A^*_S(\rho)\le (1+o(1))2\log (\rho^{-2}-1)^{-1}$ and
$A^*_H(\rho)\le (1+o(1))2\log (\rho^{-2}-1)^{-1}$.  Since $A^*(\rho)$ is bounded from
above by both $A^*_S(\rho)$ and $A^*_H(\rho)$, this also implies $A^*(\rho)\le
(1+o(1))2\log (\rho^{-2}-1)^{-1}$.

Let $r_S(\lambda,t)=E_{T\sim N(\mu,1)}(\delta_{S,\lambda}(T)-\mu)^2$
and
$r_S(\lambda,t)=E_{T\sim N(\mu,1)}(\delta_{H,\lambda}(T)-\mu)^2$
denote the risk of soft and hard-thresholding.  Then
\begin{align*}
  A_{S}(\lambda,\rho)=\sup_{\mu\in\mathbb{R}} \frac{r_{S}(\lambda,\mu) + \rho^{-2} - 1}{\rBNM(|\mu|) + \rho^{-2}-1}
\end{align*}
and similarly for $A_{H}(\lambda,\rho)$.
We use the following upper bound for $r_H(\lambda,\mu)$ and $r_S(\lambda,\mu)$,
which follows immediately from results given in \citetOnline{johnstone_gaussian_2019}.

\begin{lemma}\label{thresholding_risk_upper_bound_lemma}
  There exists a constant $C$ such that, for $\lambda>C$, both
  $r_S(\lambda,\mu)$ and $r_H(\lambda,\mu)$ are bounded from above by $\bar
  r(\lambda,\mu)$ where
  \begin{align*}
    \bar r(\lambda,\mu)=
    \begin{cases}
      \min\left\{ \lambda \exp\left( -\lambda^2/2 \right) + 1.2\mu^2, 1+\mu^2 \right\}  &  |\mu|\le \lambda  \\
      1+\lambda^2  &  |\mu|>\lambda.
    \end{cases}
  \end{align*}
  
\end{lemma}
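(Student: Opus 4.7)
The plan is to verify the piecewise bound for the two cases $|\mu|>\lambda$ and $|\mu|\leq \lambda$ separately, appealing to classical risk bounds for thresholding estimators in the Gaussian sequence model as developed in Johnstone (2019).

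For $|\mu|>\lambda$ we need $r_S(\lambda,\mu), r_H(\lambda,\mu)\leq 1+\lambda^2$. For soft-thresholding, I would apply Stein's unbiased risk identity using the weak derivative $\delta_{S,\lambda}'(T)=I(|T|\geq\lambda)$ to obtain the closed-form expression $r_S(\lambda,\mu)=(\lambda^2+2)P(|T|\geq\lambda)+E[T^2 I(|T|<\lambda)]-1$, and then bound $E[T^2 I(|T|<\lambda)]\leq \lambda^2 P(|T|<\lambda)$ to conclude. For hard-thresholding, I would use the decomposition $r_H(\lambda,\mu)=\mu^2 P(|T|<\lambda)+E[(T-\mu)^2 I(|T|\geq\lambda)]$, which reduces the task to the uniform bound $\mu^2 P(|T|<\lambda)\leq \lambda^2$ on $|\mu|>\lambda$; this follows from elementary monotonicity of $\mu\mapsto\mu^2[\Phi(\lambda-\mu)-\Phi(-\lambda-\mu)]$ together with its value $\lambda^2/2$ at $|\mu|=\lambda$ and decay as $|\mu|\to\infty$.

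For $|\mu|\leq \lambda$ we require both $1+\mu^2$ and $\lambda e^{-\lambda^2/2}+1.2\mu^2$. The bound $1+\mu^2$ is elementary: since both $\delta_{S,\lambda}(T)$ and $\delta_{H,\lambda}(T)$ always lie between $0$ and $T$, we have pointwise $(\delta(T)-\mu)^2\leq (T-\mu)^2+\mu^2$, and taking expectations yields the bound. For the second, I would invoke the classical oracle inequalities
\begin{align*}
r_S(\lambda,\mu)\leq r_S(\lambda,0)+\mu^2,\qquad r_H(\lambda,\mu)\leq r_H(\lambda,0)+c(\lambda)\mu^2,
\end{align*}
where $c(\lambda)\to 1$ as $\lambda\to\infty$, so $c(\lambda)\leq 1.2$ once $\lambda$ exceeds a universal threshold. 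Combining with the classical tail asymptotics $r_S(\lambda,0)\sim 2\phi(\lambda)/\lambda$ and $r_H(\lambda,0)\sim (2/\sqrt{2\pi})\lambda e^{-\lambda^2/2}$, and noting that $2/\sqrt{2\pi}<1$, choosing $C$ large enough ensures both $r_S(\lambda,0)$ and $r_H(\lambda,0)$ are bounded by $\lambda e^{-\lambda^2/2}$, completing the argument.

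\textbf{Main obstacle.} The delicate step is the oracle inequality for hard-thresholding with constant $c(\lambda)\leq 1.2$ for $\lambda$ large. The analogous inequality for soft-thresholding admits a clean proof via Stein's identity, but the discontinuity of $\delta_{H,\lambda}$ rules out that route. I would instead compute $\partial^2 r_H(\lambda,\mu)/\partial \mu^2$ directly from the integral representation, show that it can be uniformly bounded on $|\mu|\leq \lambda$ by $2c(\lambda)$ with $c(\lambda)\to 1$ as $\lambda\to\infty$, and integrate twice from $\mu=0$---where symmetry forces the first derivative to vanish---to obtain the claimed bound.
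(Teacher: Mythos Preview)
Your approach is essentially the same as the paper's: both derive the lemma from the risk bounds for soft- and hard-thresholding in Johnstone (2019). The paper's proof is terse---it simply cites Lemma 8.3 and (8.7) for soft-thresholding and Lemma 8.5 and (8.15) for hard-thresholding---while you unpack the content of those lemmas. Your identification of the hard-thresholding oracle inequality $r_H(\lambda,\mu)\le r_H(\lambda,0)+1.2\mu^2$ as the crux is apt, and the second-derivative route you propose is the standard one.

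One small gap: the monotonicity you claim for $\mu\mapsto \mu^2[\Phi(\lambda-\mu)-\Phi(-\lambda-\mu)]$ on $\mu>\lambda$ is false. At $\lambda=1$, for instance, the function rises from about $0.48$ at $\mu=1$ to about $0.63$ at $\mu=2$ before decaying. The bound $\mu^2 P(|T|<\lambda)\le \lambda^2$ for $\mu>\lambda$ is nevertheless true once $\lambda$ exceeds a universal constant: write $\mu=\lambda+s$ with $s>0$ and note that it suffices to show $(1+s/\lambda)^2\Phi(-s)\le 1$, which holds for all $s>0$ provided $\lambda\ge \sup_{s>0} s\sqrt{\Phi(-s)}/(1-\sqrt{\Phi(-s)})$, a finite quantity since the supremand vanishes at both endpoints. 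So the conclusion survives; you just need to replace the monotonicity claim with this direct estimate and absorb the resulting threshold into $C$.
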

\begin{proof}
  The bound for $r_H(\lambda,\mu)$ follows from Lemma 8.5 in
  \citetOnline{johnstone_gaussian_2019} along with the bound $r_H(\lambda,0)\le
  \frac{2+\varepsilon}{\sqrt{2\pi}}\lambda \exp\left( -\lambda^2/2 \right)$
  which holds for any $\varepsilon>0$ for $\lambda$ large enough by (8.15) in
  \citetOnline{johnstone_gaussian_2019}.
  The bound for $r_L(\lambda,\mu)$ follows from Lemma 8.3 and (8.7) in \citetOnline{johnstone_gaussian_2019}.
\end{proof}

Let $\tilde\lambda_{\rho}=\sqrt{2\log (\rho^{-2}-1)^{-1}}$.  By Lemma
\ref{thresholding_risk_upper_bound_lemma}, $A^*_S(\rho)$ and $A^*_H(\rho)$ are,
for $(\rho^{-2}-1)^{-1}$ large enough,
bounded from above by the supremum over $\mu$ of
\begin{align}\label{thresholding_risk_upper_bound_eq}
  \frac{\bar r(\tilde\lambda_{\rho},\mu)+\rho^{-2}-1}{\rBNM(|\mu|)+\rho^{-2}-1}
\end{align}
Let $c(\rho)$ be such that $c(\rho)/\tilde \lambda_{\rho}\to 0$ and
$c(\rho)\to\infty$ as $|\rho|\uparrow 1$.  We bound
(\ref{thresholding_risk_upper_bound_eq}) separately for $|\mu|\le c(\rho)$ and
for $|\mu|\ge c(\rho)$.  For $|\mu|\le c(\rho)$, we use the bound $\rBNM(|\mu|)\ge .8\cdot \mu^2/(\mu^2+1)$
\citep{donoho_statistical_1994}, which gives an upper bound for
(\ref{thresholding_risk_upper_bound_eq}) of
\begin{align*}
  &\frac{\bar r(\tilde\lambda_{\rho},\mu)+\rho^{-2}-1}{.8\cdot \mu^2/(\mu^2+1)+\rho^{-2}-1}
  \le \frac{\sqrt{2\log (\rho^{-2}-1)^{-1}}\cdot (\rho^{-2}-1) + 1.2\mu^2+\rho^{-2}-1}{.8\cdot \mu^2/(\mu^2+1)+\rho^{-2}-1}  \\
  &\le \sqrt{2\log (\rho^{-2}-1)^{-1}}
    +(1.2/.8)\cdot (\mu^2+1)+1
  \le \sqrt{2\log (\rho^{-2}-1)^{-1}}
    +(1.2/.8)\cdot (c(\rho)^2+1)+1.
\end{align*}
As $|\rho|\uparrow 1$, this increases more slowly than $\log
(\rho^{-2}-1)^{-1}$.
For $|\mu|\ge c(\rho)$, we use the bound $\rBNM(|\mu|)\ge \rBNM(c(\rho))$ which
gives an upper bound for (\ref{thresholding_risk_upper_bound_eq}) of
\begin{align*}
  \frac{\bar r(\tilde\lambda_{\rho},\mu)+\rho^{-2}-1}{\rBNM(|c(\rho)|)+\rho^{-2}-1}
  \le \frac{\bar r(\tilde\lambda_{\rho},\mu)}{\rBNM(|c(\rho)|)} + 1
  \le \frac{1+\tilde \lambda_\rho^2}{\rBNM(|c(\rho)|)} + 1.
\end{align*}
As $|\rho|\uparrow 1$, $c(\rho)\to\infty$ and $\rBNM(|c(\rho)|)\to 1$, so that the
above display is equal to a $1+o(1)$ term times
$\tilde \lambda_\rho^2=2\log (\rho^{-2}-1)^{-1}$ as required.

\subsubsection{Lower bounds}\label{rho_to_one_lower_bound_sec}

In this section, we show that $A^*(\rho)\ge (1+o(1))2\log (\rho^{-2}-1)^{-1}$.
Since $A_S^*(\rho)$ and $A_H^*(\rho)$ are bounded from
below by $A^*(\rho)$ , this also implies $A_S^*(\rho)\ge
(1+o(1))2\log (\rho^{-2}-1)^{-1}$ and $A_H^*(\rho)\ge
(1+o(1))2\log (\rho^{-2}-1)^{-1}$.

Given an estimator $\delta(Y)$ of $\mu$ in the normal means problem $Y\sim
N(\mu,1)$, let $m(\delta)=E_{T\sim N(0,1)}\delta(Y)^2$ denote the risk at
$\mu=0$ and let
$M(\delta)=\sup_{\mu\in\mathbb{R}}E_{T\sim N(\mu,1)}(\delta(Y)-\mu)^2$ denote
worst-case risk.
The following lemma is immediate from \citetOnline[Theorem 4.1]{bickel_minimax_1983}.

\begin{lemma}[\citealtOnline{bickel_minimax_1983}, Theorem 4.1]\label{bickel_1983_lemma}
  For $t\in (0,1]$,
  let $\delta_t$ be an estimator that satisfies $m(\delta_t)\le 1-t$.  Then, as
  $t\uparrow 1$, $M(\delta_t)\ge (1+o(1))\cdot 2\log (1-t)$.
\end{lemma}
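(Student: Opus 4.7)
The lemma is explicitly Theorem~4.1 of \citet{bickel_minimax_1983}, so the cleanest proof is to cite that source. For a self-contained derivation, my plan is to run a two-point reduction to testing. The hypothesis $m(\delta_t) = E_0 \delta_t(Y)^2 \le 1-t$ forces $\delta_t(Y)$ to concentrate near zero under $\mu = 0$: by Chebyshev's inequality, $P_0(|\delta_t(Y)| \ge M/2) \le 4(1-t)/M^2$ for every $M > 0$.

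Next I would lower bound $M(\delta_t)$ at an alternative mean $\mu = M$ chosen just below the critical value $\sqrt{2\log((1-t)^{-1})}$. The event $A = \{|\delta_t(Y) - M| \le M/2\}$ is contained in $\{|\delta_t(Y)| \ge M/2\}$, so $P_M(A)$ plays the role of the power of a test of $\mu = M$ against $\mu = 0$ whose size under $\mu = 0$ is at most $\alpha \equiv 4(1-t)/M^2$. The Neyman--Pearson lemma for two Gaussian means gives $P_M(A) \le \Phi(M - z_{1-\alpha})$; when $M^2$ is strictly below $2\log((1-t)^{-1})$, this power stays bounded away from one, and a reverse Chebyshev step gives $E_M(\delta_t(Y) - M)^2 \ge (M^2/4)(1 - P_M(A))$, which grows like $\log((1-t)^{-1})$.

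The hard part will be pinning down the leading constant $2$. A naive testing bound of the form above only returns $c \log((1-t)^{-1})$ with some $c < 2$. To sharpen it, I would switch to the Bayes reduction underlying \citet{bickel_minimax_1983}'s original argument: place a two-point prior on $\{0, M\}$ with weights tuned so that the prior Bayes risk combines $m(\delta_t)$ and $M(\delta_t)$ in a way compatible with the posited constraint $m(\delta_t) \le 1-t$, compute the explicit shrinkage form of the Bayes rule via the posterior likelihood ratio, and invert to bound $M(\delta_t)$ from below. The main obstacle is to coordinate the prior weights and alternative $M$ so that the dominant contribution at the least favorable scale reproduces the precise constant $2$ rather than a smaller universal constant---this careful matching is exactly what Bickel's original two-point prior is engineered to accomplish.
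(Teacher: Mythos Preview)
The paper does not prove this lemma; it simply states that the result ``is immediate from \citet[Theorem 4.1]{bickel_minimax_1983}.'' Your opening line---cite Bickel---is therefore exactly the paper's approach, and nothing further is required.

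Your self-contained sketch is a reasonable roadmap for the rate, and you are right to flag that the naive Chebyshev/Neyman--Pearson reduction loses the leading constant: with $M\approx\sqrt{2\log((1-t)^{-1})}$ the size $\alpha=4(1-t)/M^2$ satisfies $z_{1-\alpha}\approx M$, so the power bound $\Phi(M-z_{1-\alpha})$ is only bounded away from one by a fixed margin, yielding $(M^2/4)\cdot c$ for some $c<1$ and hence a constant well below $2$. Recovering the sharp constant does require Bickel's carefully tuned two-point prior, so your plan to fall back on his construction for that step is the correct diagnosis. One minor point: the statement as printed has a sign slip---$\log(1-t)\to-\infty$ as $t\uparrow 1$, so the intended bound is $M(\delta_t)\ge (1+o(1))\cdot 2\log\bigl((1-t)^{-1}\bigr)$, which is the version you implicitly use in your sketch.
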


Using this result, we prove the following lemma, which gives a lower bound for
the worst-case adaptation regret and the worst-case risk of any estimator
achieving the upper bound in Section \ref{rho_to_one_upper_bound_sec}.  The
required lower bound $A^*(\rho)\ge (1+o(1))2\log (\rho^{-2}-1)^{-1}$ follows
from this result.

\begin{lemma}\label{rho_to_one_lower_bound_lemma}
  For $\rho\in (-1,1)$, let $\delta_\rho:\mathbb{R}\to\mathbb{R}$ be an
  estimator of $\mu$ in the normal means problem $Y\sim N(\mu,1)$.  Suppose that
  the worst-case adaptation regret $A(\delta_\rho,\rho)$ of the corresponding estimator
  (\ref{Bminimax_main_example_eq}) satisfies $A(\delta_\rho,\rho)\le
  (1+o(1))2\log (\rho^{-2}-1)^{-1}$ as $|\rho|\to 1$.
  Then the following results hold as $|\rho|\to 1$.
  \begin{itemize}
  \item[i.)] The worst-case risk of the corresponding estimator
    (\ref{Bminimax_main_example_eq}) is bounded from below by a $1+o(1)$ term
    times $2\Sigma_U \log (\rho^{-2}-1)^{-1}$

  \item[ii.)] $A(\delta_\rho,\rho)\ge (1+o(1))\cdot 2\log (\rho^{-2}-1)^{-1}$.
  \end{itemize}

\end{lemma}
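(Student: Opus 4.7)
The plan is to exploit the supremum in (\ref{adaptation_objective_main_example_eq}) at two strategic values of $\tilde b$: the origin, and a near-maximizer of $\tilde b \mapsto E_{\tilde b}(\delta_\rho(T)-\tilde b)^2$. The first choice pins down the rate at which $m(\delta_\rho)=E_{T\sim N(0,1)}\delta_\rho(T)^2$ must vanish, and then Bickel's Lemma \ref{bickel_1983_lemma} converts this into a lower bound on the worst-case normal-means risk $M(\delta_\rho)=\sup_{\tilde b} E_{\tilde b}(\delta_\rho(T)-\tilde b)^2$. The second choice, together with the elementary bound $\rBNM\le 1$, promotes $M(\delta_\rho)$ into a lower bound on $A(\delta_\rho,\rho)$. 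Part (i) follows from the Appendix B identity that the worst-case risk of the estimator in (\ref{Bminimax_main_example_eq}) equals $\rho^2\Sigma_U M(\delta_\rho)+(1-\rho^2)\Sigma_U$.

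Concretely, evaluating (\ref{adaptation_objective_main_example_eq}) at $\tilde b=0$ and using $\rBNM(0)=0$ gives
\[
A(\delta_\rho,\rho)\ge 1+\frac{m(\delta_\rho)}{\rho^{-2}-1},
\]
so the hypothesis forces $m(\delta_\rho)\le (1+o(1))\cdot 2(\rho^{-2}-1)\log(\rho^{-2}-1)^{-1}\to 0$. Applying Lemma \ref{bickel_1983_lemma} with $1-t=m(\delta_\rho)$ then yields $M(\delta_\rho)\ge (1+o(1))\cdot 2\log m(\delta_\rho)^{-1}$. Since
\[
\log m(\delta_\rho)^{-1}=\log(\rho^{-2}-1)^{-1}-\log\{(1+o(1))\cdot 2\log(\rho^{-2}-1)^{-1}\}
\]
with the iterated-log correction of smaller order, one obtains $M(\delta_\rho)\ge (1+o(1))\cdot 2\log(\rho^{-2}-1)^{-1}$.

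Part (i) is now immediate: the worst-case risk of (\ref{Bminimax_main_example_eq}) equals $\rho^2\Sigma_U M(\delta_\rho)+(1-\rho^2)\Sigma_U=(1+o(1))\Sigma_U M(\delta_\rho)\ge (1+o(1))\cdot 2\Sigma_U\log(\rho^{-2}-1)^{-1}$, since $\rho^2\to 1$. For part (ii), fix $\epsilon\in(0,1)$ and pick $\mu_\rho$ with $E_{\mu_\rho}(\delta_\rho(T)-\mu_\rho)^2\ge (1-\epsilon)M(\delta_\rho)$; the identity estimator $\delta(Y)=Y$ has risk $1$ in the normal-means problem, so $\rBNM(|\mu_\rho|)\le 1$ and
\[
A(\delta_\rho,\rho)\ge \frac{(1-\epsilon)M(\delta_\rho)+\rho^{-2}-1}{1+(\rho^{-2}-1)}\ge (1-\epsilon)\rho^2 M(\delta_\rho)\ge (1-\epsilon)(1+o(1))\cdot 2\log(\rho^{-2}-1)^{-1}.
\]
Sending $\epsilon\downarrow 0$ gives (ii).

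The main obstacle is modest: it is the $(1+o(1))$ bookkeeping through the logarithm when inverting Bickel's lemma, which rests on the iterated-log correction being of lower order. The remaining non-trivial detail is verifying that a near-maximizer $\mu_\rho$ exists (i.e.\ $M(\delta_\rho)<\infty$), which follows because the hypothesis forces $\sup_{\tilde b} E_{\tilde b}(\delta_\rho(T)-\tilde b)^2 \le A(\delta_\rho,\rho)\,[1+(\rho^{-2}-1)]-(\rho^{-2}-1)<\infty$. Everything else is a clean pair of one-point bounds on the ratio in (\ref{adaptation_objective_main_example_eq}) glued together by $\rBNM\le 1$.
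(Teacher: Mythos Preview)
Your proposal is correct and follows essentially the same route as the paper's proof: evaluate the adaptation-regret ratio at $\tilde b=0$ to force $m(\delta_\rho)\to 0$, invoke Bickel's Lemma~\ref{bickel_1983_lemma} to convert this into the lower bound $M(\delta_\rho)\ge (1+o(1))\cdot 2\log(\rho^{-2}-1)^{-1}$ (absorbing the iterated-log correction), and then read off parts (i) and (ii) from the identity for the worst-case risk of (\ref{Bminimax_main_example_eq}) together with $\rBNM\le 1$. The only cosmetic differences are that the paper applies Lemma~\ref{bickel_1983_lemma} with $1-t$ equal to the explicit upper bound $(\rho^{-2}-1)(2+\varepsilon)\log(\rho^{-2}-1)^{-1}$ rather than $m(\delta_\rho)$ itself, and states the bound $A(\delta_\rho,\rho)\ge (1+o(1))M(\delta_\rho)$ for part (ii) without the near-maximizer construction; your version is slightly more explicit but equivalent. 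One notational slip: your displayed equality for $\log m(\delta_\rho)^{-1}$ should be an inequality $\ge$, since you only have an upper bound on $m(\delta_\rho)$---but the direction is the one you need, so the argument goes through.
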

\begin{proof}
  By the arguments Section \ref{details_and_proofs_appendix}, the
  worst-case risk of the estimator (\ref{Bminimax_main_example_eq}) with $\delta=\delta_\rho$ is given by
  $\Sigma_U\cdot\left[\rho^2\sup_{\mu}E_{T\sim N(\mu,1)}(\delta_\rho(T)-\mu)^2+
    1-\rho^2  \right]$.  As $|\rho|\uparrow 1$, this is bounded from below by a
  $1+o(1)$
  term times $\Sigma_U \sup_{\mu}E_{T\sim N(\mu,1)}(\delta_\rho(T)-\mu)^2$.
  Similarly, $A(\delta_\rho,\rho)$ is bounded from below by a $1+o(1)$ term
  times $\sup_{\mu}E_{T\sim N(\mu,1)}(\delta_\rho(T)-\mu)^2$ as $|\rho|\uparrow 1$.  Thus, it suffices
  to show that $\sup_{\mu}E_{T\sim N(\mu,1)}(\delta_\rho(T)-\mu)^2\ge
  (1+o(1))\cdot 2\log (\rho^{-2}-1)^{-1}$.

  To show this, note that it follows from plugging in $\tilde b=0$ to the objective in
  (\ref{adaptation_objective_main_example_eq}) that, for any $\varepsilon>0$, we
  have, for $|\rho|$ close enough to 1,
  \begin{align*}
    \frac{E_{T\sim N(0,1)}\delta_\rho(T)^2}{\rho^{-2}-1}
    \le A(\delta_\rho,\rho)\le (2+\varepsilon)\log (\rho^{-2}-1)^{-1}.
  \end{align*}
  Applying Lemma \ref{bickel_1983_lemma} with $1-t=(\rho^{-2}-1)\cdot
  (2+\varepsilon)\log (\rho^{-2}-1)^{-1}$, it follows that
  \begin{align*}
    &\sup_{\mu} E_{T\sim N(\mu,1)}(\delta_\rho(T)-\mu)^2
    \ge
    (1+o(1))\cdot 2\log \left[ (\rho^{-2}-1)\cdot
    (2+\varepsilon)\log (\rho^{-2}-1)^{-1} \right]  \\
    &=(1+o(1))\cdot \left[ 2\log (\rho^{-2}-1)
    +\log (2+\varepsilon)
    +\log \log (\rho^{-2}-1)^{-1} \right]
    =(1+o(1))\cdot 2\log (\rho^{-2}-1)
  \end{align*}
  as required.
  
\end{proof}

\section{Computational details}\label{appdx: lookup}

In this section, we provide additional details on our computation of the
adaptive estimator.

\subsection{Computing minimax estimators}

As shown in Sections \ref{adaptation_as_scaled_minimax_sec} and
\ref{main_example_computation_sec}, one can compute adaptive estimators by
solving a weighted minimax problem which, in our setting, can be further
simplified using invariance.  To solve these problems, we use the insight that
the minimax estimator can be characterized as a Bayes estimator for a
\emph{least favorable prior}.  We first give a brief review of this approach
before going into details for our setting.

Consider the generic problem of computing a minimax decision over the parameter
space $\mathcal{C}$ for a parameter $\vartheta$ under loss $\bar
L(\vartheta,\delta)$.  We use $E_\vartheta$ and $P_\vartheta$ to denote
expectation under $\vartheta$ and the probability distribution of the data $Y$
under $\vartheta$.
Letting $\pi$ denote a \emph{prior} distribution on $\mathcal{C}$, the \emph{Bayes risk} of $\delta$ is given by
\begin{align*}
\RBayes(\pi,\delta)=\int E_{\vartheta} \bar L(\vartheta,\delta(Y)) \, d\pi(\vartheta)
=\int \int \bar L(\vartheta,\delta(y)) \, dP_{\vartheta}(y) d\pi(\vartheta).
\end{align*}
The \emph{Bayes decision}, which we will denote $\delta_{\pi}^{\operatorname{Bayes}}$, optimizes $\RBayes(\pi,\delta)$ over $\delta$.  It can be computed by optimizing expected loss under the posterior distribution for $\vartheta$ taking $\pi$ as the prior.  Under squared error loss, the Bayes decision is the posterior mean.

$\RBayes(\pi,\delta)$ gives a lower bound for the worst-case risk of $\delta$ under $\mathcal{C}$ and $\RBayes(\pi,\delta_{\pi}^{\operatorname{Bayes}})$ gives a lower bound for the minimax risk.  Under certain conditions, a \emph{minimax theorem} applies, which tells us that this lower bound is in fact sharp.  In this case, letting $\Gamma$ denote the set of priors $\pi$ supported on $\mathcal{C}$, the minimax risk over $\mathcal{C}$ is given by
\begin{align*}
\min_{\delta}\max_{\pi\in\Gamma} \RBayes(\pi,\delta)
= \max_{\pi\in\Gamma}\min_{\delta} \RBayes(\pi,\delta)
= \max_{\pi\in\Gamma} \RBayes(\pi,\delta_{\pi}^{\operatorname{Bayes}}).
\end{align*}
The distribution $\pi$ that solves this maximization problem is called the \emph{least favorable prior}.  When the minimax theorem applies, the Bayes decision for this prior is the minimax decision over $\mathcal{C}$.

The expression $\RBayes(\pi,\delta_{\pi}^{\operatorname{Bayes}})$ is convex as a
function of $\pi$ if the set of possible decision functions is sufficiently
unrestricted and the set $\Gamma$ is convex. While one may need to allow
randomized decisions in general, the estimation problems we consider will be
such that the Bayes decision is nonrandomized. Thus, we can use convex
optimization software to compute the least favorable prior and minimax estimator
so long as we have a way of approximating $\pi$ with a finite dimensional object
that retains the convex structure of the problem.

In our setting, we use invariance arguments to obtain the objective function
(\ref{adaptation_objective_main_example_eq}), which is a minimax problem over
the unknown parameter $\tilde b=b/\sigma_O$ (the noncentrality parameter of the
overidentification statistic $T_O$).  We solve
(\ref{adaptation_objective_main_example_eq}), as well as the
bounded normal mean problem used to obtain the scaling in
(\ref{adaptation_objective_main_example_eq}), by solving for a least favorable
prior over $\tilde b$ using a finite dimensional approximation $\pi(\tilde
b_1),\ldots,\pi(\tilde b_J)$ to the prior over a grid of $J$ values of $\tilde b$.
The least favorable prior for $(\theta,b)$ is then given by a flat (improper)
prior for $\theta$ along with the corresponding prior for $\tilde b=b/\sigma_O$,
with the flat prior for $\theta$ following from invariance.
We now discuss the details of this approximation.

\subsection{Discrete approximation to estimators and risk function}

Operationally, discretizing the support of the random variable $T\in\mathcal{T}$
into $K$ points, finding an estimator $\bestimator(T)$ is equivalent
to finding a ``policy'' function $\bestimator\left(t\right):\mathcal{\mathcal{T}}\rightarrow\mathbb{R}$:
\[
\bestimator\left(t\right)=\sum_{k=1}^{K}\psi_{k}1\left\{ t=t_{k}\right\} .
\]
Hence, we can rewrite the risk of estimator $\bestimator(T)$ when $\ensuremath{T\sim N(b,1)}$
as
\begin{equation}
E_{T\sim N(b,1)}\left(\sum_{k=1}^{K}\psi_{k}1\left\{ T=t_{k}\right\} -b\right)^{2}.\label{eq:obj-bounded-normal-mean}
\end{equation}

Define $\mu_{kb}=\Pr_{T\sim N(b,1)}\left(T=t_{k}\right)$ as the probability
of falling into the $k$'th grid point given bias $b$, which can
be evaluated analytically via the following discrete approximation
to the normal distribution 
\begin{equation}
\mu_{kb}=\Phi\left(\left(t_{k}+t_{k+1}\right)/2-b\right)-\Phi\left(\left(t_{k}+t_{k-1}\right)/2-b\right), \label{eq:discrete density}
\end{equation}
where we define $t_{0}=-\infty$ and $t_{K+1}=\infty$, which ensures
that $\sum_{k=1}^{K}\mu_{kb}=1$. The discretized approximation to
the risk function (\ref{eq:obj-bounded-normal-mean}) is therefore
\begin{equation}
\sum_{k=1}^{K}\psi_{k}^{2}\mu_{kb}-2b\sum_{k=1}^{K}\psi_{k}\mu_{kb}+b^{2}.\label{eq:risk-approx}
\end{equation}

\subsection{Computing minimax risk in the bounded normal mean problem}\label{appdx: lookup_bnm}

We now provide details on how to compute the minimax risk $\rBNM(|\tilde{b}|)$
in the bounded normal mean problem, which allows us to easily compute
the $B$-minimax risk as described in \eqref{Rminimax_main_example_eq}
for each $B\in\mathcal{B}$.

By definition, the minimax risk $\rBNM(|\tilde{b}|)$ is the minimized
value of the following minimax problem 

\[
\min_{\delta}\max_{b\in[-|\tilde{b}|,|\tilde{b}|]}E_{T\sim N(b,1)}(\delta(T)-b)^{2}
\]
whose solution is the minimax estimator $\ensuremath{\deltaBNM\left(T;|\tilde{b}|\right)}$.
In particular, for each $|\tilde{b}|=B/\sigma_{O}\in\{0.1,0.2,\dots,9\}$
we calculate the minimax risk $\rBNM(|\tilde{b}|)$ following the
steps below. To compute the minimax risk function $\rBNM(|\tilde{b}|)$
for values of $|\tilde{b}|$ that are not included in the fine grid,
we rely on spline interpolation. 

\begin{enumerate}
\item Approximate the prior $\pi$ with the finite dimensional vector $\pi\in\Delta^{J}$, where the parameter space $[-|\tilde{b}|,|\tilde{b}|]$ is approximated
by an equally spaced grid of $b$ values spanning $[-|\tilde{b}|,|\tilde{b}|]$
with a step size of 0.05, totaling to $J$ grid values. Approximate
the conditional risk function as in (\ref{eq:risk-approx}), where
the support for $T\sim N(b,1)$ is approximated by an equally spaced
grid of $t$ values spanning $[-|\tilde{b}|-3,|\tilde{b}|+3]$ with
a step size of 0.1, totaling to $K$ grid values. The minimax problem
becomes
\begin{equation}
\max_{\pi\in\Delta^{J}}\min_{\left\{ \psi_{k}\right\} _{k=1}^{K}}\sum_{\ell=1}^{J}\pi_{\ell}\left(\sum_{k=1}^{K}\psi_{k}^{2}\mu_{kb_{\ell}}-2b_{\ell}\sum_{k=1}^{K}\psi_{k}\mu_{kb_{\ell}}+b_{\ell}^{2}\right).\label{eq:maxmin-bounded-normal-mean}
\end{equation}
\item The solution to the inner optimization yields the posterior mean $\psi_{k}^{*}\left(\pi\right)=\frac{\sum_{\ell=1}^{J}\pi_{\ell}\mu_{kb_{\ell}}b_{\ell}}{\sum_{\ell=1}^{J}\pi_{\ell}\mu_{kb_{\ell}}}$.
The outer problem is then 
\[
\max_{\pi\in\Delta^{J}}\sum_{\ell=1}^{J}\pi_{\ell}\left(\sum_{k=1}^{K}\left(\psi_{k}^{*}\left(\pi\right)\right)^{2}\mu_{kb_{\ell}}-2b_{\ell}\sum_{k=1}^{K}\psi_{k}^{*}\left(\pi\right)\mu_{kb_{\ell}}+b_{\ell}^{2}\right).
\]
\item Solve the outer problem for the least favorable prior $\pi^{\ast}$
based on sequential quadratic programming via MATLAB's \texttt{fmincon} routine.
The minimax estimator $\ensuremath{\deltaBNM\left(T;|\tilde{b}|\right)}$
is therefore $\sum_{k=1}^{K}\psi_{k}^{*}\left(\pi^{\ast}\right)1\left\{ t=t_{k}\right\} $
and the minimax risk $\rBNM(|\tilde{b}|)$ is the minimized value.
\end{enumerate}
Since the objective is concave in $\pi$ \citepOnline[it is the pointwise infimum over a set of linear functions; see][p. 81]{boyd_convex_2004}, we can check that the algorithm has found a global maximum by checking for a local maximum.

\subsection{Computing the optimally adaptive estimator for a given $\rho^{2}$}\label{appdx: lookup_opt}

As explained in the main text, the adaptive problem only depends on
$\Sigma$ through the correlation coefficient $\rho^{2}$. For a given
value of $\rho^{2}$, we use convex programming methods to solve for
the function $\tildedeltaadapt(t;\rho^2)$ based on the steps described
below.

\begin{enumerate}
\item Approximate the prior $\pi$ with the finite dimensional vector $\pi\in\Delta^{J}$,
where the parameter space for $b/\sigma_{O}$ is approximated by an equally spaced grid
of $\tilde{b}$ values spanning $[-9,9]$ with a step size of 0.025,
totaling to $J$ grid values. Approximate the conditional risk function
as in (\ref{eq:risk-approx}), where the support for $T\sim N(\tilde{b},1)$
is approximated by an equally spaced grid of $t$ values spanning
$[-12,12]$ with a step size of 0.05, totaling to $K$ grid values.
The adaptation problem \eqref{adaptation_objective_main_example_eq}
becomes
\begin{equation}
\max_{\pi\in\Delta^{J}}\min_{\left\{ \psi_{k}\right\} _{k=1}^{K}}\sum_{\ell=1}^{J}\pi_{\ell}\omega_{\ell}\left(\sum_{k=1}^{K}\psi_{k}^{2}\mu_{kb_{\ell}}-2b_{\ell}\sum_{k=1}^{K}\psi_{k}\mu_{kb_{\ell}}+b_{\ell}^{2}\right)+\rho^{-2}-1\label{eq:maxmin-discretized-adaptation}
\end{equation}
where $\omega_{\ell}=\left(\rBNM(|\tilde{b}_\ell|)+\rho^{-2}-1\right)^{-1}$using
output from the previous subsection.
\item The solution to the inner optimization yields $\psi_{k}^{*}\left(\pi\right)=\frac{\sum_{\ell=1}^{J}\pi_{\ell}\mu_{kb_{\ell}}\omega_{\ell}b_{\ell}}{\sum_{\ell=1}^{J}\pi_{\ell}\mu_{kb_{\ell}}\omega_{\ell}}$.
The outer problem is then 
\[
\max_{\pi\in\Delta^{J}}\sum_{\ell=1}^{J}\pi_{\ell}\omega_{\ell}\left(\sum_{k=1}^{K}\left(\psi_{k}^{*}\left(\pi\right)\right)^{2}\mu_{kb_{\ell}}-2b_{\ell}\sum_{k=1}^{K}\psi_{k}^{*}\left(\pi\right)\mu_{kb_{\ell}}+b_{\ell}^{2}\right)+\rho^{-2}-1.
\]
\item Solve the outer problem for the least favorable (adaptive) prior $\pi^{\ast}$
based on sequential quadratic programming via Matlab's fmincon routine.
The adaptive estimator $\tildedeltaadapt(t;\rho^2)$ is therefore $\sum_{k=1}^{K}\psi_{k}^{*}\left(\pi^{\ast}\right)1\left\{ t=t_{k}\right\} $.
The loss of efficiency under adaptation is the minimized value.
\end{enumerate}
As with the bounded normal mean problem, the objective is concave in $\pi$, so we can check that the algorithm has found a global maximum by checking for a local maximum.

This algorithm is a finite dimensional approximation to the optimization problem in Theorem \ref{main_example_thm_main_text}(\ref{main_example_thm_lea}).
While Theorem \ref{main_example_thm_main_text}(\ref{main_example_thm_lea}) does not formally show the existence of a solution to this infinite dimensional problem, we find that the algorithm reliably converges to a global maximum, and that the least favorable prior stabilizes as the number of gridpoints and range of the grid increase.
Based on this numerical finding, we conjecture that the minimax problem in Theorem \ref{main_example_thm_main_text}(\ref{main_example_thm_lea}) admits a least favorable prior, and that this solution can be approximated arbitrarily well using the our grid approach.

\subsection{Computing the optimally adaptive estimator based on the lookup table}\label{appdx: lookup_details}

To simplify the computation of the optimally adaptive estimator, we pre-calculate the adaptive estimates over an unequally spaced grid $\tanh([0,0.05,0.10,\dots,3])$ of correlation coefficients using the algorithm described above. As $\rho^2$ approaches one, the solution becomes sensitive to small changes in $\rho$. The uneven spacing of the $\rho$ grid allows for more accurate interpolation based on the simple pre-tabulated lookup table that we describe next. 

To rapidly obtain a final estimator $\tildedeltaadapt(T_O;\rho^2)$ for a given application, we conduct 2D interpolation across $\rho^2$ and $t$ values to tailor the adaptive estimates 
to the exact parameter values desired. For example, we obtain  $\tildedeltaadapt\left(T_{O};(-0.524)^2\right)$
based on spline interpolation at $\rho^2=(-0.524)^2$ together with
the observed test statistic $T_{O}$ based on the 2D grid of  $\rho^2$ and $t$ values.

\subsection{Computing the analytic adaptive estimators}\label{appdx:st}
 To find the analytic adaptive estimators in the class of ERM estimators, soft-thresholding estimators and hard-thresholding estimators, it suffices to  solve the two dimensional minimax problem in threshold $\lambda$ and scaled bias level $\tilde{b}$.  
We provide details for the claim in the main text that this two dimensional minimax problem can be easily solved even though the minimax theorem does not apply to these restricted classes of estimators.  To simplify the computation of the analytic adaptive estimator in practice, we pre-calculate
the adaptive thresholds $\lambda$ over an unequally spaced grid $\tanh([0,0.05,0.10,\dots,3])$
of correlation coefficients as explained above. To rapidly obtain
a final estimator, for example, soft-thresholding estimator  $\delta_{S,\lambda}\left(T_{O}\right)$ for
a given application, we conduct a spline interpolation across $\rho^{2}$
values to tailor the threshold to the exact parameter values desired.
For example, we obtain $\delta_{S,\lambda}\left(T_{O}\right)$
firstly based on spline interpolation at $\rho^{2}=(-0.524)^{2}$
to obtain the threshold $\lambda$, and then with the observed test
statistic $T_{O}$.

The derivation for soft and hard-thresholding is largely based on the following equality using moments of a truncated standard normal $X_i\mid a<X_i<b$.  Let $\phi(x)$ and $\Phi(x)$ denote the pdf and cdf of a standard normal distribution.  Then for any $ a < b $, we have 
\begin{eqnarray}
    \int_{a}^{b}x^{2}\phi(x)dx=\Phi\left(b\right)-\Phi\left(a\right)-\left(b\phi(b)-a\phi(a)\right).
\label{truncated_normal_lemma}
\end{eqnarray}

\subsubsection{Soft-thresholding}
Rewrite the soft-thresholding estimator as $\delta_{S,\lambda}\left(T_{O}\right)=\mathbf{1}\left\{ T_{O}>\lambda\right\} \left(T_{O}-\lambda\right)+\mathbf{1}\left\{ T_{O}<-\lambda\right\} \left(T_{O}+\lambda\right)$ and its risk function can be expressed as
\begin{eqnarray}
 &  & E_{T_{O}\sim N(\tilde{b},1))}\left(\delta_{S,\lambda}\left(T_{O}\right)-\tilde{b}\right)^{2}\nonumber \\
 & = & E_{T_{O}\sim N(\tilde{b},1)}\left(\mathbf{1}\left\{ T_{O}>\lambda\right\} \left(T_{O}-\lambda-\tilde{b}\right)+\mathbf{1}\left\{ T_{O}<-\lambda\right\} \left(T_{O}+\lambda-\tilde{b}\right)-\mathbf{1}\left\{ -\lambda<T_{O}<\lambda\right\} \tilde{b}\right)^{2}\nonumber \\
 & = & \tilde{b}^{2}\left(\Phi\left(\lambda-\tilde{b}\right)-\Phi\left(-\lambda-\tilde{b}\right)\right)+\int_{\lambda-\tilde{b}}^{\infty}\left(x-\lambda\right)^{2}\phi(x)dx+\int_{-\infty}^{-\lambda-\tilde{b}}\left(x+\lambda\right)^{2}\phi(x)dx\label{eq:st-integral}
\end{eqnarray}

The integrals in (\ref{eq:st-integral}) simplify to
\begin{align*}
 & \int_{\lambda-\tilde{b}}^{\infty}\left(x-\lambda\right)^{2}\phi(x)dx+\int_{-\infty}^{-\lambda-\tilde{b}}\left(x+\lambda\right)^{2}\phi(x)dx\\
= & \int_{\lambda-\tilde{b}}^{\infty}x^{2}\phi(x)dx+\int_{-\infty}^{-\lambda-\tilde{b}}x^{2}\phi(x)dx\\
 & -2\lambda\left(\int_{\lambda-\tilde{b}}^{\infty}x\phi(x)dx-\int_{-\infty}^{-\lambda-\tilde{b}}x\phi(x)dx\right)\\
 & +\lambda^{2}\left(1-\Phi\left(\lambda-\tilde{b}\right)+\Phi\left(-\lambda-\tilde{b}\right)\right)\\
= & 1-\Phi\left(\lambda-\tilde{b}\right)+\Phi\left(-\lambda-\tilde{b}\right)+\left((\lambda-\tilde{b})\phi(\lambda-\tilde{b})-(-\lambda-\tilde{b})\phi(-\lambda-\tilde{b})\right)\\
 & -2\lambda\left(\phi(\lambda-\tilde{b})+\phi(-\lambda-\tilde{b})\right)+\lambda^{2}\left(1-\Phi\left(\lambda-\tilde{b}\right)+\Phi\left(-\lambda-\tilde{b}\right)\right)
\end{align*}
where we use the fact that $\int_{\lambda-\tilde{b}}^{\infty}x^{2}\phi(x)dx+\int_{-\infty}^{-\lambda-\tilde{b}}x^{2}\phi(x)dx=\int_{-\infty}^{\infty}x^{2}\phi(x)dx-\int_{-\lambda-\tilde{b}}^{\lambda-\tilde{b}}x^{2}\phi(x)dx$ and Equation~\eqref{truncated_normal_lemma}.

The analytic adaptive objective function  
\[
\min_{\lambda}\max_{\tilde{b}}\frac{E_{T_{O}\sim N(\tilde{b},1))}\left(\delta_{S,\lambda}\left(T_{O}\right)-\tilde{b}\right)^{2}+\rho^{-2}-1}{\rBNM(|\tilde b|) + \rho^{-2}-1},
\]
can now be easily solved by Matlab's \texttt{fminimax} function when the risk function is evaluated based on the simplified expression derived above, and the parameter space for $\tilde b$ is approximated by an equally spaced grid
 values spanning $[-9,9]$ with a step size of 0.025.

\subsubsection{Hard-thresholding}
Similarly rewrite hard-thresholding as $\delta_{H,\lambda}\left(T_{O}\right)=\left(1-\mathbf{1}\left\{ -\lambda<T_{O}<\lambda\right\} \right)T_{O}$ and its risk function can be simplified due to  Equation~\eqref{truncated_normal_lemma}
\begin{eqnarray*}
 &  & E_{T_{O}\sim N(\tilde{b},1))}\left(\delta_{H,\lambda}\left(T_{O}\right)-\tilde{b}\right)^{2}\\
 & = & E_{T_{O}\sim N(\tilde{b},1)}\left(\left(1-\mathbf{1}\left\{ -\lambda<T_{O}<\lambda\right\} \right)\left(T_{O}-\tilde{b}\right)-\mathbf{1}\left\{ -\lambda<T_{O}<\lambda\right\} \tilde{b}\right)^{2}\\
 & = & \tilde{b}^{2}\left(\Phi\left(\lambda-\tilde{b}\right)-\Phi\left(-\lambda-\tilde{b}\right)\right)+\int_{-\infty}^{\infty}x^{2}\phi(x)dx-\int_{-\lambda-\tilde{b}}^{\lambda-\tilde{b}}x^{2}\phi(x)dx.
\end{eqnarray*} 
\subsubsection{Adaptive ERM}
For the adaptive ERM estimator $\delta_{ERM,\lambda}(T_{O})=\frac{T_{O}^{2}}{T_{O}^{2}+\lambda}\cdot T_{O}$, we evaluate the risk function based on $10^5$ simulations draws from $T_{O}\sim N(\tilde{b},1)$ and similarly optimize $\lambda$ for the analytic adaptive objective function.
\section{Pooling controls (LaLonde, 1986)} \label{sec:lalonde}
\citetOnline{lalonde1986evaluating} contrasted experimental estimates of the causal effects of job training derived from the National Supported Work (NSW) demonstration with econometric estimates derived from observational controls, concluding that the latter were highly sensitive to modeling choices. 
Subsequent work by \citeOnline{heckman1989choosing} argued that proper use of specification tests would have guarded against large biases in \citetOnline{lalonde1986evaluating}'s setting. An important limitation of the NSW experiment, however, is that its small sample size inhibits a precise assessment of the magnitude of selection bias associated with any given non-experimental estimator.  In what follows, we explore the prospects of improving experimental estimates of the NSW's impact on earnings by utilizing additional non-experimental control groups and adapting to the biases their inclusion engenders.

We consider three analysis samples differentiated by the origin of the untreated (``control'') observations. All three samples include the experimental NSW treatment group observations. In the first sample the untreated observations are given by the experimental NSW controls. In a second sample the controls come from \citeOnline{lalonde1986evaluating}'s observational ``CPS-1'' sample, as reconstructed by \citeOnline{dehejia1999causal}. In the third sample, the controls are a propensity score screened subsample of CPS-1. To estimate treatment effects in the samples with observational controls, we follow \citeOnline{angrist2009mostly} in fitting linear models for 1978 earnings to a treatment dummy, 1974 and 1975 earnings, a quadratic in age, years of schooling, a dummy for no degree, a race and ethnicity dummies, and a dummy for marriage status. The propensity score is generated by fitting a probit model of treatment status on the same covariates and dropping observations with predicted treatment probabilities outside of the interval $[0.1,0.9]$. 

Let $Y_U$ be the mean treatment / control contrast in the experimental NSW sample. We denote by $Y_{R1}$ the estimated coefficient on the treatment dummy in the linear model described above when the controls are drawn from the CPS-1 sample. Finally, $Y_{R2}$ gives the corresponding estimate obtained from the linear model when the controls come from the propensity score screened CPS-1 sample. We follow the applied literature in assuming trimming does not meaningfully change the estimand, a perspective that can be formalized by viewing the trimmed estimator as one realization of a sequence of estimators with trimming shares that decrease rapidly with the sample size \citepOnline{huber2013performance}.

Table~\ref{tab:minimax-training} reports point estimates from all three estimation approaches along with standard errors derived from the pairs bootstrap. The realizations of $(Y_{R1},Y_{R2})$ exactly reproduce those found in the last row of Table 3.3.3 of \citeOnline{angrist2009mostly} but the reported standard errors are somewhat larger due to our use of the bootstrap, which accounts both for heteroscedasticity and uncertainty in the propensity score screening procedure. The realization of $Y_U$ matches the point estimate reported in the first row of \citeOnline{angrist2009mostly}'s Table 3.3.3 but again exhibits a modestly larger standard error reflecting heteroscedasticity with respect to treatment status.

\begin{table}[h]
\caption{\label{tab:minimax-training}Estimates of the impact of
NSW job training on earnings.}
\begin{centering}
\begin{tabular}{cccccccc}
\hline 
 & $Y_{U}$ & $Y_{R1}$ & $Y_{R2}$ & $GMM_{2}$ & $GMM_{3}$ & Adaptive & Pre-test\tabularnewline
\hline 
\hline 
Estimate & 1794 & 794 & 1362 & 1629 & 1210 & 1596& 1629\tabularnewline
Std error & (668) & (617) & (741) & (619) & (595) & & \tabularnewline
Max Regret & 26\% & $\infty$ & $\infty$ & $\infty$ & $\infty$ & 7.8\% & 47.5\%\tabularnewline
Risk rel. to $Y_U$ &&&&&& \tabularnewline
when $b_{1}=0$ and $b_{2}=0$ & 1 & 0.853 & 1.23 & 0.858 & 0.793 & 0.855 & 0.80\tabularnewline
when $b_{1}\neq0$ and $b_{2}=0$ & 1 & $\infty$ & 1.23 & 0.858 & $\infty$ & 0.925 & 0.993\tabularnewline
when $b_{1}\neq0$ and $b_{2}\neq0$ & 1 & $\infty$ & $\infty$ & $\infty$ & $\infty$ & 1.078 &1.475 \tabularnewline

\hline 
\end{tabular}
\par\end{centering}
\footnotesize{Notes: Bootstrap standard errors in parentheses computed using 1,000 bootstrap samples. The $GMM_{2}$
estimate imposes $b_{2}=0$ only while the $GMM_{3}$ estimate imposes
$b_{1}=0$ and $b_{2}=0$. A $J$-test of the null $b_{1}=b_{2}=0$ motivating $GMM_3$
yields a p-value at 0.04. A corresponding test of the null $b_{2}=0$ motivating $GMM_2$
yields a p-value of 0.51. ``Risk rel. to $Y_U$'' gives worst case risk scaled by the risk (i.e. variance) of $Y_{U}$. ``Max regret'' refers to the worst case adaptation regret in percentage terms $(A_{\max}(\mathcal{B},\estimator)-1)\times100$.}

\end{table}

While the experimental mean contrast ($Y_U$) of \$1,794 is statistically distinguishable from zero at the 5\% level, considerable uncertainty remains about the magnitude of the average treatment effect of the NSW program on earnings. The propensity trimmed CPS-1 estimate lies closer to the experimental estimate than does the estimate from the untrimmed CPS-1 sample. However, the untrimmed estimate has a much smaller standard error than its trimmed analogue. Though the two restricted estimators are both derived from the CPS-1 sample, our bootstrap estimate of the correlation between them is only 0.75, revealing that each measure contains substantial independent information.

Combining the three estimators together via GMM, a procedure we denote $GMM_3$, yields roughly an 11\% reduction in standard errors relative to relying on $Y_U$ alone. However, the $J$-test associated with the $GMM_3$ procedure rejects the null hypothesis that the three estimators share the same probability limit at the 5\% level ($p=0.04)$. Combining only $Y_U$ and $Y_{R2}$ by GMM, a procedure we denote $GMM_2$, yields a standard error 7\% below that of $Y_U$ alone. The $J$-test associated with $GMM_2$ fails to reject the restriction that $Y_U$ and $Y_{R2}$ share a common probability limit ($p=0.51$). Hence, sequential pre-testing selects $GMM_2$.

Letting $b_1\equiv\mathbb{E}[Y_{R1}-\theta]$ and $b_2\equiv\mathbb{E}[Y_{R2}-\theta]$ our pre-tests reject the null that $b_1=b_2=0$ and fail to reject that $b_2=0$. However, it seems plausible that both restricted estimators suffer from some degree of bias. The adaptive estimator seeks to determine the magnitude of those biases and make the best possible use of the observational estimates. In adapting to misspecification, we operate under the assumption that $|b_1|\geq|b_2|$, which is in keeping with the common motivation of propensity score trimming as a tool for bias reduction \citepOnline[e.g.,][Section 3.3.3]{angrist2009mostly}. Denoting the bounds on $(|b_1|,|b_2|)$ by $(B_1,B_2)$, we adapt over the finite collection of bounds $\mathcal{B}=\{(0,0),(\infty,0),(\infty,\infty)\}$, the granular nature of which dramatically reduces the computational complexity of finding the optimally adaptive estimator. Note that the scenario $(B_1,B_2)=(0,\infty)$ has been ruled out by assumption, reflecting the belief that propensity score trimming reduces bias. See \ref{appdx: lalonde} for further details.

From Table~\ref{tab:minimax-training}, the multivariate adaptive estimator yields an estimated training effect of \$1,596: roughly two thirds of the way towards $Y_U$ from the efficient $GMM_3$ estimate. Hence, the observational evidence, while potentially quite biased, leads to a non-trivial (11\%) adjustment of our best estimate of the effect of NSW training away from the experimental benchmark. In Table~\ref{tab:bickel-nsw} we show that pairwise adaptation using only $Y_U$ and $Y_{R1}$ or only $Y_U$ and $Y_{R2}$ yields estimates much closer to $Y_U$. A kindred approach, which avoids completely discarding the information in either restricted estimator, is to combine $Y_{R1}$ and $Y_{R2}$ together via optimally weighted GMM and then adapt between $Y_U$ and the composite GMM estimate. As shown in Table~\ref{tab:bickel-nsw-composite}, this two step approach yields an estimate of \$1,624, extremely close to the multivariate adaptive estimate of \$1,596, but comes with substantially elevated worst case adaptation regret relative to a multivariate oracle who knows which pair of bounds in $\mathcal{B}$ prevails.

While the multivariate adaptive estimate of \$1,596 turns out to be very close to the pre-test estimate of \$1,629, the adaptive estimator's worst case adaptation regret of 7.8\% is substantially lower than that of the pre-test estimator, which exhibits a maximal regret of 47.5\%. The adaptive estimator achieves this advantage by equalizing the maximal adaptation regret across the three bias scenarios $\{(b_1=0,b_2=0), (b_1\neq0,b_2=0),(b_1\neq0,b_2\neq0)\}$ allowed by our specification of $\mathcal{B}$. When both restricted estimators are unbiased, the adaptive estimator yields a 14.5\% reduction in worst case risk relative to $Y_U$. However, an oracle that knows both restricted estimators are unbiased would choose to employ $GMM_3$, implying maximal adaptation regret of $0.855/0.793\approx 1.078$. When $Y_{R1}$ is biased, but $Y_{R2}$ is not, the adaptive estimator yields a 7.5\% reduction in worst case risk. An oracle that knows only $Y_{R1}$ is biased will rely on $GMM_2$, which yields worst case scaled risk of 0.858; hence, the worst case adaptation regret of not having employed $GMM_2$ in this scenario is $0.925/0.858\approx 1.078.$ Finally, when both restricted estimators are biased, the adaptive estimator can exhibit up to a 7.8\% increase in risk relative to $Y_U$. 

 The near oracle performance of the optimally adaptive estimator in this setting suggests it should prove attractive to researchers with a wide range of priors regarding the degree of selection bias present in the CPS-1 samples. Both the skeptic that believes the restricted estimators may be immensely biased and the optimist who believes the restricted estimators are exactly unbiased should face at most a 7.8\% increase in maximal risk from using the adaptive estimator. In contrast, an optimist could very well object to a proposal to rely on $Y_U$ alone, as doing so would raise risk by 26\% over employing $GMM_3$.

\section{Details of bivariate adaptation}\label{appdx: lalonde}

In \ref{sec:lalonde}, we report the results of adapting simultaneously to the bias in two restricted estimators when the bias spaces take a nested structure. Denoting the bounds on $(|b_1|,|b_2|)$ of the two restricted estimators by $(B_1,B_2)$, we adapt over the finite collection of bounds $\mathcal{B}=\{(0,0),(\infty,0),(\infty,\infty)\}$. Note that the scenario $(B_1,B_2)=(0,\infty)$ has been ruled out by assumption, reflecting the belief that propensity score trimming reduces bias.   
The minimax risk over each bias space $\mathcal{C}_{(B_1,B_2)}$ is therefore
\begin{equation}
R^{\ast}(\mathcal{C}_{(B_1,B_2)})=\begin{cases}
\Sigma_{U} & \text{ for }(B_1,B_2)=(\infty,\infty)\\
\Sigma_{U}-\Sigma_{UO,2}\Sigma_{O,2}^{-1}\Sigma_{UO,2} & \text{ for }(B_1,B_2)=(\infty,0)\\
\Sigma_{U}-\Sigma_{UO}\Sigma_{O}^{-1}\Sigma_{UO} & \text{ for }(B_1,B_2)=(0,0)
\end{cases}\label{eq:bickel p=00003D2}
\end{equation}
 Then $\delta(Y_{O})$ is the solution to the following problem

\[
\inf_{\delta}\max_{(B_1,B_2)\in\mathcal{B}}\frac{\max_{b\in\mathcal{C}_{(B_1,B_2)}}E_{Y_{O}\sim N(b,\Sigma_{O})}(\delta(Y_{O})-\Sigma_{UO}\Sigma_{O}^{-1}b)^{2}+\Sigma_{U}-\Sigma_{UO}\Sigma_{O}^{-1}\Sigma_{UO}}{R^{\ast}(\mathcal{C}_{(B_1,B_2)})}
\]
 Since the three spaces are nested, we can rewrite the adaptation
problem as

\[
\inf_{\delta}\sup_{b\in\mathbb{R}\times\mathbb{R}}\frac{E_{Y_{O}\sim N(b,\Sigma_{O})}(\delta(Y_{O})-\Sigma_{UO}\Sigma_{O}^{-1}b)^{2}+\Sigma_{U}-\Sigma_{UO}\Sigma_{O}^{-1}\Sigma_{UO}}{\tilde{R}(\tilde{\mathcal{S}}(b))}
\]

where the scaling is
\begin{equation}
\tilde{R}(\tilde{\mathcal{S}}(b))=\protect\begin{cases}
\Sigma_{U}-\Sigma_{UO}\Sigma_{O}^{-1}\Sigma_{UO} & \text{ if }b_{1}=b_{2}=0\protect\\
\Sigma_{U}-\Sigma_{UO,2}\Sigma_{O,2}^{-1}\Sigma_{UO,2} & \text{ if }b_{1}\protect\neq0,b_{2}=0\protect\\
\Sigma_{U} & \text{ if }b_{1}\protect\neq0,b_{2}\protect\neq0
\protect\end{cases}\label{eq:scaling p=00003D2}
\end{equation}

Given the high dimensionality of the adaptation problem, we use CVX instead of Matlab's fmincon to solve the scaled minimax problem.

\subsection{Pairwise adaptation}

For comparison with the trivariate adaptation estimates reported in the text, we also consider pairwise adaptation using only $Y_U$ and
$Y_{R1}$ or only $Y_U$ and $Y_{R2}$, keeping the bias spaces as before.  Specifically to adapt using only $Y_U$ and
$Y_{Rj}$, we consider an oracle  where the set $\mathcal{B}$ of bounds $B$ on the bias consists of the two elements $0$ and $\infty$.

\begin{table}[h]
\caption{\label{tab:bickel-nsw}Pairwise adaptive estimates}
\begin{centering}
\begin{tabular}{ccccccc}
\hline 
 & $Y_{U}$ & $Y_{R}$ & GMM & Adaptive & Soft-threshold & Pre-test\tabularnewline
\hline 
\hline 
CPS-1 untrimmed & 1794 & 794 & 1123 & 1659 & 1608 & 1794\tabularnewline
Std error & (668) & (617) & (600) &  &  & \tabularnewline
Rel. risk when $b=0$ & 1 & 0.85 & 0.81 & 0.863 & 0.869 & 0.895\tabularnewline
Rel. risk when $b\neq0$ & 1 & $\infty$ & $\infty$ & 1.071 & 1.078 & 1.539\tabularnewline
Max Regret & 24\% & $\infty$ & $\infty$ & 7.1\% & 7.8\% & 54\%\tabularnewline
Max Regret & 26\% & $\infty$ & $\infty$ & 24.8\% & 25.6\% & 79.4\%\tabularnewline
(rel. to multivariate)&&&&&&\tabularnewline
Threshold &  &  &  &  & 0.63 & 1.96\tabularnewline
\hline 
CPS-1 trimmed & 1794 & 1362 & 1629 & 1657 & 1638 & 1362\tabularnewline
Std error & (668) & (741) & (619) &  &  & \tabularnewline
Rel. risk when $b=0$ & 1 & 1.23 & 0.86 & 0.9 & 0.91 & 1.166\tabularnewline
Rel. risk when $b\neq0$ & 1 & $\infty$ & $\infty$ & 1.05 & 1.055 & 2.043\tabularnewline
    Max Regret & 16.5\% & $\infty$ & $\infty$ & 5\% & 5.5\% & 104\%\tabularnewline
Max Regret & 26\% & $\infty$ & $\infty$ & 13.6\% & 14.2\% & 104\%\tabularnewline
(rel. to multivariate)&&&&&&\tabularnewline
Threshold &  &  &  &  & 0.62 & 1.96\tabularnewline
\hline 
\end{tabular}
\par\end{centering}
\footnotesize{Notes: Bootstrap standard errors in parentheses computed using 1,000 bootstrap samples. In the top panel $Y_R$ corresponds to estimates using the untrimmed CPS-1 as controls, which are referred to as $Y_{R1}$ in the main text. In the bottom panel, $Y_R$ corresponds to estimates derived from the propensity score trimmed CPS-1 sample, which are referred to as $Y_{R2}$ in the main text. Adaptive estimates adapt pairwise between $Y_U$ and $Y_R$
within panel. If applicable,
the adaptive thresholds are reported. ``Max regret'' refers to the worst case adaptation regret in percentage terms $(A_{\max}(\mathcal{B},\estimator)-1)\times100$. ``Max Regret (rel. to multivariate)? refers to the worst case adaptation regret in terms of the multivariate oracle.  ``Rel. risk'' gives worst case risk scaled by the risk (i.e. variance) of $Y_{U}$. The correlation between $Y_U$ and $Y_{Rj}-Y_{U}$
is -0.44 in the top panel and -0.38 in the bottom panel.}

\end{table}
 
Table~\ref{tab:bickel-nsw} shows that pairwise adaptation produces estimates much closer to $Y_U$ than the multivariate adaptive estimate.  While pairwise adaptive estimates both incur smaller adaptation regret, the efficiency gain when the model is correct is smaller than with the multivariate adaptive estimate.

\subsection{Bivariate adaptation with GMM composite}

\begin{table}[h!]
\caption{\label{tab:bickel-nsw-composite}Adapting pairwise with GMM composite}
\begin{centering}
\begin{tabular}{ccccccc}
\hline 
 & $Y_{U}$ & $Y_{\text{comp}}$ & GMM & Adaptive & Soft-threshold & Pre-test\tabularnewline
\hline 
\hline 
Estimate & 1794 & 882 & 1173 & 1624 & 1602 & 1794\tabularnewline
Std error & (668) & (612) & (595) &  &  & \tabularnewline
Max Regret & 26\% & $\infty$ & $\infty$ & 8\% & 8.3\% & 56\%\tabularnewline
Max Regret  & 26\% & $\infty$ & $\infty$ & 25.4\% & 26.3\% & 81.5\%\tabularnewline
(rel. to multivariate)&&&&&&\tabularnewline
Threshold &  &  & $\infty$ &  & 0.64 & 1.96\tabularnewline
\hline 
\end{tabular}
\par\end{centering}
\footnotesize{Notes: Adaptive estimates for the impact of
job training, adapting to $B_{\text{comp}}\in\{0,\infty\}$, which
is the bound on the bias of the composite estimator $Y_{\text{comp}}=\arg\min_{\theta}(Y_{R}-\theta)'\Sigma_{R}^{-1}(Y_{R}-\theta).$ GMM combines $Y_{\text{comp}}$ and $Y_U$ optimally under the assumption that $Y_{\text{comp}}$ is unbiased.
If applicable, the adaptive thresholds are reported. ``Max regret'' refers to the worst case adaptation regret in percentage terms $(A_{\max}(\mathcal{B},\estimator)-1)\times100$.  ``Max Regret (rel. to multivariate)'' refers to the worst case adaptation regret relative to the multivariate oracle in \eqref{eq:bickel p=00003D2}.  The correlation coefficient between $Y_U$ and $Y_{\text{comp}}-Y_{U}$ is -0.45. }

\end{table}

For another comparison with the trivariate adaptation estimates reported in the text, we also consider combining  
$Y_{R1}$ and $Y_{R2}$ first via optimally weighted GMM, which is a composite of the two $Y_{\text{comp}}$.  We then adapt between $Y_U$ and $Y_{\text{comp}}$. The bias space is now also a composite of the two-dimensional bias space $\mathcal{C}_{(B_1,B_2)}$, and we consider an oracle  where the set $\mathcal{B}$ of bounds $B$ on the bias consists of the two elements $0$ and $\infty$.

Table~\ref{tab:bickel-nsw-composite} shows that composite adaptation produces estimates very similar to the multivariate adaptive estimate.  The adaptation regret relative to an oracle who knows a bound on the bias of composite is also small.  However, for a fair comparison with multivariate adaptation, one should compare its efficiency loss relative to the multivariate oracle with minimax risk specified in \eqref{eq:bickel p=00003D2}. This notion of worst case regret is substantially higher at 25\% because bivariate adaptation against the GMM composite cannot leverage the nested structure of the multivariate parameter space $\mathcal{B}$. 

\bibliographystyleOnline{chicago}
\bibliographyOnline{references_low_dim_adapt,references_empirical,additional_references}
\end{document}